\newcommand{\indep}{\perp \!\!\! \perp}
\newcommand{\VC}{\mathrm{VC}}
\newcommand{\DNN}{\mathrm{DNN}}
\newcommand{\trunc}{\mathrm{trunc}}
\newcommand{\pre}{\mathrm{pre}}
\def\r#1{\textcolor{red}{\bf #1}}
\def\b#1{\textcolor{blue}{\bf #1}}
\definecolor{NavyBlue}{rgb}{0.4, 0.6, 0.8}
\begin{document}

%%%%%%%%%%%%%%%%%%%%%%%%%%%%%%%%%%%%%%%%%%%%%%%%%%%%%%%%%%%%%%%%%%%%%%%%%%%%%%

\title{\bf UTOPIA: Universally Trainable Optimal Prediction Intervals Aggregation}
\author{Jianqing Fan$^\dagger$\thanks{
    This work is partially supported by ONR grant N00014-19-1-2120 and the NSF grants DMS-2052926, DMS-2053832, and DMS-2210833}, Jiawei Ge$^\dagger$ and Debarghya Mukherjee$^\ddagger$\hspace{.2cm}\\\\
    $^\dagger$Department of Operations Research and Financial Engineering, \\ Princeton University\\
    $^\ddagger$Department of Mathematics and Statistics, \\ 
    Boston University}
\date{}

\maketitle

\begin{abstract}
Uncertainty quantification in prediction presents a compelling challenge with vast applications across various domains, including biomedical science, economics, and weather forecasting. 
There exists a wide array of methods for constructing prediction intervals, such as quantile regression and conformal prediction. 
However, practitioners often face the challenge of selecting the most suitable method for a specific real-world data problem.
In response to this dilemma, we introduce a novel and universally applicable strategy called Universally Trainable Optimal Predictive Intervals Aggregation (UTOPIA). 
This technique excels in \emph{efficiently aggregating} multiple prediction intervals while maintaining a small average width of the prediction band and ensuring coverage. 
UTOPIA is grounded in linear or convex programming, making it straightforward to train and implement.  In the specific case where the prediction methods are elementary basis functions, as in kernel and spline bases, our method becomes the construction of a prediction band.
Our proposed methodologies are supported by theoretical guarantees on the coverage probability and the average width of the aggregated prediction interval, which are detailed in this paper. 
The practicality and effectiveness of UTOPIA are further validated through its application to synthetic data and two real-world datasets in finance and macroeconomics.
\end{abstract}

\noindent%
{\it Keywords:} optimal aggregation, coverage probability, nonparametric prediction, neural network, empirical processes.
% \vfill

%\section{Introduction}
\section{Introduction}
Uncertainty quantification plays a crucial role in a wide range of scenarios where decision-makers require accurate point prediction and the uncertainty quantification associated with it. 
For instance, in finance, a stock price prediction interval can help investors make informed decisions and manage market risks, while in health care, a disease progression prediction interval is important for making treatment decisions and improving patient outcomes. 
Furthermore,  reliable prediction intervals with provable coverage guarantees that quantify uncertainty in machine learning-based predictive models are now in high demand. 
In addition to providing a high-level coverage guarantee, a prediction interval with a small width is also important because it conveys more precise information about the uncertainties in prediction, enabling better decision-making.

Several attempts have been made to construct prediction intervals owing to the importance of uncertainty quantification, e.g., quantile regression \citep{koenker1978regression, koenker2001quantile, belloni2019conditional}, conformal prediction \cite{lei2014distribution, barber2022conformal}, universal prediction band \citep{liang2021universal}, predictive inference \citep{duchi2024predictive}, prediction interval for regression and classification \cite{chen2021learning} etc., each with its own advantages. 
It is often challenging for a practitioner to select the most appropriate prediction interval from the various available options.
This motivates the following question: 
\begin{center}
\textit{Can we efficiently aggregate various methods to create a prediction interval that ensures both coverage and a small width?}    
\end{center}
To address this, we propose a method, UTOPIA, for aggregating prediction intervals via a simple linear or convex programming problem that directly aims to minimize the expected width of the aggregated prediction interval, while maintaining the coverage guarantee. 
More specifically, given $K$ prediction intervals $\{(\ell_i(X), u_i(X))\}_{1 \le i \le K}$, our goal is to find weights $\{\alpha_i,\beta_i\}^K_{i=1}$ to form an effective prediction interval $(\sum_i \alpha_i \ell_i(X), \sum_i \beta_i u_i(X))$ with small expected width and sufficient coverage.
% The individual prediction intervals $(\ell_i(X), u_i(X))$ can be obtained by any available methods, such as linear quantile regression, non-linear quantile regression via a deep neural network, conformal/split-conformal method, conformal quantile regression etc. 
These individual intervals can be derived from any method, such as linear or non-linear quantile regression, conformal methods, or others. 
Empirical experiments show that our aggregated approach often outperforms individual methods.
% We draw motivation from abundant literature of model aggregation (e.g., \cite{fakoor2021flexible, kim2021deep, lecue2014optimal, rigollet2007linear, tsybakov2004optimal} and references therein). 
% The fundamental concept behind model aggregation is to identify appropriate weights, ensuring that the weighted combination of multiple models performs at least comparably to the best model in the collection. 
% While the majority of literature on model aggregation emphasizes the combination of various estimators for the conditional mean function, our comprehension of prediction interval aggregation remains relatively unclear. 

Our approach draws inspiration from the extensive literature on model aggregation, whose theoretical foundations were established in seminal works such as \cite{nemirovski2000topics, juditsky2000functional}, and have been further developed in subsequent studies, including \cite{yang2003regression, yang2004aggregating, tsybakov2004optimal,rigollet2007linear}.
The core principle of model aggregation lies in identifying optimal weights so that a weighted combination of multiple models performs on par with, or better than, the best individual model in the collection.  
Aggregation of various point estimates has gained significant attraction in the regression problem, as evidenced in studies like \cite{rigollet2012kullback,dalalyan2012sparse,dalalyan2012sharp,lecue2014optimal}, among others.
Similarly, the aggregation of distributional estimates has also captured the attention of the forecasting community, as seen in works such as \cite{raftery2005using,kapetanios2015generalised,rasp2018neural}.
% Aggregation of various estimates of the conditional mean function (i.e. $\bbE[Y \mid X]$) has gained significant attraction \cite{nemirovski2000topics,rigollet2007linear,lecue2014optimal}
% interested readers may consult \cite{nemirovski2000topics} or \cite{tsybakov2004optimal}. 
% Briefly speaking, given $M$ estimates $f_1, \dots, f_M$ of $\bbE[Y \mid X]$, the model aggregation procedure seeks to find optimal weights $\lambda_1, \dots, \lambda_M$ such that $\sum_j \lambda_j f_j$ has minimum risk among the span of $f_i$'s, which, in turn, is better than any individual $f_j$ (as $f_j$ belongs to the span). 
% While most research in model aggregation focuses on combining various estimators for the conditional mean function, 
Despite these advancements, the domain of prediction interval aggregation remains relatively unexplored and not as well understood.
There has been a recent surge in research focused on aggregating quantile estimates. \cite{nowotarski2015computing} introduced a method known as Quantile Regression Averaging (QRA), which employs quantile linear regression on the outputs from individual base models.
This concept has been further expanded upon in several ways.
For instance, \cite{maciejowska2016probabilistic} applied PCA to reduce the dimensionality of base models before implementing the QRA aggregator; \cite{fakoor2021flexible} explored the idea of using weights that vary not only across individual models but also across different quantile levels and feature values.
% Recently, there emerage a line of working aiming to aggregate quantile estimates. \cite{nowotarski2015computing} proposed a method called quantile regression averaging (QRA), which performs quantile linear regression on the output of individual base models.
% This idea has been further explored by applying PCA to reduce the dimensionality of the base model before fitting the QRA aggregator \citep{maciejowska2016probabilistic} and by considering weights that vary not only over individual models but also over quantile levels and feature values \citep{fakoor2021flexible}.
Another line of research (e.g., Cross-Conformal Predictors and Bootstrap Conformal Predictors by \cite{vovk2015cross}, and Aggregated Conformal Predictors by \cite{carlsson2014aggregated}) focuses on aggregating conformal predictors.  These methods aim to maximize data utilization through data splitting or bagging techniques, though they may not always produce the smallest prediction region.
% Notable examples include Cross-Conformal Predictors (CCP) and Bootstrap Conformal Predictors (BCP) by \cite{vovk2015cross}, and Aggregated Conformal Predictors (ACP) by \cite{carlsson2014aggregated}. \cite{linusson2017calibration} further explored their validity. These methods aim to maximize data utilization through data splitting or bagging techniques, though they may not always produce the smallest prediction region.
% Another line of research focuses on the aggregation of conformal predictors. 
% Notable examples include the Cross-Conformal Predictors (CCP) and Bootstrap Conformal Predictors (BCP) introduced by \cite{vovk2015cross}, as well as the Aggregated Conformal Predictors (ACP) proposed by \cite{carlsson2014aggregated}. The validity of these approaches has been further explored by \cite{linusson2017calibration}. 
% The central concept is to maximize data utilization through techniques like data splitting or bagging-based methods, though they may not always yield the smallest prediction region.
% The key idea behind these predictors is it combines the non-conformity scores obtained by either data splitting into K-folds or bagging based method where data are samples from the training set.
Motivated by conformal prediction, \cite{yang2021finite} put forward two selection
algorithms that aim to obtain the smallest conformal
prediction region given a family of point prediction machine learning algorithms. 
% Motivated by conformal prediction, \cite{yang2021finite} put forward a method of aggregation, which aims to learn the smallest conformal prediction region given a collection of prediction intervals. 
% (e.g., see \citep{carlsson2014aggregated, vovk2015cross, linusson2017calibration}) where the key idea is it combines the non-conformity scores obtained by either data splitting into K-folds or bagging based method where data are samples from the training set. 
% This paper serves as a stepping stone in addressing this gap and advancing our understanding in this area.  
% This paper serves as a stepping stone in addressing this gap and advancing our understanding in this area.
% Recent advancements in this direction, such as those detailed in \cite{fakoor2021flexible} and \cite{kim2021deep}, have seen authors combining multiple quantile function estimates. 
% % Our aggregation procedure is more flexible in the sense that we particularly do not need estimates of conditional quantile functions; one may have also conformal based estimators or RKHS based estimators as proposed in \cite{liang2021universal}. 
% \todo{Expand the above paragraph: (1) core principle of model aggregation; (2)cite literature on aggregating point estimates, but the understanding of prediction interval aggregation is still lack; (3) discuss recent work aggregating quantile regression; (4) discuss recent work aggregating conformal prediction.}

In comparison to the previous methods, our aggregation method stands out for its flexibility, as it doesn't rely solely on one type of estimate. We can incorporate a variety of estimators, including those based on quantile regression, conformal methods, or RKHS (as proposed in \cite{liang2021universal}).
Our technique is universally applicable to any set of prediction intervals, regardless of their specific structure. 
We furthermore demonstrate that our method can also be used to construct a prediction band from a general function class $\cF$ beyond model aggregation. 
One distinguished feature of the obtained prediction interval is that it has coverage guarantees even in cases of model misspecification in $\cF$. 
Additionally, if any of the available prediction intervals to aggregate are noisy and produce unusually large intervals, UTOPIA typically assigns them smaller weights. 
In other words, UTOPIA is robust against incorporating such noisy prediction intervals.
To summarize, in this work, we propose an approach that: 
\begin{enumerate}
    \item \ Efficiently aggregates various prediction intervals to achieve a small average width while ensuring a provable coverage guarantee.
    \item \ Is computationally efficient, involving only solving a linear or quadratic optimization problem. 
    \item \ Can be employed to construct a prediction interval from a general function class $\cF$ that possesses appealing theoretical properties.
    \item \ Allow in-depth statistical analysis of its coverage and optimality.
    \item \ Is demonstrated through various applications both in synthetic and real data. 
\end{enumerate}
As illustrated in our simulation studies, the aggregation allows ensemble learning from multiple methods and distributes computation into each individual method.  
UTOPIA emphasizes universality and trainability due to its properties 1 and 2, though it is also a testable method with the ability to construct prediction intervals from an elementary basis.  

The rest of the paper is organized as follows: 
Section \ref{sec:methods} outlines our general methodology for constructing prediction intervals. 
Section \ref{sec:theory} offers theoretical guarantees for our approach. 
Section \ref{sec:applications} discusses the theoretical aspects of several important special cases within our general method. 
Section \ref{simulation} includes extensive simulations to validate our theoretical results and to provide comparisons with existing methods. 
Additionally, Section \ref{sec:real_data_analysis} showcases the practical utility of our method through applications to two real datasets, the Fama-French data, and the FRED-MD data. For brevity, detailed proofs are included in the Appendix.
% Section \ref{sec:methods} presents a general methodology for constructing the prediction intervals. Section \ref{sec:theory}  provides theoretical guarantees of our method. Section \ref{sec:applications} elaborates on the theoretical properties of the various  important special cases of our general method. In Section \ref{simulation} we provide extensive simulations to support our theoretical findings and to compare and contrast with the existing methods. Last, but not least, we further demonstrate the effectiveness of our approach by applying it to two real datasets (Fama-French data and FRED-MD data) in Section \ref{sec:real_data_analysis}. Details of the proofs are deferred to the Appendix in the interest of space. 

%\input{Preliminaries}
%\section{Main results}
\section{UTOPIA Method}
\label{sec:methods}
We assume that we have a pair of random variables $(X, Y) \sim P$, where $X\in\cX$ and $Y\in\cY$. 
In this paper we typically assume $\cX = \reals^p$ and $\cY = \reals$. 
Our goal is to construct a prediction interval of the form $[l(X), u(X)]$ of $Y$ given $X$ based on $n$ i.i.d random observations $(X_1, Y_1), \dots, (X_n, Y_n)$ from $P$. 
Here we aim for \emph{expected guarantee} instead of \emph{conditional guarantee}, i.e. $\bbP(\ell(X) \leq Y \leq u(X)) \ge 1 -\alpha$ where the probability is taken over the joint distribution of $(X, Y)$. 
Constructing any such prediction interval is easy; given any positive function $f$, choose $c \equiv c(f)$ such that: 
$$
\bbP(-cf(X) \le Y \le cf(X)) \ge 1 - \alpha \,.
$$
Such a $c$ always exists as soon as the distribution of $(X, Y)$ is continuous since the probability on the left-hand side goes to $0$ as $c$ goes to $0$, goes to $1$ as $c$ goes to $\infty$. 
In other words, any positive function can be converted to a prediction interval with $(1-\alpha)$ coverage guarantee just by a simple re-scaling. 
However, this procedure is not efficient as some choices of $f$ may lead to unnecessarily large prediction intervals. 
Therefore, one should ideally construct a prediction interval with a small width and adequate coverage, which leads to the following optimization problem: 
\begin{equation}\label{alpha_quantile}
    \begin{aligned}
    \min_{l, u} \ & \bbE[u(X) - l(X)] \\
    \st \ & \bbP(Y \in [l(X), u(X)]) \ge 1 - \alpha \,.
    \end{aligned}
\end{equation}
The interval is constructed around $Y$ without any specific reference point, making the problem potentially complex due to the variability in the distribution of $Y$.
In the context of many machine learning models, particularly regression models, the primary objective is to minimize the mean squared error, which positions the conditional mean $m_0(x) = \bbE[Y \mid X = x]$ as a pivotal element in prediction.
To simplify the model and make it more tractable, we assume the distribution of $Y$ is symmetric around $m_0(X)$ almost surely on $X$.
% Let $m_0(x) = \bbE[Y \mid X = x]$ be the conditional mean function. For simplicity of presentation, assume the distribution of $Y$ is symmetric around $m_0(X)$ almost surely on $X$.
This symmetry introduces a specific reference point, namely $m_0(X)$, around which $Y$ is distributed, and the variance of $Y$ around $m_0(X)$ is effectively captured by the squared deviation $(Y-m_0(X))^2$.
The task then is to find a function $f$ that bounds this deviation with a certain probability $1-\alpha$ while being minimal in expected value. 
Mathematically speaking, this can be expressed as follows: For a given function class $\cF$ and the centering function $m_0$:
% the conditional quantile of $Y$ given $X$ at level $\alpha/2$ and $1 - \alpha/2$ is of the form $m_0(x) \pm s(x)$ for some function $s$. Hence under this symmetry assumption, \eqref{alpha_quantile} simplifies to: 
\begin{equation}\label{population_opt:m_known}
    \begin{aligned}
        \min_{f \in \cF} & \ \ \ \ \bbE[f(X)] \\
        \st & \  \ \ \  \bbP\left(f(X)\geq(Y - m_0(X))^2\right)\geq 1-\alpha.
    \end{aligned}
\end{equation}
% where $f$ plays the role of $s^2$. 
The key advantage of \eqref{population_opt:m_known} over \eqref{alpha_quantile} is that we now need to optimize one function $f$ instead of $(\ell, u)$ if $m_0$ is known. 
When $m_0$ is unknown we can estimate it using any standard technique (parametric or non-parametric) of conditional mean estimation. 
However, as will be elaborated later, our techniques can be easily generalized to the situation when the distribution of $Y$ is not symmetric around $m_0(X)$, i.e. when we optimize over both $(\ell, u)$. 

The solution of \eqref{population_opt:m_known} is the \emph{optimal} prediction interval in the sense that it has the smallest average width among all the prediction intervals with $(1 - \alpha)$ average guarantee. However, the above optimization problem is hard to solve directly; for example when we have $n$ samples from the distribution of $(X, Y)$, then it is tempting to replace the objective $\bbE[\cdot]$ by sample average over $X$ and the constraint $\bbP(\cdot)$ by empirical proportion, i.e. solve the following optimization problem: for given function classes $\cF$ for the shape of width and $\cM$ for the centering functions, 
\begin{equation}
\label{eq:sample_opt_indicator}
    \begin{aligned}
        \min_{f \in \cF, m \in \cM} & \ \ \ \ \frac1n \sum_i f(x_i) \\
        \st & \  \ \ \ \frac1n \sum_i \mathds{1}_{\left\{f(x_i) \ge (y_i - m(x_i))^2\right\}} \geq 1-\alpha.
    \end{aligned}
\end{equation}
While this approach may have theoretical appeal, its computational complexity arises due to the non-convex nature of the constraint. More specifically the constraint \newline $\bbP\left(f(X)\geq(Y - m_0(X))^2\right)$ is a non-convex functional on $\cF$, yielding a computationally intractable problem. To alleviate the issue, we propose the following two-step approach, inspired by \cite{liang2021universal}: 
\\\\
{\bf Step 1 [Initial Estimation]: }Given a collection of observation $\{(x_i, y_i)\}_{1 \le i \le n}$, obtain an initial estimate $\hat f_{\pre}$ and $\hat m$ by solving the following: 
\begin{equation}
\label{sample_opt:general_form}
    \begin{aligned}
        \min_{ m\in \cM, f \in \cF} & \ \ \ \ \frac1n \sum^n_{i=1}f(x_i) \\
        \st & \  \ \ \  f(x_i)\geq(y_i - m(x_i))^2 \ \ \ \forall \ 1 \le i \le n \,.
    \end{aligned}
\end{equation}

\noindent{\bf Step 2 [Re-scaling]: } 
Upon obtaining $\hat f_\pre$, re-scale it appropriately so that the resulting function has desired coverage: given any $\alpha > 0$, find $\hat \lambda(\alpha)$ such that: 
$$
\hat \lambda(\alpha) = \inf_{\lambda}\left\{\lambda: \frac{1}{n} \sum_{i=1}^n\mathbf{1}_{\{(y_i-\hat m(x_i))^2>\lambda \hat f_\pre(x_i)\}}\leq \alpha\right\} \,.
$$
and set the prediction interval as: 
$$
\widehat{\mathrm{PI}}_{(1-\alpha)}(x) = \left[\hat m(x)-\sqrt{\hat \lambda(\alpha)\hat f_\pre(x)}, \hat m(x)-\sqrt{\hat \lambda(\alpha)\hat f_\pre(x)}\right] \,.
$$
Algorithm~\ref{algo:main_algo} furnishes additional details when the function class is a non-VC class (to be defined later).

\noindent
\begin{remark}
   In Step 1, we require $f(x)$ to be greater than $(y - m(x))^2$ for all samples, but this can be affected by outliers when $Y - m_0(X)$ has heavy tails. To address this, we suggest truncating (Winsorizing) observations by setting $\tilde{y} = \text{median}(-\tau, y, \tau)$ and solving the optimization problem using $\tilde{y_i}$ instead of $y_i$. This approach might result in a slightly wider interval, but as our analysis will show, it will not affect the coverage. The minimal width guarantee can be maintained by selecting a truncation level $\tau_n$ that increases slowly with $n$, based on the tail behavior of the conditional distribution.
\end{remark}

In the first step, we consider a computationally tractable sample-based optimization problem, where the constraints ensure that the learned function $\hat f_{\pre}$ and $\hat m$ is a feasible solution of the intractable problem \eqref{eq:sample_opt_indicator}. 
% ensures that $y_i$ lies within $\hat m(x_i) \pm \hat f_\pre (x_i)$ for all observed samples. 
Furthermore, Step 1 is computationally efficient because when $\cF$ is a convex collection of functions (e.g., a set of all functions that are $k$ times differentiable), \eqref{sample_opt:general_form} is a convex optimization problem and can be solved using any standard solver.  
% Although we are minimizing the bandwidth explicitly in \eqref{sample_opt:general_form}, the resulting prediction band will be conservative (because it covers all the observed samples), it is computationally very efficient. 
For an example, consider the prediction interval aggregation problem where $\cF = \{\sum_{j = 1}^K \beta_j f_j\}$ for some given $\{f_1, \dots, f_K\}$ and also assume $m_0$ is known. Then \eqref{sample_opt:general_form} simplifies to: 
\begin{equation}
\label{eq:aggregation}
    \begin{aligned}
        \min_{\beta_1, \dots, \beta_K} & \ \ \ \ \frac1n \sum^n_{i=1}\sum_{j = 1}^K\beta_j f_j(x_i) \\
        \st & \  \ \ \  \sum_{j = 1}^K \beta_j f_j(x_i)\geq(y_i - m_0(x_i))^2 \ \ \ \forall \ 1 \le i \le n \,.
    \end{aligned}
\end{equation}
It is immediate that the above problem is a linear optimization problem over the unknown parameters $(\beta_1, \dots, \beta_K)$ and easy to implement. We elaborate on this in Section \ref{sec:aggregation}. 
Although we are minimizing the bandwidth explicitly in \eqref{sample_opt:general_form}, the resulting prediction band will typically be conservative (because it covers all the observed samples).
% However, this is very conservative and consequently have larger width than necessary. 
Therefore, in the second step, we shrink $\hat f_\pre$ with an additional factor $\hat \lambda(\alpha)$ while maintaining the desired empirical coverage of $1-\alpha$. The second step is also very easy to implement as $\hat \lambda(\alpha)$ is nothing but $(1-\alpha)$ quantile of $(y_i - \hat m(x_i))/\hat f_\pre(x_i)$. Our methodology is cogently presented in Algorithm \ref{algo:main_algo}. To summarize 
\begin{algorithm}[!t]
   \caption{UTOPIA: Universally Trainable Optimal  Prediction Intervals \\ (Aggregation with coverage probability $1-\alpha$)}
\label{algo:main_algo}
\begin{algorithmic}[1]
    \REQUIRE{$\alpha$, sample $\cD = \{(x_i, y_i)\}_{i = 1}^{n}$, function classes $\cM, \cF$, and $\delta$, where $\delta = 0$ when $(\cF, \cM)$ are VC class of functions and some small number for more complicated function classes.}
    \STATE Use the data to solve \eqref{sample_opt:general_form} and obtain $\hat m, \hat f_{\pre}$. 
    \STATE Set $\hat \lambda({\alpha})$ as the $(1-\alpha)^{th}$ empirical quantile of $(y_i -\hat m(x_i))^2/(\hat f_{\pre}(x_i)+\delta) $, i.e. 
    \begin{align*}
    &\hat \lambda({\alpha}):=\inf_{\lambda}\left\{\lambda: \frac{1}{n} \sum_{i=1}^n\mathbf{1}_{\{(y_i-\hat m(x_i))^2>\lambda(\hat f_{\pre}(x_i)+\delta)\}}\leq \alpha\right\} \,.
    % \\
    % \text{where}\ \ &\cI:=\left\{1\leq i\leq m: (y'_i-\hat m(x'_i))^2\leq \hat f(x'_i)+\delta_n\right\},\ m':=|\cI|.
    \end{align*} 
    \STATE Return $(\hat \lambda({\alpha}), \hat m, \hat f_{\pre})$ with the estimated prediction interval: 
    $$
    \widehat{\mathrm{PI}}_{(1-\alpha)}(x,\delta) = \left[\hat m(x)-\sqrt{\hat \lambda({\alpha})(\hat f_{\pre}(x)+\delta)}, \ \hat m(x)+\sqrt{\hat \lambda({\alpha}) (\hat f_{\pre}(x)+\delta)}\right] \,.
    $$
\end{algorithmic}
\end{algorithm}
our method has three significant advantages applicable to many real-life scenarios: 
\begin{enumerate}
    \item \textit{Computational efficiency}: Step 1 is a convex optimization problem: it simplifies to linear or quadratic programming (when mean function $m_0$ is also aggregated by linear combinations) problem for prediction interval aggregation, and Step 2 merely requires finding order statistics of $n$ observations. Therefore, our method is computationally efficient. This is why we emphasize the trainability in UTOPIA, though our method is testable in theory and applications. 
    
    \item \textit{Coverage guarantee}: We have theoretical guarantees for the coverage of our estimated prediction interval (See Section \ref{sec:coverage_guarantee}), \textit{regardless of the correct or incorrect specifications of the model}.
    % \item Our proposal is universally applicable, independent of model mis-specification.  It is easy to implement, i.e. once we obtain a reliable initial prediction interval, all we need to do is to scale it down properly. Furthermore, we have theoretical guarantees for the coverage of our estimated prediction interval (See Section \ref{sec:coverage_guarantee}), regardless of the correct or incorrect specifications of the model.
    % \item \textit{Average width guarantee}: Last but not the least, our method provides a prediction band with (asymptotically) minimal average width among all the prediction bands with at-least $(1 - \alpha)$ coverage probability (see Theorem \ref{thm:}).  
    \item \textit{Average width guarantee}: Last but not least, our method directly minimizes the average width (see \eqref{sample_opt:general_form}) and consequently yields a prediction interval with adequate coverage and small average width as evident from our theories and real/synthetic experiments. 
    % \item Our procedure yields the $1-\alpha$ level quantile function for the location-scale family, i.e. when $\xi$ independent of $X$. This follows from the observation that for the location-scale family: 
    % $$
    % Q_{1-\alpha}((Y - m_0(x))^2 \mid X = x) = \lambda^\star(\alpha) f_0(x)  
    % $$
    % where $\lambda^\star(\alpha) = F_{\xi^2}^{-1}(1-\alpha)/F_{\xi^2}^{-1}(1)$. 
    % \jiawei{Do we still need this point? If so, we need to define what is the location-scale family and what is $f_0$ here.}
\end{enumerate}

In the Step 1 of our approach, we estimate $(f, m)$ simultaneously. However, in many statistical problems, either we have some available estimator for the conditional mean function or estimating it is significantly easier (say when the conditional mean function is linear). In that case, we can further simplify our approach by the following two-step procedure: 
% Optimizing simultaneously over $\cF$ and $\cM$ can still be computationally challenging in certain scenarios. To simplify this, a two-step approach can be adopted: 
i) First estimate $m_0$ from $\cM$ via parametric or non-parametric regression, and then 
    ii) plugin $\hat m$ in \eqref{sample_opt:general_form} to obtain $\hat f_\pre$. We summarize this procedure in the Algorithm \ref{algo:main_algo_twostep}. This method has a computational advantage as it splits optimization with two unknown functions into two separate functions learning, and often, estimation of computation mean function is easy, e.g., using kernel-based methods, spline-based approaches, or deep neural networks. 

% When $\cM$ is not a singleton, then we need to estimate both $f_0$ and $m_0$ simultaneously and consequently, it may be computationally challenging. To alleviate this, one may follow a two-step procedure, which bears the essence of split-conformal prediction band \citep{lei2013distribution}: i) First estimate $m_0$ from $\cM$ via parametric or non-parametric regression and then ii) plugin $\hat m$ in \eqref{sample_opt:general_form} to estimate $\hat f$. We summarize this procedure in the Algorithm \ref{algo:main_algo_twostep}. Algorithm \ref{algo:main_algo_twostep} primarily differs from Algorithm \ref{algo:main_algo} in Step 2, where we first estimate $m_0$ and then $f_0$ separately. This method has a computational advantage as regression is easy to implement (e.g. using the kernel-based methods, spline-based approaches, or via deep neural networks). 
\begin{algorithm}[!t]
   \caption{Two-step UTOPIA}
    \label{algo:main_algo_twostep}
\begin{algorithmic}[1]
    % \SetKwInOut{Input}{Input}
    % \SetKwInOut{Output}{Output}
    % \underline{function Euclid} $(a,b)$\;
    \REQUIRE{$\alpha$, sample $\cD = \{(x_i, y_i)\}_{i = 1}^{n}$, function classes $\cM, \cF$, and $\delta$, where $\delta = 0$ when $(\cF, \cM)$ are VC class of functions and some small number for more complicated function classes.}
    %\Ensure{$\hat m, \hat f, \hat \lambda(\alpha)$}
    % Split the data into two parts, $\cD_1$ and $\cD_2$ with $|\cD_1| = n$ and $|\cD_2| = m$. \\
    \STATE Use the data to first estimate $m_0$ via non-parametric regression, i.e. 
    $$
    \hat m = \argmin_{m \in \cM} \left[\sum_{i \in \cD} (y_i - m(x_i))^2 + \rho_n(m) \right] \,,
    $$
    where $\rho_n(\cdot)$ is a penalty function. 
    \STATE Use the data to solve \eqref{sample_opt:general_form} with $\hat m$ fixed and obtain $\hat f_{\pre}$. 
    \STATE Compute the $(1-\alpha)^{th}$ empirical quantile of $(y_i -\hat m(x_i))^2/(\hat f_{\pre}(x_i)+\delta) $:
    \begin{align*}
&\hat \lambda({\alpha}):=\inf_{\lambda}\left\{\lambda: \frac{1}{n} \sum_{i=1}^n\mathbf{1}_{\{(y_i-\hat m(x_i))^2>\lambda(\hat f_{\pre}(x_i)+\delta)\}}\leq \alpha\right\} \, .
    \end{align*}
    \STATE \textbf{Return} $(\hat \lambda({\alpha}), \hat m, \hat f_{\pre})$ with the estimated prediction interval: 
    $$
    \widehat{\mathrm{PI}}_{(1-\alpha)}(x,\delta) = \left[\hat m(x)-\sqrt{\hat \lambda({\alpha}) (\hat f_{\pre}(x)+\delta)}, \ \hat m(x)+\sqrt{\hat \lambda({\alpha})(\hat f_{\pre}(x)+\delta)}\right] \,.
    $$ 
\end{algorithmic}
\end{algorithm}

\begin{remark}[Extension to non-symmetric distribution]
\label{rem:asymmetry}
    Although we have assumed the distribution of $Y \mid X$ to be symmetric around its mean $m_0$, our method can also be extended to asymmetric distribution. In that case, if we are only interested in estimating a prediction interval of $Y$ given $X$, not its mean, then we can modify \eqref{sample_opt:general_form} as follows: 
    \begin{equation}
        \begin{aligned}
            \min_{f_1 \in \cF_1, f_2 \in \cF_2} & \ \ \frac1n \sum_i(f_2(x_i) - f_1(x_i)) \\
            \st & \ \  f_1(x_i) \le y_i \le f_2(x_i), \ \ \forall \ 1 \le i \le n \,.
        \end{aligned}
    \end{equation}
Our theories can be extended easily to incorporate this situation. 
%If $m_0$ is unknown, then we can resort to a one-step or two-step method by replacing $Y_i$ by $Y_i - m(X_i)$ and optimizing over $(m, f_1, f_2)$.
\end{remark}

\begin{remark}[Extend the prediction interval]
If the underlying function class is complex (e.g. non-VC class), then often sample level guarantee does not immediately translate to the population level guarantee. In these cases, we recommend slightly extending the prediction interval
for some small $\delta > 0$ (as shown in Algorithm \ref{algo:main_algo} and Algorithm \ref{algo:main_algo_twostep}) that depends on $n$ and typically decreases to $0$ as $n \to \infty$. For more details, see Remark \ref{rmk:cov}. 
\end{remark}

\vspace{-.2in}
\subsection{Exmaple 1: Aggregation of prediction bands}
\label{sec:aggregation}
In this example, we elaborate on the prediction band aggregation problem as demonstrated in \eqref{eq:aggregation}. 
One may use several different techniques for constructing prediction intervals, i.e. various parametric or non-parametric methods (e.g., quantile regression, conformal prediction). 
%Furthermore, for a location-scale family, it is sufficient to estimate $(m_0, v_0)$. 
Additionally, modern machine learning techniques like neural networks can be employed to create prediction intervals by utilizing their ability to capture complex patterns in data. 
% A special case of \eqref{sample_opt:general_form} is model aggregation which aims to aggregate several candidates for prediction interval to guarantee adequate coverage as well as small width. 
Suppose we have $\cM_0 = \{m_1, \dots, m_L\}$ and $\cF_0 = \{f_1, \dots, f_K\}$, i.e. we have some candidate estimates for estimating the mean function and the prediction band respectively. In this subsection, we aim to construct a suitable linear combination of the elements of $\cM$ and $\cF$ so that the resulting prediction interval has adequate coverage and small average width. 
Then \eqref{sample_opt:general_form} reduces to the following quadratic optimization problem: 
\begin{equation}\label{sample_opt:aggregation}
    \begin{aligned}
        \min_{
        \substack{\alpha_1, \dots, \alpha_K \\ \beta_1, \dots, \beta_L}
        } & \ \ \ \ \frac1n \sum^n_{i=1}  \Bigl \{\sum_{j=1}^K\alpha_jf_j(x_i) \Bigr \} \\
        \st & \  \ \ \ \sum_{j=1}^K\alpha_j f_j(x_i) \geq \Bigl (y_i - \Bigl \{\sum_{j=1}^L \beta_j m_j(x_i) \Bigr \} \Bigr )^2 \ \ \ \forall \ 1 \le i \le n \,, \\
         & \ \ \ \ \alpha_j \ge 0, \ \ \ \forall \ 1 \le j \le K \,.
    \end{aligned}
\end{equation}
The first curly bracket indicates the ``squared width'' at point $X_i$ and the optimization object is the average ``squared width''. The second curly bracket adjusts the center of the predictive intervals to make the width as small as possible. 
% In other words, \eqref{sample_opt:aggregation} aims to find a suitable linear combination of the available estimators that covers the observed responses as well as has the minimum width of the prediction interval. 
In matrix notation, if we define $\bM_n \in \reals^{n \times L}$ and $\bF_n \in \reals^{n \times K}$ with $\bM_{n, ij} = m_j(x_i)$ and $\bF_{n, ij} = f_j(x_i)$, Then we can rewrite \eqref{sample_opt:aggregation} as
\begin{equation*}
    \begin{aligned}
        \min_{\balpha \in \reals^K_+, \bbeta \in \reals^L} & \ \ \ \mathbf{1}^{\top} \bF_n \balpha \\
        \st & \  \ \ \  e_i^\top \left(\bY - \bM_n \bbeta\right)\left(\bY - \bM_n \bbeta\right)^\top e_i - e_i^\top \bF_n \balpha  \le 0\ \ \ \forall \ 1 \le i \le n \,.
    \end{aligned}
\end{equation*}
% which is equivalent to: 
% \begin{equation*}
%     \begin{aligned}
%     \label{eq:opt_model_ag}
%         \min_{\balpha \in \reals^K_+, \bbeta \in \reals^L} & \ \ \ \mathbf{1}^{\top} \bF_n \balpha \\
%         \st & \  \ \ \  \bbeta^\top \bM_n^\top \bM_n \bbeta - 2(\bY^\top e_i)e_i^\top \bM_n \bbeta + e_i^\top \bY\bY^\top e_i - e_i^\top \bF_n \balpha  \le 0\ \ \ \forall \ 1 \le i \le n \,.
%     \end{aligned}
% \end{equation*}
As the constraint function is quadratic in $\bbeta$ and linear in $\balpha$, it is jointly convex in $(\balpha, \bbeta)$ and consequently can be solved efficiently via standard convex optimization techniques. Let $\widehat{\balpha}$ and $\widehat{\bbeta}$ be the solution to the above optimization problem. Then our initial estimator $\hat f_\pre = \sum_j \hat \alpha_j f_j$ and $\hat m = \sum_j \hat \beta_j m_j$. If $\hat \lambda(\alpha)$ is obtained via Step 2, i.e. $(1-\alpha)$ quantile of $(y_i - \hat m(x_i))^2/\hat f_\pre(x_i)$, then our final prediction interval is:
\begin{align}
\label{PI_aggregation}
&\widehat{\mathrm{PI}}_{(1 - \alpha)}(x) \notag\\
& = \left[\sum_{j = 1}^L \hat \beta_j m_j(x) - \sqrt{\hat \lambda(\alpha)\sum_{j =1}^K \hat \alpha_j f_j(x)},\ \ \sum_{j = 1}^L \hat \beta_j m_j(x) + \sqrt{\hat \lambda(\alpha)\sum_{j = 1}^K \hat \alpha_j f_j(x)}\right] \,.
\end{align}
% Then the proposed prediction interval is
% \begin{align}\label{PI_aggregation}
% \widehat{\mathrm{PI}}_{100\%}(x) & = \left[\sum_{j = 1}^L \hat \beta_j m_j(x) - \sqrt{\sum_{j =1}^K \hat \alpha_j f_j(x)}, \ \ \sum_{j = 1}^L \hat \beta_j m_j(x) + \sqrt{\sum_{j = 1}^K \hat \alpha_j f_j(x)}\right] \,.
% \end{align}
% As the linear span of finitely many functions is a VC class, we do not need to extend the interval by $\delta$ here, namely taking $\delta = 0$. For $100(1-\alpha)\%$ coverage with some $\alpha > 0$, we can scale down $\hat f_0$ as per Step 2 of Algorithm \ref{algo:main_algo}. 
The theoretical guarantees (both in terms of the average width and the coverage guarantee) of the aggregated prediction interval elaborated above are provided in Section \ref{sec:model_aggregation}.

\subsection{Example 2: Prediction bands via Kernel Basis}
\label{sec:rkhs}
In this subsection, we estimate the mean function and the prediction band via kernel basis assuming they lie on a reproducing kernel Hilbert space. 
For simplicity, we assume $m_0$ is known and aim to approximate optimal $f$ (that yields minimum width with adequate coverage) by some non-negative function over an RKHS $\cH$ with kernel $K$ and feature map $\Phi$, where typically we take $\Phi(x) = K(x, \cdot)$. As $f$ is non-negative, it is tempted to approximate $f$ by $\cF = \{\sum_j \alpha_j K(x_j, \cdot): \alpha_j \ge 0\}$. However, as elaborated in \cite{marteau2020non}, they may not be universal approximators in the sense that they may not approximate all non-negative functions and the authors suggested using the quadratic form $\{\langle \Phi(x), \cA(\Phi)(x)\rangle_\cH\}$ where $\cA$ is bounded linear Hermitian operator from $\cH \mapsto \cH$ and $\langle \cdot, \cdot \rangle_\cH$ is the inner product associate with the Hilbert space $\cH$. 
\cite{liang2021universal} proposed the following method to estimate the prediction band: 
\begin{equation}
\begin{aligned}
    \label{eq:liang}
    \min_{\cA: \cH\rightarrow\cH} \ \ & \|\cA\|_{\star} \\
    \st \ \ & \langle \Phi(x_i), \cA(\Phi)(x_i)\rangle_\cH \ge (y_i - m_0(x_i))^2 \\
    &\cA \succeq 0\,,
\end{aligned}
\end{equation}
where $\| \cdot \|_{\star}$ denotes the nuclear norm of an operator. The aim of this objective is to minimize the rank of $\cA$. However as our objective is to minimize the expected empirical width, we start with a simpler function class $\cF = \{f(x) = \langle \Phi(x), \cA(\Phi)(x)\rangle_\cH, \|\cA\|_{\star} \le r, \cA \succeq 0 \}$ for some given rank $r$ and consequently \eqref{sample_opt:general_form} boils down to the following: 
\begin{equation}
\begin{aligned}
    \label{eq:liang_modified}
    \min_{\cA: \cH\rightarrow\cH} \ \ &  \frac1n \sum_i \langle \Phi(x_i), \cA(\Phi)(x_i)\rangle_\cH  \\
    \st \ \ & \langle \Phi(x_i), \cA(\Phi)(x_i)\rangle_\cH \ge (y_i - m_0(x_i))^2  \\
    & \|\cA\|_\star \le r , \cA \succeq 0\,.
\end{aligned}
\end{equation}

By representer theorem in \cite{marteau2020non}, the solution to the above infinite-dimensional problem can be written as $\cA = \sum^n_{i,j=1}B_{ij}\Phi(x_i)\otimes\Phi(x_j)$ for some positive semi-definite matrix $B$, which implies an equivalent finite-dimensional problem:
% Although this problem is infinite-dimensional, one may use some kernel tricks to get the following finite-dimensional representation \jiawei{Do we need to prove this in the appendix?} \DM{How immediate is this from Liang's derivation? If it is too immediate then may be we can just comment, otherwise we can put a proof in the Appendix}:
\begin{equation}
\begin{aligned}
    \label{eq:finite_SDP}
    \min_{B\in\bbS^{n\times n}} \ \ & \tr(KBK) \\
    \st \ \ & \left \langle K_i, BK_i\right \rangle  \ge (y_i - m_0(x_i))^2 \ \forall \ 1 \le i \le n \\
    & \tr(KB) \le r, \ \ B \succeq 0 \,.
\end{aligned}
\end{equation}
Here $K\in\bbS^{n\times n}$ is the corresponding kernel matrix with $K_{ij}=K(x_i,x_j)$ and $K_i\in\bbR^n$ represents the $i$-th column of the matrix $K$. Notice that \eqref{eq:finite_SDP} is semi-definite programming as we have the cone constraint ($B \succeq 0$) and both the objectives and constraints are linear in $B$. Therefore, it can be easily implemented via a standard convex programming package (e.g. cvx in MATLAB). Let $\hat B$ be the solution to the above optimization problem. The our initial estimator $\hat f_\pre(x) = \langle K_x, \hat B K_x \rangle$ where $K_x\in\bbR^n$ with $K_{x,i}=K(x_i,x)$. Next, we obtain $\hat \lambda (\alpha)$ as the $(1-\alpha)^{th}$ quantile of $(y_i - m_0(x_i))^2/\hat f_\pre(x_i)$ and set our prediction interval as: 
\begin{align*}
\widehat{\mathrm{PI}}_{(1 - \alpha)}(x) & = \left[m_0(x) - \sqrt{\hat \lambda (\alpha) \langle K_x, \hat BK_x \rangle}, \ \ m_0(x) + \sqrt{\hat \lambda (\alpha) \langle K_x, \hat BK_x\rangle}\right] \,,
\end{align*}
% Then the estimated prediction interval is: 
% \begin{align*}
% \widehat{\mathrm{PI}}_{100\%}(x,\delta) & = \left[m_0(x) - \sqrt{\langle K_x, \hat BK_x \rangle+\delta}, \ \ m_0(x) + \sqrt{\langle K_x, \hat BK_x\rangle+\delta}\right] \,,
% \end{align*}
% where $K_x\in\bbR^n$ with $K_{x,i}=K(x_i,x)$. For $100(1-\alpha)\%$ coverage with some $\alpha > 0$, we can scale down $\langle K_x, \hat BK_x\rangle+\delta$ as per Step 2 of Algorithm \ref{algo:main_algo}. Furthermore, when $m_0$ is unknown, one can replace $m_0$ either by some given estimate $\hat m$ or simultaneously estimate it via \eqref{sample_opt:general_form} and a suitable function class $\cM$ (e.g., one may also take $\cM$ to be a RKHS). 
If we do not know $m_0$, we can use standard RKHS regression techniques to obtain an estimate $\hat m$ and replace $m_0$ by $\hat m$ in the above prediction interval. 
For more details and the theoretical guarantees on the average width and coverage of this prediction interval see Section \ref{RKHS}.

\section{Statistical Properties}
\label{sec:theory}

% In this section, we present our theoretical analysis of UTOPIA. 
% The properties of this estimated prediction interval rely on the complexity of the underlying function classes, especially a trade-off between the number of available samples and the complexity of the function classes. Before delving deep, first, we present preliminaries on the complexity of the underlying function classes that are going to be used throughout this section.

In this section, we delve into the theoretical underpinnings of UTOPIA, which hinge on the complexity of the underlying function classes. 
To set the stage for our in-depth analysis, we first introduce the foundational concepts related to the complexity of the function classes that will be referenced throughout this discussion.

% One of the key concepts used frequently throughout our analysis is the quantification of the complexity of a collection of functions, which we discuss in the following.
% \section{Methods of Aggregations and Constructions}
\subsection{Preliminaries}
% \paragraph{\bf Notations: }
% We use $m_0(x)$ and $v_0(x)$ to denote the conditional mean function and conditional variance function of $Y\mid X=x$, respectively. 
% Given any guess $m$ of $m_0$, we denote by $f_m(x)$ the supremum of $(Y-m(X))^2\mid X=x$, i.e. 
% $$
% f_m(x):=\inf_{t}\left\{t\,\big|\, \bbP\left((Y-m(X))^2\leq t\big|X=x\right)=1\right\}
% $$ 
% and $f_0(x):=f_{m_0}(x)$. 
We use $\cD$ to denote $n$ observed samples. We assume there exists $\cB_{\cM}>0$ and $\cB_{\cF}>0$ such that $\|m\|_{\infty}\leq B_{\cM}$ for all $m\in\cM$ and $\|f\|_{\infty}\leq B_{\cF}$ for all $f\in\cF$.

% \paragraph{\bf Complexity of function class: }
% \jiawei{Make this section shorter.}

There are various notions available in the literature to quantify the complexity such as VC-dimension \citep{vapnik2015uniform}, metric entropy \citep{kolmogorov1959varepsilon}, pseudo-dimension \citep{pollard1990empirical}, fat-shattering dimension \citep{kearns1994efficient} etc. To cover a wide array of scenarios, we here quantify the complexity of $\cF$ through its Rademacher complexity, defined as follows.
\begin{definition}[Rademacher complexity]\label{def:rc}
Let $\cF$ be a function class and $\{x_i\}^n_{i=1}$ be a set of samples drawn i.i.d. from a distribution $\cD$. The Rademacher complexity of $\cF$ is defined as
\begin{align}
\label{rc}
 \cR_n(\cF) = \bbE\left[\sup_{f \in \cF}\frac1n \sum^n_{i=1} \eps_i f(x_i)\right], 
 % \ \ \ \eps_1, \dots, \eps_n \overset{\mathrm{i.i.d}}{\sim} \mathrm{Rad}(1/2) \,.   
\end{align}
where $\{\eps_i\}^n_{i=1}$ are i.i.d. Rademacher random variables that equals to $\pm 1$ with probability $1/2$ each.
\end{definition}

Rademacher complexity provides insights into the function class's inherent complexity and its ability to generalize from the training data to unseen data.
This complexity can be bounded by leveraging the covering number, linking the abstract concept to the more tangible notion of covering the function space. 
The covering number, which quantifies the minimal number of balls of a given radius needed to cover the entire space, is defined as follows.
% Rademacher complexity of a function class $\cF$ describes how closely a function class can be spuriously correlated with independent Rademacher random variables. Typically, a bound on this complexity can be obtained via a covering number of $\cF$ via Dudley's chaining argument, which we state here for the convenience of the readers.
\begin{definition}[Covering number]
    \label{def:covering_number}
    Given a function class $\cF$ with some (semi)-metric $d$ on $\cF$, the $\eps$-covering number of $\cF$ is defined as: 
    $$
    \cN(\eps, d, \cF) = \inf\left\{m \ge 0: \exists f_1, \dots, f_m \in \cF \ \ \mathrm{ such \ that } \ \ \sup_{f \in \cF} \min_{1 \le j \le m} d(f_j, f) \le \eps\right\} \,.
    $$
The $\eps$-metric entropy of $\cF$ is defined as $\log{\cN(\eps, d, \cF)}$. 
\end{definition}
% In words, $\cN(\eps, d, \cF)$ is the minimum number of functions such that the union of $\eps$-balls with respect to metric $d$ around them covers $\cF$. 
% Clearly $\cN(\eps, d, \cF)$ is a non-increasing function of $\eps$ and 
% as $\eps \downarrow 0$, $\cN(\eps, d, \cF) \uparrow \infty$. 
Depending on the growth rate of $N(\eps, \cF, d)$ with respect to $\eps$, we have two broad categories: 
% It is intuitive that if given $\eps > 0$ and two functions classes $\cF_1, \cF_2$, if $\cN(\eps, d, \cF_1) > \cN(\eps, d, \cF_2)$ then $\cF_2$ is \emph{more complex} in the sense that it requires more balls to cover the entire space. Thus, a notion of complexity can be formulated depending on how fast $\cN(\eps, d, \cF)$ diverges as $\eps \downarrow 0$, which can broadly be divided into two categories: 
\begin{enumerate}
    \item {\bf Polynomial growth: }$\cN(\eps, d, \cF)$ grows polynomially as $\eps \downarrow 0$, i.e., for some $\gamma>0$:
    \begin{equation}
    \label{eq:poly_growth}
    \cN(\eps, d, \cF) \lesssim \left(\frac{1}{\eps}\right)^\gamma \,.
    \end{equation}

    \item {\bf Exponential growth: }$\cN(\eps, d, \cF)$ grows exponentially as $\eps \downarrow 0$, i.e., for some $\gamma>0$:
    \begin{equation}
    \label{eq:exp_growth}
    \log{\cN(\eps, d, \cF)} \lesssim \left(\frac{1}{\eps}\right)^\gamma \,.
    \end{equation}
\end{enumerate}
As it is evident a function class with polynomial growth is \emph{less complex} than a function class with exponential growth. 
% One special sub-class of functions with polynomial growth is Vapnik-Chervonenkis (VC) class of functions as defined below: 
% \begin{definition}[Vapnik-Cervonenkis(VC) class]
% Suppose $\cC$ is a collection of sets. The collection $\cC$ is said to shatter $\{x_1,\ldots, x_n\}$ if $|\{\{x_1,\ldots, x_n\}\cap C\mid C\in\cC\}|=2^{n}$. The VC dimension of $\cC$ denoted by $\VC(\cC)$ is the smallest $n$ for which no set of size $n$ is shattered by $\cC$. If $\VC(\cC)$ is finite, then $\cC$ is called VC class. A collection of measurable functions $\cF: \cX \mapsto \reals$ on a sample space $\cX$ is called a VC class of functions if the collection of all subgraphs of the functions in $\cF$ forms a VC class of sets in $\cX\times\bbR$. And the VC dimension of $\cF$ denoted by $V_\cF$ is defined as the VC dimension of the collection of all subgraphs.
% \end{definition}
Within the subset of functions characterized by polynomial growth, the Vapnik-Chervonenkis (VC) class stands out for its unique properties. The concept of shattering, central to the VC class, involves the capacity of a collection of sets $\cC$ to distinguish among all subsets of a given set of points ${x_1,\ldots, x_n}$. Specifically, $\cC$ shatters this set if for every combination of points, there exists a subset in $\cC$ that intersects with the set of points in exactly that combination, making $|\{\{x_1,\ldots, x_n\}\cap C\mid C\in\cC\}|=2^{n}$. The VC dimension, denoted by $\VC(\cC)$, represents the largest number of points that can be shattered by $\cC$.
Correspondingly, the VC dimension of a function class is defined as follows:

\begin{definition}[Vapnik-Chervonenkis(VC) class]
A collection of measurable functions $\cF: \cX \mapsto \reals$ is deemed a VC class if the subgraphs of functions in $\cF$ form a VC class of sets. The VC dimension of $\cF$, denoted by $V_\cF$, is defined as the VC dimension of the collection of all subgraphs.
\end{definition}

For a bounded class of functions $\cF$ with finite VC dimension $V_\cF$,  
we can bound its covering number by Haussler's bound (see Theorem 2.6.7 of \cite{van1996weak}): 
\begin{align}\label{eq:Haussler_bound}
\textstyle
\sup_Q N(\eps B_{\cF}, \cF, L_2(Q)) & \le C V_\cF (16e)^{V_\cF}\left(\frac{1}{\eps}\right)^{2V_\cF - 1} \,
\end{align}
for some constants $C$. 

% It is intuitive that a function class $\cF$ with polynomial growth of covering number near $0$ is less complex than a function class $\cF$ with exponential growth of covering number. As a consequence, we expect a smaller Rademacher complexity for the former class. 

Finally, we capture the relation between the Rademacher complexity and the covering number via the following proposition:
\begin{proposition}
\label{prop:rc_cover}
Consider the covering number of $\cF$ in terms of $L_\infty$ norm. Then we have the following bound on the Rademacher complexity in terms of the growth of the covering number: 
$$
\cR_n(\cF)\leq 
\begin{cases}
C_1 n^{-\frac12}, & \text{ if the growth is polynomial or exponential with }0 < \gamma \leq 2\,, \\
C_2n^{-\frac{1}{\gamma}}, & \text{ if the growth is exponential with }\gamma > 2\,.
\end{cases}
$$
Here $C_1, C_2$ depend only on $\gamma$ and $B_{\cF}$. 
\end{proposition}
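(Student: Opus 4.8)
The plan is to bound the Rademacher complexity $\cR_n(\cF)$ via Dudley's chaining integral and then split into cases based on the growth rate of the $L_\infty$ covering number. Since $\|f\|_\infty \le B$ for all $f \in \cF$, the diameter of $\cF$ is at most $2B$, so the relevant chaining integral is $\cR_n(\cF) \lesssim \inf_{\delta \ge 0}\bigl\{ \delta + \tfrac{1}{\sqrt n}\int_\delta^{2B} \sqrt{\log \cN(\eps, L_\infty, \cF)}\, d\eps\bigr\}$ (using that the $L_2(P_n)$ metric is dominated by the $L_\infty$ metric, so an $L_\infty$ cover is also an $L_2(P_n)$ cover). I would state this as the starting point, citing standard chaining (e.g. Dudley's entropy bound), and then estimate the integral in each regime.

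First I would handle the polynomial growth case \eqref{eq:poly_growth}: here $\log \cN(\eps, L_\infty, \cF) \lesssim \alpha \log(1/\eps)$, so $\int_0^{2B}\sqrt{\log(1/\eps)}\,d\eps$ converges (the integrand has only a mild logarithmic singularity at $0$), giving a finite constant depending on $\alpha$ and $B$; taking $\delta = 0$ yields $\cR_n(\cF) \lesssim C_1 n^{-1/2}$. Next, the exponential growth case with $0 < \alpha \le 2$: here $\sqrt{\log \cN(\eps, L_\infty, \cF)} \lesssim \eps^{-\alpha/2}$, and since $\alpha/2 \le 1$, the integral $\int_0^{2B} \eps^{-\alpha/2}\, d\eps$ still converges, again allowing $\delta = 0$ and producing the $n^{-1/2}$ rate with a constant depending on $\alpha, B$. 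Finally, for exponential growth with $\alpha > 2$: now $\int_\delta^{2B}\eps^{-\alpha/2}\,d\eps \asymp \delta^{1-\alpha/2}$ diverges as $\delta \downarrow 0$, so I keep $\delta > 0$ and optimize $\delta + C n^{-1/2}\delta^{1-\alpha/2}$ over $\delta$; differentiating, the optimal $\delta \asymp n^{-1/\alpha}$, which plugged back gives $\cR_n(\cF) \lesssim C_2 n^{-1/\alpha}$.

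The main obstacle — really the only delicate point — is making the chaining step rigorous in a form that applies to function values rather than a single fixed point: one needs the symmetrized process $f \mapsto \tfrac1n\sum_i \eps_i f(X_i)$ to be sub-Gaussian with respect to the $L_2(P_n)$ metric (which it is, conditionally on the $X_i$, with the usual $1/\sqrt n$ scaling), and then one invokes Dudley's bound conditionally and takes expectations. I would also need to be slightly careful that replacing the $L_2(P_n)$ covering number by the $L_\infty$ covering number is harmless, which it is since $\|f - g\|_{L_2(P_n)} \le \|f - g\|_\infty$ for any empirical measure. Everything else is a routine integral estimate; I would present the three cases in the order above and collect the constants at the end, noting they depend only on $\alpha$ and $B$ as claimed.
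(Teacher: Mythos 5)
Your proof is correct and takes essentially the same route as the paper: Dudley's entropy integral (the paper invokes Lemma 2.14.1 of van der Vaart and Wellner) for the polynomial and mildly exponential regimes, and a truncated chaining bound optimized at truncation level $\asymp n^{-1/\alpha}$ for exponential growth with $\alpha>2$, with the $L_\infty$ cover dominating the $L_2(P_n)$ cover exactly as you note. One small caveat: at the boundary $\alpha=2$ the integral $\int_0^{2B}\varepsilon^{-1}\,d\varepsilon$ does \emph{not} converge (you assert it does), so that endpoint requires the truncated bound and incurs an extra $\log n$ factor --- an imprecision shared by the paper itself, whose proof only treats $\alpha<2$ in the Donsker case.
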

Thus, if  the covering number of $\cF$ grows polynomially or exponentially with $0 < \gamma < 2$, then the Rademacher complexity is of the order $n^{-1/2}$ (this is also known as \emph{Donsker class} of functions). Specifically, if $\cF$ is a VC class with VC dimension $V_{\cF}$, we then have $\cR_n(\cF)\leq C\sqrt{V_{\cF}/n}$ for some absoulte constant $C>0$. The Rademacher complexity increases with $\gamma$ when the growth of the covering number is  exponential with $\gamma> 2$.

\subsection{Analysis of initial estimation when $m_0$ is known}
\label{subsec:known_mean}
In this subsection, we present guarantees for the prediction interval (in terms of average width and coverage) assuming the conditional mean $m_0$ is known. Let us recall that when $m_0$ is known, then the optimization problem for estimating $\hat f_\pre$ (equation \eqref{sample_opt:general_form}) simplifies to: 
% In this subsection, we assume to know $m_0$ and consequently \eqref{sample_opt:general_form} simplifies to: 
\begin{equation}
\label{sample_opt:known_mean}
    \begin{aligned}
        \min_{f \in \cF} & \ \ \ \ \frac1n \sum^n_{i=1}f(x_i) \\
        \st & \  \ \ \  f(x_i)\geq(y_i - m_0(x_i))^2 \ \ \ \forall \ 1 \le i \le n \,.
    \end{aligned}
\end{equation}
It is now natural to ask what function $\hat f_\pre$ approximates. This can be answered by looking at the population version of the optimization problem \eqref{sample_opt:known_mean}, which is the following:  
% If infinitely many samples are given (i.e. $n=\infty$), then \eqref{sample_opt:known_mean} coverts to the following problem
\begin{equation}
\label{population_opt:known_mean}
    \begin{aligned}
        \min_{f \in \cF} & \ \ \ \ \bbE[f(X)] \ \ \ \ \ \ \st \ \ f(X) \geq (Y - m_0(X))^2 \ \ \text{almost surely}. 
        % \st & \  \ \ \  \bbP\left(f(X)\geq(Y - m_0(X))^2\right)=1.
    \end{aligned}
\end{equation}
Let us define $f_0$ to be the minimizer of the above optimization problem when $\cF$ is the set of all functions. 
As $\hat f_\pre$ covers the sample responses, it is intuitive that $\hat f_\pre$ approximates $f_0$ if $f_0 \in \cF$.   
% If $f_0 \in \cF$, then by Lemma \ref{lem:alpha_zero_pop}, it is the optimal solution of \eqref{population_opt:known_mean}. 
% Suppose $f$ is not constrained on a specific function class. Then \eqref{population_opt:known_mean} has an unique explicit solution.
% \begin{lemma}\label{sol:population_opt:known_mean}
% Suppose that $\cF=\{\text{any non-negative functions}\}$. Let $f_0(x)$ be the 1-quantile of the conditional random variable $(Y-m_0(X))^2|X=x$ for all $x\in\cX$, i.e.,
% \begin{align}\label{def:m_0_quantile}
% f_0(x):=\inf_{t}\left\{t\,\big|\, P\left((Y-m_0(X))^2\leq t\big|X=x\right)=1\right\},\ \ \forall x\in\cX.
% \end{align}
% Then $f_0(x)$ is the unique solution for the optimization problem \eqref{population_opt:known_mean}.
% \end{lemma}
%By Lemma \ref{sol:population_opt:known_mean}, the minimum width we could expect is $\bbE[f_0(X)]$. \jiawei{Should be $\bbE[\sqrt{f_0(X)}]$?}. 
If $f_0 \notin \cF$, then the model mis-specification error, denoted by $\Delta(\cF)$, is defined as follows: 
\begin{equation}
\label{def:Delta_F}
\Delta(\cF) = \min_{\substack{f \in \cF \\ f \ge f_0 }} \bbE[f(X)-f_0(X)] \,.
\end{equation}
The following theorem presents a bound on the difference between the expected width of the estimated prediction interval and its corresponding population minimizer in terms of $\Delta(\cF)$ and the complexity of the underlying function class.

\begin{theorem}
\label{thm:known_mean}
% $(X, Y)$ is compactly supported and 
Let $\hat f_{\pre}$ be the solution of \eqref{sample_opt:known_mean}. Then for any $t \ge 0$, with probability at least $1 - 2e^{-t}$ the following bound holds: 
% \begin{align}
% \label{ineq:known_mean}
% \bbE[\hat f_n \mid \cD_1] - \bbE[f_0(X)] \leq \Delta(\cF) + 4 \cR_n(\cF) + 2\sqrt{2\frac{(B^2 + 4B\cR_n(\cF))t}{n}} + \frac{4Bt}{3n}\,.
% \end{align}
\begin{align}
\label{ineq:known_mean}
\bbE[\hat f_\pre (X)\mid \cD] - \bbE[f_0(X)] \leq \Delta(\cF) + 4 \cR_n(\cF) + 4B_{\cF}\sqrt{\frac{t}{2n}}\,.
\end{align}
Moreover, if $\cF$ is a VC class, then with probability at least $1 - e^{-t}$ the following coverage guarantee holds:
\begin{align}
\label{cov:vc_known_mean}
 \bbP\left((Y - m_0(X))^2 \le \hat f_\pre(X) \,\big|\,\cD\right)\geq 1-c\sqrt{\frac{\VC(\cF)}{n}}-c\sqrt{\frac{t}{n}}\,
\end{align}
for some positive constant $c$. If $\cF$ is a non-VC class, then with probability at least $1 - e^{-t}$ the following coverage guarantee holds for any $\delta>0$. 
\begin{align}\label{cov:known_mean}
\bbP\left((Y - m_0(X))^2 \le \hat f_\pre(X) + \delta \,\big|\,\cD\right)   \geq 1-\frac{2}{\delta}\cdot\cR_n(\cF)-\sqrt{\frac{2t}{n}}.
\end{align}
%^Here $\widehat{\mathrm{PI}}_{100\%}(X)$ and $\widehat{\mathrm{PI}}_{100\%}(X,\delta)$ are defined in \eqref{eq:PI_hat} and \eqref{eq:PI_hat_extended}, respectively.
% Here $\cF$ is $B_{\cF}$-bounded and $\cR_n(\cF)$ is the Rademacher complexity of $\cF$. \jiawei{need to mention this here?} \DM{Probably not, we can gather all the necessary notations in the preliminary section.}
% as: 
% $$
% \cR_n(\cF) = \bbE_{\eps, X}\left[\sup_{f \in \cF}\left|\frac1n \sum_i \eps_i f(X_i)\right|\right], \ \ \ \eps_1, \dots, \eps_n \overset{\mathrm{i.i.d}}{\sim} \mathrm{Rad}(1/2) \,.
% $$
% Taking expectation with respect to $\cS_n$ on both sides of \eqref{ineq:known_mean}, we obtain: 
% \begin{equation}
%     \label{ineq:known_mean_avg}
%     \bbE[\hat f_n] - \bbE[f_0(X)] \leq \Delta(\cF) + 4\cR_n(\cF) \,.
% \end{equation}
%where $\cR_n(\cF) = \bbE_X(\hat \cR_n(\cF))$. 
\end{theorem}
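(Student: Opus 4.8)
\textbf{Setup and the width bound \eqref{ineq:known_mean}.} The plan is to handle the three assertions separately, each reducing to empirical-process control over a function class built from $\cF$, combined with the elementary fact that the constraint in \eqref{sample_opt:known_mean} forces the relevant empirical averages to vanish exactly. Since $m_0$ is known here, $\widehat{\mathrm{PI}}_{100\%}(X)$ is centered at $m_0$, so $Y\in\widehat{\mathrm{PI}}_{100\%}(X)$ is equivalent to $(Y-m_0(X))^2\le\hat f(X)$ and $Y\in\widehat{\mathrm{PI}}_{100\%}(X,\delta)$ to $(Y-m_0(X))^2\le\hat f(X)+\delta$. For \eqref{ineq:known_mean}, pick $f^\star\in\cF$ with $f^\star\ge f_0$ attaining (or, up to $\eta\downarrow 0$, nearly attaining) the minimum in \eqref{def:Delta_F}. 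Since $f^\star\ge f_0\ge(Y-m_0(X))^2$ almost surely, $f^\star$ is feasible for \eqref{sample_opt:known_mean}, so $\tfrac1n\sum_i\hat f(X_i)\le\tfrac1n\sum_i f^\star(X_i)$. I would then write $\bbE[\hat f(X)\mid\cD]-\bbE[f_0(X)]$ as the sum of (i) $\bbE[\hat f(X)\mid\cD]-\tfrac1n\sum_i\hat f(X_i)$, (ii) $\tfrac1n\sum_i\hat f(X_i)-\tfrac1n\sum_i f^\star(X_i)\le 0$, (iii) $\tfrac1n\sum_i f^\star(X_i)-\bbE[f^\star(X)]$, and (iv) $\bbE[f^\star(X)]-\bbE[f_0(X)]\le\Delta(\cF)+\eta$. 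Term (i) is $\le\sup_{f\in\cF}\bigl(\bbE[f(X)]-\tfrac1n\sum_i f(X_i)\bigr)$, whose expectation is $\le 2\cR_n(\cF)$ by symmetrization and which, as a function of the sample with bounded differences $B_\cF/n$, concentrates by McDiarmid; term (iii) is controlled by Hoeffding for the single bounded function $f^\star$. A union bound over these two $e^{-t}$ events and $\eta\downarrow 0$ give \eqref{ineq:known_mean} (the displayed constants $4$ are conservative; the argument gives factor $2$).

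\textbf{VC coverage \eqref{cov:vc_known_mean}.} Consider the collection $\cG=\{(x,y)\mapsto\mathbf 1\{(y-m_0(x))^2>f(x)\}:f\in\cF\}$. Because $(x,y)\mapsto(x,(y-m_0(x))^2)$ is a fixed map and the sets $\{(x,s):s>f(x)\}$ are complements of the subgraphs of functions in $\cF$, the class $\cG$ is VC with $\VC(\cG)\le V_\cF$. The empirical constraint forces $\tfrac1n\sum_i\mathbf 1\{(Y_i-m_0(X_i))^2>\hat f(X_i)\}=0$, hence
\[
\bbP\bigl((Y-m_0(X))^2>\hat f(X)\,\big|\,\cD\bigr)\;\le\;\sup_{g\in\cG}\Bigl(\bbE[g]-\tfrac1n\textstyle\sum_i g(X_i,Y_i)\Bigr).
\]
I would bound the right-hand side by the standard VC uniform-convergence inequality: its expectation is $\lesssim\sqrt{V_\cF/n}$ (via Haussler's bound \eqref{eq:Haussler_bound} and Dudley's entropy integral), and McDiarmid with increments $1/n$ adds a deviation of order $\sqrt{t/n}$. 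Taking complements yields \eqref{cov:vc_known_mean}.

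\textbf{Non-VC coverage \eqref{cov:known_mean}.} When $\cF$ is merely a Rademacher-controlled (non-VC) class, the indicator class above is too rich, so I would replace the indicator by the $\tfrac1\delta$-Lipschitz surrogate $\phi_\delta(u)=\min\{u_+/\delta,\,1\}$, which satisfies $\mathbf 1\{u>\delta\}\le\phi_\delta(u)\le\mathbf 1\{u>0\}$. Writing $W_i=(Y_i-m_0(X_i))^2$, the constraint $\hat f(X_i)\ge W_i$ gives $\phi_\delta(W_i-\hat f(X_i))\le\mathbf 1\{W_i>\hat f(X_i)\}=0$, so
\[
\bbP\bigl((Y-m_0(X))^2>\hat f(X)+\delta\,\big|\,\cD\bigr)\;\le\;\bbE\bigl[\phi_\delta\bigl((Y-m_0(X))^2-\hat f(X)\bigr)\,\big|\,\cD\bigr]\;\le\;\sup_{f\in\cF}\Bigl(\bbE[\phi_\delta(W-f(X))]-\tfrac1n\textstyle\sum_i\phi_\delta(W_i-f(X_i))\Bigr).
\]
By symmetrization the expectation of the right-hand side is $\le 2\cR_n(\cH_\delta)$ with $\cH_\delta=\{(x,y)\mapsto\phi_\delta((y-m_0(x))^2-f(x)):f\in\cF\}$; since $v\mapsto\phi_\delta((y-m_0(x))^2-v)$ is $\tfrac1\delta$-Lipschitz, the Ledoux--Talagrand contraction principle — applied after peeling off the $f$-free term $\phi_\delta(W_i)$, whose Rademacher average is zero — gives $\cR_n(\cH_\delta)\le\tfrac1\delta\cR_n(\cF)$, hence the $\tfrac2\delta\cR_n(\cF)$ term. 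Since $\phi_\delta\in[0,1]$, McDiarmid supplies the $\sqrt{2t/n}$ deviation, and taking complements gives \eqref{cov:known_mean}.

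\textbf{Anticipated obstacles.} The width bound is routine symmetrization plus Hoeffding. The genuine work is in the two coverage bounds: for \eqref{cov:vc_known_mean}, verifying that the ``failure-set'' class $\cG$ inherits the VC property from $\cF$ and that the constant in the uniform-convergence bound scales as $\sqrt{V_\cF/n}$ rather than worse; and for \eqref{cov:known_mean}, engineering a surrogate $\phi_\delta$ that simultaneously (i) dominates the failure indicator, (ii) has empirical average \emph{exactly} zero on the constrained sample, and (iii) is Lipschitz as a function of $f$ so that contraction converts its Rademacher complexity into $\tfrac1\delta\cR_n(\cF)$ — the per-sample offset $(Y_i-m_0(X_i))^2$ inside $\phi_\delta$ is precisely what requires the ``subtract the constant before contracting'' step. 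This last point is where I expect the argument to be most delicate.
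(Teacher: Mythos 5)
Your proposal is correct and follows essentially the same route as the paper: feasibility of the best over-estimator $f^\star$ plus optimality of $\hat f$ and a two-sided symmetrization/McDiarmid bound for \eqref{ineq:known_mean}, an indicator-class VC uniform-convergence argument for \eqref{cov:vc_known_mean}, and the $\tfrac1\delta$-Lipschitz surrogate with Talagrand contraction for \eqref{cov:known_mean} (the paper obtains the two coverage claims as the $\cM=\{m_0\}$ special case of its Theorem \ref{thm:one_step}, whose proof is exactly your argument). Your handling of the contraction step — subtracting the $f$-free offset $\phi_\delta(W_i)$ before contracting — is actually more careful than the paper's direct invocation of the contraction lemma, and your use of Hoeffding for the single function $f^\star$ gives a marginally sharper constant than the paper's two-sided supremum bound.
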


\begin{remark}\label{rmk:cov}
    In Theorem \ref{thm:known_mean}, for non-VC class, we have obtained a coverage guarantee in terms of $\cR_n(\cF)$ and $\delta$. Suppose the covering number of $\cF$ grows exponentially with some $\alpha>0$. For $0 < \alpha \leq 2$, we know that $\cR_n(\cF)\leq\cO(n^{-1/2})$ and all we need is $\delta \gg n^{-1/2}$ for its contribution to be negligible, whereas for $\alpha > 2$, we know that $\cR_n(\cF)\leq \cO(n^{-1/\alpha})$ and $\delta \gg n^{-1/\alpha}$ is needed. For the former case, if we choose $t = c_1\log{n}$ and $\delta = c_2 (\log{n})^{-1/2}$ for some positive constants $c_1$ and $c_2$, we have 
    $$
    \bbP\left((Y - m_0(X))^2 \le \hat f_\pre(X) \,\big|\,\cD\right) \ge 1 - c_3\sqrt{\frac{\log{n}}{n}} \ \ \text{w.p. greater than } 1 - n^{-c_1} \,, 
    $$
    for a positive constant $c_3$.
    Similar choices of $(t, \delta)$ for $\alpha > 2$ yields
     $$
    \bbP\left((Y - m_0(X))^2 \le \hat f_\pre(X) + \delta \,\big|\,\cD\right)  \ge 1 - c_3\frac{\sqrt{\log{n}}}{n^{\frac{1}{\alpha}}} \ \ \text{w.p. greater than } 1 - n^{-c_1} \, .
    $$   
\end{remark}

\subsection{Analysis of initial estimation when $m_0$ is unknown}
\label{subsec:unknown_mean}
In this section, we present our analysis for constructing the prediction interval using either i) one-step estimation (Algorithm \ref{algo:main_algo}) which learns $(\hat f_\pre, \hat m)$ simultaneously by jointly optimizing \eqref{sample_opt:general_form} or ii) two-step estimation (Algorithm \ref{algo:main_algo_twostep}), where we first estimate $m_0$ via regression and then plugin that estimator in \eqref{sample_opt:general_form} to get $\hat f_\pre$. 
% In the one step procedure, we obtain $(\hat f_\pre, \hat m)$ by jointly optimizing \eqref{sample_opt:general_form} over $(f, m)$. 

As in the previous subsection, the natural question is what population quantity $\hat f_\pre$ approximates. Here the situation is somewhat complicated as the estimation also depends on $m$. We define a quantity $f_m$ as the minimizer of the following optimization problem when $m_0$ is replaced by some $m \in \cM$: 
\begin{equation}
\label{population_opt:unknown_mean_1}
    \begin{aligned}
        \min_{f} & \ \ \ \ \bbE[f(X)] \ \ \ \ \ \ \st \ \ f(X) \geq (Y - m(X))^2 \ \ \text{almost surely}. 
        % \st & \  \ \ \  \bbP\left(f(X)\geq(Y - m_0(X))^2\right)=1.
    \end{aligned}
\end{equation}
In other words, $f_m$ is an almost sure upper bound on $(Y - m(X))^2$ for $m \in \cM$. 
Now the $\min_{m \in \cM} \bbE[f_m(X)]$ gives the minimal width when $m$ varies over $\cM$ and it is rational to imagine that the average width of $\hat f_\pre(X)$ approximates $\min_{m \in \cM} \bbE[f_m(X)]$. We divide the analysis for our method into two subsections; \ref{sec:one_step_theory} studies Algorithm \ref{algo:main_algo} and subsection \ref{sec:two_step_theory} analyzes Algorithm \ref{algo:main_algo_twostep}.  

\subsubsection{One-step estimation}\label{sec:one_step_theory}

In the one step procedure, we obtain $(\hat f_\pre, \hat m)$ by jointly optimizing \eqref{sample_opt:general_form} over $(f, m)$.
% As elaborated in Section \ref{sec:methods}, the estimation accuracy depends both on the complexity of $\cF$ and $\cM$ as it may be possible to estimate one if the other is too simple or too complex. \jiawei{not that clear?} 
Analogous to \eqref{def:Delta_F}, we define $\Delta(\cM, \cF)$ as: 
\begin{equation}
    \label{def:delta_M_F}
    \Delta(\cM,\cF) = \min_{\substack{m\in\cM, f \in \cF \\ f \geq f_m}} \bbE[f(X)-f_0(X)] \,.
\end{equation}
% Here given any guess $m$ of $m_0$, we denote by $f_m(x)$ the supremum of $(Y-m(X))^2\mid X=x$, i.e. 
% $$
% f_m(x):=\inf_{t}\left\{t\,\big|\, \bbP\left((Y-m(X))^2\leq t\big|X=x\right)=1\right\}.
% $$ 
Furthermore, as evident from the constraint of \eqref{sample_opt:general_form}, the complexity of $(\cM, \cF)$ appears through the function $f(X)- (Y - m(X))^2$. Therefore, we define a function class $\cG$ as:  
\begin{equation}
\label{eq:G_def}
\cG = \left\{g(x, y) = f(x)-(y - m(x))^2: f \in \cF, m \in \cM\right\} \,.
\end{equation}
Our next theorem provides a bound for the width of the estimated prediction interval obtained via Algorithm \ref{algo:main_algo} as well as a coverage guarantee in terms of the model mis-specification error and complexity.
% Hence, the estimating accuracy of $m_0$ and $f_0$ depends on the complexity of $(\cF, \cM)$. The following theorem provides a bound on the (expected) width of the estimated prediction interval obtained via Algorithm \ref{algo:main_algo}, which is in a similar flavor to Theorem \ref{thm:known_mean}: 
\begin{theorem}
\label{thm:one_step}
% $(X, Y)$ is compactly supported and 
For any $t\ge 0$, with probability at least $ 1 - 2e^{-t}$ the following bound holds: 
\begin{align}
\label{ineq:general_form}
\bbE[\hat f_\pre(X) \mid \cD] - \bbE[f_0(X)] \leq \Delta(\cM,\cF) + 4 \cR_n(\cF) + 4{B_{\cF}}\sqrt{\frac{t}{2n}}\,.
\end{align}
Moreover, if $\cG$ is a VC class, then with probability at least $1 - e^{-t}$ the following coverage guarantee holds:
\begin{align}\label{ineq:vc_coverage}
 \bbP\left((Y - \hat m(X))^2 \le \hat f_\pre(X) \,\big|\,\cD\right)  \geq 1-c\sqrt{\frac{\VC(\cG)}{n}}-c\sqrt{\frac{t}{n}}\,
\end{align}
for some positive constant $c$. If $\cG$ is a non-VC class, then with probability at least $1 - e^{-t}$ the following coverage guarantee holds for any $\delta>0$:
\begin{align}
\label{ineq:coverage}
\bbP\left((Y - \hat m(X))^2 \le \hat f_\pre(X) + \delta \,\big|\,\cD\right)\geq 1-  \frac{2}{\delta}\cdot\cR_n(\cG)-\sqrt{\frac{2t}{n}}.
\end{align}
% Taking expectation with respect to $\cS_n$ on both sides of \eqref{ineq:general_form}, we obtain: 
% \begin{equation}
%     \label{ineq:general_form_avg}
%     \bbE[\hat f_n] - \bbE[f_0(X)] \leq \Delta(\cM,\cF) + 4\cR_n(\cF) \,.
% \end{equation}
\end{theorem}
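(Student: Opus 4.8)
The argument is a close parallel of the proof of Theorem~\ref{thm:known_mean}; the only structural changes are that the ``oracle'' we compare against now lives in $\cM\times\cF$, and that the bad event $\{Y\notin\widehat{\mathrm{PI}}\}$ must be controlled uniformly over the joint class $\cG$ of \eqref{eq:G_def} rather than over $\cF$, since both $\hat m$ and $\hat f$ are data-dependent. I will treat the width bound and the two coverage bounds separately.

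\emph{Width bound \eqref{ineq:general_form}.} Let $(m^\star,f^\star)$ attain the minimum in \eqref{def:delta_M_F}, so $f^\star\ge f_{m^\star}$ and $\bbE[f^\star(X)-f_0(X)]=\Delta(\cM,\cF)$. Because $f_{m^\star}(x)$ is the essential supremum of $(Y-m^\star(X))^2$ given $X=x$, we have $f^\star(X_i)\ge f_{m^\star}(X_i)\ge(Y_i-m^\star(X_i))^2$ for all $i$ almost surely, so $(m^\star,f^\star)$ is feasible for \eqref{sample_opt:general_form} and hence $\tfrac1n\sum_i\hat f(X_i)\le\tfrac1n\sum_i f^\star(X_i)$. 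Decompose
$$\bbE[\hat f(X)\mid\cD]-\bbE[f_0(X)]=\Big(\bbE[\hat f(X)\mid\cD]-\tfrac1n\textstyle\sum_i\hat f(X_i)\Big)+\Big(\tfrac1n\textstyle\sum_i\hat f(X_i)-\bbE[f_0(X)]\Big),$$
bound the first parenthesis by $\sup_{f\in\cF}\big(\bbE[f(X)]-\tfrac1n\sum_i f(X_i)\big)$ (valid since $\hat f\in\cF$), and, after replacing $\hat f$ by $f^\star$, bound the second by $\big(\tfrac1n\sum_i f^\star(X_i)-\bbE[f^\star(X)]\big)+\Delta(\cM,\cF)$. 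Symmetrization plus McDiarmid's inequality controls the supremum with high probability by $O(\cR_n(\cF))+O\big(B_{\cF}\sqrt{t/n}\big)$, and Hoeffding's inequality controls the single-function deviation by $O\big(B_{\cF}\sqrt{t/n}\big)$; a union bound over the two events then gives \eqref{ineq:general_form} with the constants recorded there. Note that only $\cR_n(\cF)$ appears here, not $\cR_n(\cG)$, because the objective of \eqref{sample_opt:general_form} depends on $f$ alone.

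\emph{Coverage bounds \eqref{ineq:vc_coverage}--\eqref{ineq:coverage}.} Set $\hat g(x,y)=\hat f(x)-(y-\hat m(x))^2\in\cG$. Then $Y\notin\widehat{\mathrm{PI}}_{100\%}(X)$ iff $\hat g(X,Y)<0$, while $Y\notin\widehat{\mathrm{PI}}_{100\%}(X,\delta)$ iff $\hat g(X,Y)<-\delta$; in both cases the constraint of \eqref{sample_opt:general_form} forces $\hat g(X_i,Y_i)\ge0$ for every $i$, so the empirical frequency of the bad event is $0$, and therefore $\bbP(\hat g(X,Y)<0\mid\cD)\le\sup_{g\in\cG}\big(\bbP(g<0)-\tfrac1n\sum_i\mathbf{1}_{\{g(X_i,Y_i)<0\}}\big)$. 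When $\cG$ is VC, the sets $\{(x,y):g(x,y)<0\}$, $g\in\cG$, form a VC class whose dimension is at most $\VC(\cG)$ — a configuration shattered by these sets would be shattered by the subgraphs of $\cG$ at the lifted points $(x_i,y_i,0)$ — so Haussler's bound \eqref{eq:Haussler_bound} and Dudley's chaining give Rademacher complexity $O(\sqrt{\VC(\cG)/n})$ for the induced indicator class, and McDiarmid's inequality then yields \eqref{ineq:vc_coverage}. When $\cG$ is non-VC, I replace the indicator by the clipped ramp $\phi_\delta(u)=\min\{1,\max\{0,-u/\delta\}\}$, which is $(1/\delta)$-Lipschitz, takes values in $[0,1]$, vanishes at $0$, and satisfies $\mathbf{1}_{\{u<-\delta\}}\le\phi_\delta(u)\le\mathbf{1}_{\{u<0\}}$. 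Then $\bbP(\hat g<-\delta\mid\cD)\le\bbE[\phi_\delta(\hat g(X,Y))\mid\cD]$, the empirical average of $\phi_\delta\circ\hat g$ is $0$, and by McDiarmid together with the Ledoux--Talagrand contraction inequality (which converts the $(1/\delta)$-Lipschitz $\phi_\delta$ into a factor $1/\delta$) the relevant uniform deviation over $\{\phi_\delta\circ g:g\in\cG\}$ is at most $\tfrac{2}{\delta}\cR_n(\cG)+\sqrt{2t/n}$, establishing \eqref{ineq:coverage}.

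\emph{Main obstacle.} The concentration tools (symmetrization, McDiarmid, Hoeffding, contraction, Dudley's integral) are routine; the genuinely delicate part is the coverage step. First, one must recognize that the joint optimization forces the uniform control to be taken over $\cG$ rather than over $\cF$ separately — this is precisely why the coverage guarantees are stated in terms of $\VC(\cG)$ and $\cR_n(\cG)$. Second, in the VC case one needs the combinatorial transfer from $\VC(\cG)$ to the VC dimension of the induced half-spaces $\{g<0\}$; and in the non-VC case one needs to pick the clipping scale equal to the inflation $\delta$ of the interval so that the Lipschitz contraction produces exactly the advertised $\tfrac{1}{\delta}$-dependence on $\cR_n(\cG)$. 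Everything else is bookkeeping of constants.
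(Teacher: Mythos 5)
Your proposal is correct and follows essentially the same route as the paper: the width bound via feasibility of the population oracle $(m^\star,f^\star)$ plus a two-term empirical-process decomposition over $\cF$, the VC coverage bound via the induced class of sets $\{g<0\}$ with Haussler's bound, and the non-VC coverage bound via the $(1/\delta)$-Lipschitz ramp surrogate and Talagrand contraction applied to $\cG$. The only cosmetic differences are that you control the fixed-function deviation of $f^\star$ by Hoeffding where the paper reuses the two-sided supremum over $\cF$, and you invoke McDiarmid where the paper cites a Bennett-type inequality for the VC coverage step; neither changes the argument.
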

Note that, the upper bound \eqref{ineq:general_form} incorporates the complexity of $\cM$ only through the bias term $\Delta(\cM, \cF)$. Therefore, if $\cM$ is highly complex (say a set of all measurable functions), then $\Delta(\cM, \cF)$ is small, and the bound on \eqref{ineq:general_form} improves. However, we pay the price in the coverage, which explicitly depends on the complexity of $\cG$. Here the complexity of $\cG$ is typically the maximum of complexity of $\cM$ and $\cF$, i.e. $\cR_n(\cG)=\cO(\max\{\cR_n(\cM), \cR_n(\cF)\})$. To understand this trade-off, consider a simple example where $\cM$ is a collection of all measurable functions. Then we can always choose $\hat m$ such that $y_i = \hat m(x_i)$ for all $1 \le i \le n$ (note that a $n$ degree polynomial is sufficient to perfectly fit the data). Consequently, for any $\cF$, we choose $f \in \cF$ which has the smallest empirical mean (it is $0$ if $0 \in \cF$), without any regard to the prediction interval leading to poor coverage. That is why we need to control the complexity of $\cG$ for adequate coverage as illustrated in \eqref{ineq:vc_coverage} and \eqref{ineq:coverage}. The choice of $\delta$ aligns with the discussion in Remark \ref{rmk:cov}. 

\subsubsection{Two-step procedure}\label{unk:two-step}
\label{sec:two_step_theory}
In this subsection, we analyze the two-step procedure as outlined in Algorithm \ref{algo:main_algo_twostep}. 
As the one-step procedure optimizes over two function classes $\cM$ and $\cF$ simultaneously, it can be computationally challenging. For example, if $\cM$ is a class of smooth $k$ times differentiable functions (e.g. Hölder class, Sobolev class, or Besov class of functions) then it is not immediate how to optimize $m$ and $f$ efficiently in one shot. To overcome this issue, we resort to Algorithm \ref{algo:main_algo_twostep}, where we first learn $\hat m$ and then solve \eqref{sample_opt:general_form} to obtain $\hat f_\pre$ with $\hat m$ fixed, i.e.: 
\begin{equation}\label{sample_opt:fix_m}
    \begin{aligned}
        \min_{f \in \cF} & \ \ \ \ \frac1n \sum^n_{i=1}f(x_i) \\
        \st & \  \ \ \  f(x_i)\geq(y_i - \hat m(x_i))^2 \ \ \ \forall \ 1 \le i \le n \,.
    \end{aligned}
\end{equation}
Since $\hat m$ is fixed and we only optimize over the function class $\cF$, this two-step procedure is more computationally efficient compared to the one-step procedure.

Suppose that $\hat m$ is a consistent estimator with respect to $L_\infty$ norm, i.e. $\|\hat m(x)-m_0(x)\|_{\infty}$ is small with high probability. One may expect that the optimal value of \eqref{sample_opt:fix_m} is close to the optimal value of \eqref{sample_opt:known_mean}, as the constraints in both problems are close to each other. However, this is not always the case as it heavily relies on $\cF$. Consider a toy example where $\cF=\{f_0(x),f_0(x) + 1\}$ and $\hat m(x)=m_0(x)+\varepsilon$. Here $\varepsilon$ is an arbitrarily small constant. The optimal solution for \eqref{sample_opt:fix_m} is $f_0(x) + 1$ since $f_0(x)$ does not satisfy the constraint, whereas the optimal solution for \eqref{sample_opt:known_mean} is $f_0(x)$. Hence the difference between the two optimal values is $1$, which does not depend on $\varepsilon$ due to the lack of richness of the class $\cF$ for continuity. In other words, if $\cF$ does not admit some sort of \emph{continuity} with respect to the perturbation of the constraints, then it is not possible to quantify the closeness of the optimal value of \eqref{sample_opt:fix_m} and \eqref{sample_opt:known_mean} with respect to the closeness between $\hat m$ and $m_0$. We now state our continuity assumption; define $f^\star$ to be the \emph{best approximator} of $f_0$ in $\cF$, i.e. $f^{\star}=\argmin_{f\in\cF:f\geq f_0 }\bbE[f(X)]$. Our continuity assumption requires that $f^{\star}$ should not be an isolated point, i.e. given $c>0$, $f^{\star} + c \in \cF$ for all \emph{small} $c$. In other words, our assumption brings certain continuity to our optimization problem with respect to the perturbation of $m_0$. We now formally state the assumption: 
% This ensures that if we slightly perturb $m_0$, then $\min_{f \in \cF}\bbE[f]$ will also be perturbed slightly. 
\begin{assumption}\label{assm}
Let $f^{\star}=\argmin_{f\in\cF:f\geq f_0}\bbE[f(X)]$. There exists $D>0$ such that for any $c\in [0,D]$, we have $f^{\star}+c\in\cF$.
\end{assumption}
Based on the above assumption, we now present a theoretical guarantee for the two-step estimation below: 
\begin{theorem}
\label{thm:two_step}
Let $\hat f_\pre$ is obtained by solving \eqref{sample_opt:fix_m}. Furthermore, assume that Assumption \ref{assm} holds and $\|\hat m-m_0\|_{\infty}\leq c_1D$. Then for any $t>0$, with probability at least $1-2e^{-t}$, we have 
$$
\bbE[\hat f_\pre(X)\mid \cD] -\bbE[f_0(X)]\leq\Delta(\cF)+c_2\|\hat m-m_0\|_{\infty}+4\cR_n(\cF)+4B_{\cF}\sqrt{\frac{t}{2n}},
$$
where $c_1$ and $c_2$ are constants depending on the boundness of $Y$ and $\cM$. Moreover, if $\cF$ is a VC class, then with probability at least $1 - e^{-t}$ the following coverage guarantee holds:
\begin{align}\label{cov:vc_two_step}
 \bbP\left((Y - \hat m(X))^2 \le \hat f_\pre(X) \,\big|\,\cD\right)  \geq 1-c\sqrt{\frac{\VC(\cF)}{n}}-c\sqrt{\frac{t}{n}}\,
\end{align}
for some constant $c$. If $\cF$ is a non-VC class, then with probability at least $1 - e^{-t}$ the following coverage guarantee holds:
\begin{align}\label{cov:two_step}
 \bbP\left((Y - \hat m(X))^2 \le \hat f_\pre(X) + \delta \,\big|\,\cD\right)  \geq 1-\frac{2}{\delta}\cdot\cR_n(\cF)-\sqrt{\frac{2t}{n}}\,
\end{align}
for any $\delta>0$.
% $$
% \widehat{\mathrm{PI}}_{100\%}(x,\delta) = \left[ \hat m(x)-\sqrt{\hat f_{\hat m}(x) + \delta}, \ \hat m(x)+\sqrt{\hat f_{\hat m}(x) + \delta}\right].
% $$
\end{theorem}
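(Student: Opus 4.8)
\emph{Proof idea.} The plan is to follow the architecture of the proof of Theorem~\ref{thm:known_mean}, inserting one extra step to absorb the mismatch between the empirical constraint of \eqref{sample_opt:fix_m}, which involves $\hat m$, and that of \eqref{sample_opt:known_mean}, which involves $m_0$. First I would split the width excess along a feasible comparison function $\tilde f\in\cF$:
\begin{align*}
\bbE[\hat f(X)\mid\cD]-\bbE[f_0(X)]
&=\Bigl(\bbE[\hat f(X)\mid\cD]-\tfrac1n\sum_{i}\hat f(x_i)\Bigr)
+\Bigl(\tfrac1n\sum_{i}\hat f(x_i)-\tfrac1n\sum_{i}\tilde f(x_i)\Bigr)\\
&\quad+\Bigl(\tfrac1n\sum_{i}\tilde f(x_i)-\bbE[\tilde f(X)]\Bigr)
+\bigl(\bbE[\tilde f(X)]-\bbE[f_0(X)]\bigr).
\end{align*}
The first and third brackets are generalization gaps over $\cF$, each controlled by the uniform deviation $\sup_{f\in\cF}\bigl|\bbE[f(X)]-\tfrac1n\sum_i f(x_i)\bigr|$, which symmetrization and McDiarmid's inequality bound by $2\cR_n(\cF)+O\bigl(B_{\cF}\sqrt{t/n}\bigr)$ with probability at least $1-e^{-t}$ each (a union bound over the two events produces the $2e^{-t}$ and the constant $4$ in the statement); the second bracket is $\le 0$ as soon as $\tilde f$ is feasible for \eqref{sample_opt:fix_m}; the fourth bracket contributes $\Delta(\cF)$ plus the mean-estimation cost carried by $\tilde f$.

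The crux is choosing $\tilde f$. I would take $\tilde f:=f^{\star}+c$ with $c:=c_2\|\hat m-m_0\|_{\infty}$, where $f^{\star}=\argmin_{f\in\cF:\,f\ge f_0}\bbE[f(X)]$ is the best approximator appearing in Assumption~\ref{assm}. Two checks are required. (i) \emph{Membership in $\cF$:} since $\|\hat m-m_0\|_{\infty}\le c_1 D$ and the constants are calibrated so that $c_1 c_2\le 1$, we have $c\le D$, hence $f^{\star}+c\in\cF$ by Assumption~\ref{assm}. (ii) \emph{Feasibility at the data:} almost surely $|y_i-m_0(x_i)|\le\sqrt{f_0(x_i)}$, so
$$
(y_i-\hat m(x_i))^2\le\bigl(\sqrt{f_0(x_i)}+\|\hat m-m_0\|_{\infty}\bigr)^2\le f_0(x_i)+\bigl(2\sqrt{\|f_0\|_{\infty}}+\|\hat m-m_0\|_{\infty}\bigr)\|\hat m-m_0\|_{\infty},
$$
and because $\|f_0\|_{\infty}\le 4\|Y\|_{\infty}^2$ is finite, the right-hand side is at most $f_0(x_i)+c$ once $c_2$ is taken large enough in terms of $\|Y\|_{\infty}$ and $B_{\cM}$ (this is the source of the constants in the statement). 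Since $f^{\star}\ge f_0$, this gives $\tilde f(x_i)=f^{\star}(x_i)+c\ge(y_i-\hat m(x_i))^2$, so $\tilde f$ is feasible, optimality of $\hat f$ makes the second bracket nonpositive, and the fourth bracket equals $\bbE[f^{\star}(X)]-\bbE[f_0(X)]+c=\Delta(\cF)+c_2\|\hat m-m_0\|_{\infty}$ by the definition \eqref{def:Delta_F}. Combining the four estimates gives the stated width bound.

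For the coverage guarantees I would rerun the relevant half of the proof of Theorem~\ref{thm:known_mean} with $m_0$ replaced by $\hat m$, treating $\hat m$ as fixed --- legitimate after conditioning on an independent first-stage sample, the split-conformal device mentioned near Algorithm~\ref{algo:main_algo_twostep}. The constraint in \eqref{sample_opt:fix_m} forces $(y_i-\hat m(x_i))^2\le\hat f(x_i)$ for every $i$, so the empirical miscoverage frequency is exactly $0$. In the VC case, the miscoverage sets $\{(x,y):(y-\hat m(x))^2>f(x)\}_{f\in\cF}$ form the preimage of the subgraph family of $\cF$ under the fixed map $(x,y)\mapsto(x,(y-\hat m(x))^2)$, hence have VC dimension at most $\VC(\cF)$; the one-sided uniform VC deviation bound (Rademacher complexity of a VC class is $O(\sqrt{\VC(\cF)/n})$, plus McDiarmid for the $\sqrt{t/n}$ term) turns zero empirical miscoverage into population miscoverage at most $c\sqrt{\VC(\cF)/n}+c\sqrt{t/n}$, i.e.\ \eqref{cov:vc_two_step}. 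In the non-VC case I would dominate $\mathbf 1_{\{u>\delta\}}$ by the $(1/\delta)$-Lipschitz ramp $\psi_\delta(u)=\min\{1,\max\{0,u/\delta\}\}$ at $u=(y-\hat m(x))^2-f(x)$: its empirical mean vanishes, and symmetrization with Ledoux--Talagrand contraction bounds its population mean by $\tfrac{2}{\delta}\cR_n\bigl(\{(x,y)\mapsto(y-\hat m(x))^2-f(x):f\in\cF\}\bigr)+O\bigl(\sqrt{t/n}\bigr)$; since adding the fixed function $(y-\hat m(x))^2$ shifts each Rademacher sum only by a mean-zero term, this Rademacher complexity equals $\cR_n(\cF)$, which yields \eqref{cov:two_step}.

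The main obstacle is the feasibility construction together with its bookkeeping: one must exhibit a member of $\cF$ that provably meets the $\hat m$-perturbed constraints while overshooting $f_0$ by only $O(\|\hat m-m_0\|_{\infty})$, and this is exactly where Assumption~\ref{assm} is indispensable --- without it the conclusion is false, as the toy example $\cF=\{f_0,\,f_0+1\}$ with its $\Theta(1)$ gap shows. The only other delicate point is the data-dependence of $\hat m$ in the coverage statements, which I would neutralize by the conditioning argument above so that the concentration inequalities see a genuinely fresh second-stage sample.
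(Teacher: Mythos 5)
Your proposal is correct and follows essentially the same route as the paper: compare $\hat f$ to the perturbed best approximator $f^{\star}+c\,\|\hat m-m_0\|_{\infty}$ (whose membership in $\cF$ is exactly what Assumption~\ref{assm} supplies), absorb the constraint mismatch into a constant depending on the boundedness of $Y$ and $\cM$, control the two generalization gaps over $\cF$ by symmetrization plus McDiarmid, and reduce the coverage claims to the known-mean argument with $\hat m$ treated as fixed. The only cosmetic difference is that you verify feasibility via $|y_i-m_0(x_i)|\le\sqrt{f_0(x_i)}$, whereas the paper factors the difference of squares $|(y_i-\hat m(x_i))^2-(y_i-m_0(x_i))^2|\le(B_{\cM}+3R)\|\hat m-m_0\|_{\infty}$ directly; both yield the same $O(\|\hat m-m_0\|_{\infty})$ perturbation.
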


We now compare our theoretical findings for the one-step and the two-step algorithm. In terms of the bound on the excess width, the key difference between Theorem \ref{thm:one_step} and Theorem \ref{thm:two_step} is in the bias term; the bias term $\Delta(\cM, \cF)$ in Theorem \ref{thm:one_step} is replaced by $\Delta(\cF) + c_2\|\hat m - m_0\|_\infty$ in Theorem \ref{thm:two_step}. As $\hat m \in \cM$ by definition, it is immediate that the bias term of Theorem \ref{thm:two_step} is larger than that of Theorem \ref{thm:one_step} and consequently Algorithm \ref{algo:main_algo} provides narrower width. This difference is negligible if $\|\hat m - m_0\|_\infty \to 0$ and $m_0 \in \cM$.

On the other hand, the two-step estimation is more computationally efficient than one-step since estimating $m_0$ using regression techniques is easy to implement. Furthermore, the bias term in Theorem \ref{thm:two_step} is more interpretable as it precisely quantifies how an estimation error in $m_0$ propagates through the estimation error in the average width of the prediction intervals. Furthermore, establishing guarantees for $\hat m$ obtained through two-step estimation is also easier compared to that of one-step, as $\hat m$ depends on both function classes $\cM$ and $\cF$ in one-step estimation.

\subsection{Coverage guarantee for UTOPIA}
\label{sec:coverage_guarantee}
To achieve a user-specific coverage level $\alpha > 0$, we re-scale $\hat f_\pre$ by a factor of $\hat \lambda(\alpha)$ as outlined in Algorithms \ref{algo:main_algo} and \ref{algo:main_algo_twostep}. For the purpose of theoretical analysis, we split $n$ data into two parts: (1) $\cD_1$: $n_1$ samples that are used to obtain the initial estimator $\hat f_\pre$ and $\hat m$; (2) $\cD_2$: $n_2=n-n_1$ samples that are used to learn the shrink level $\hat \lambda(\alpha)$. 
More specifically, we first use $\cD_1$ to run Step 1 of Algorithm \ref{algo:main_algo} or Step 1 and Step 2 of Algorithm \ref{algo:main_algo_twostep}. 
We then \emph{remove some outliers} from $\cD_2$, in the sense we further select a subset $\cI \subseteq \cD_2$ defined as: 
% first obtain the estimators $\hat m$ and $\hat f_\pre$ based on $n_1$ samples via either one-step or two-step estimation and construct the $100\%$ prediction interval as in \eqref{eq:PI_hat} or \eqref{eq:PI_hat_extended} depending on the complexity of the underlying function classes. We then select a subset of $\cD_2$ which lies inside the estimated $100\%$ prediction interval, i.e.
$$
\cI:=\left\{n_1 + 1\leq i\leq n: (y_i-\hat m(x_i))^2\leq \hat f_\pre(x_i)+\delta\right\} \,.
$$
Finally, we estimate $\hat \lambda(\alpha)$ based on this selected sub-samples in $\cI$, i.e.
$$
\hat\lambda({\alpha}):=\inf_{\lambda}\left\{\lambda: \frac{1}{|\cI|} \sum_{i\in\cI}\mathbf{1}_{\{(y_i-\hat m(x_i))^2>\lambda(\hat f_\pre(x_i)+\delta)\}}\leq \alpha\right\} \,.
$$
The choice of $\cI$ is to avoid some \emph{extreme observations} in $\cD_2$; for example if $(y_i - \hat m(x_i))^2$ is much larger than $\hat f_\pre(x_i) + \delta$ for some of the observations in $\cD_2$, then choosing $\hat\lambda({\alpha})$ based on the entire samples $\cD_2$ may become unstable yielding a very large $\hat \lambda(\alpha)$, whereas intuitively we expect $0 \le \hat \lambda(\alpha) \le 1$. To avoid this pitfall, we restrict ourselves to the observations in $\cI$. In the previous sections, we have shown that with high probability $(Y - \hat m(X))^2 \le \hat f_{\pre}(X) + \delta$ and therefore $\cI$ is close to $\cD_2$.
% where the constant $\hat \delta_\alpha$ is chosen based on a sub-sample $\cI \subseteq \{1, \dots, m\}$ of the second part of the data. 
\\\\
\noindent
Define $p_{n_1} $ to be the following conditional probability:
\begin{equation}
\label{eq:def_pn1}
p_{n_1} =  \bbP\left((Y - \hat m(X))^2 \le \hat f_\pre(X) + \delta \,\big|\,\cD_1\right)  \,.
\end{equation}
In the previous coverage guarantees and Remark \ref{rmk:cov}, we have discussed the choice of $\delta$ and quantified the rate of convergence of $p_{n_1} $ to $1$ in terms of the complexity of the corresponding function class. 
The subsequent theorem establishes the $(1-\alpha)$ level coverage guarantee of our final (shrinked) estimator: 
\begin{theorem}
\label{thm:adjust}
Under some weak continuity condition (i.e. $(Y - \hat m(X))/\hat f_\pre(X)$ has continuous distribution), with probability at least $1-2{n_2}^{-10 }$, we have
\begin{align*}
\left|\bbP\left(Y\in\widehat{\mathrm{PI}}_{(1-\alpha)}(X,\delta)\,\big|\,\cD\right)-(1-\alpha)\right|\leq 2(1-p_{n_1} )+c\cdot\sqrt{\frac{\log n_2}{n_2}} \,
\end{align*}
for some absolute constants $c$ and large enough $n_1,n_2$. Here $\widehat{\mathrm{PI}}_{(1-\alpha)}(X,\delta)$ follows the same definition as in Algorithm \ref{algo:main_algo} and \ref{algo:main_algo_twostep}.
\end{theorem}
The coverage guarantee of the preceding theorem relies on $p_{n_1}$, which can be further bounded via Theorem \ref{thm:one_step}. In particular, if both $\cM$ and $\cF$ are VC class of functions (e.g. for model aggregation type problems), then by Theorem \ref{thm:one_step}, with probability at least $1 - n^{-c_1}_1$, we have
\begin{align*}
\textstyle
p_{n_1} \geq 1-c\sqrt{\frac{\log n_1}{n_1}}\,.
\end{align*}
% we have
% $$
% \left|\bbP\left(Y\in\widehat{\mathrm{PI}}_{100(1-\alpha)\%}(X,0)\,\big|\,\cD_1, \cD_2\right)-(1-\alpha)\right|\lesssim \sqrt{\frac{\log n_1}{n_1}}+\sqrt{\frac{\log n_2}{n_2}}.
% $$
For non-VC class, choosing $\delta=c_2(\log n_1)^{-1/2}$ yields with probability greater than $1-n_1^{-c_1}$:  
$$
p_{n_1} \geq 1-c_3\frac{\sqrt{\log n_1}}{n_1^{\min({1/ 2}, {1/\alpha})}}. 
$$
Thus, for the Donsker class, with probability at least $1-2{n_2}^{-10}-n^{-c_1}_1$, we have
\begin{align*}
\left|\bbP\left(Y\in\widehat{\mathrm{PI}}_{(1-\alpha)}(X,\delta)\,\big|\,\cD\right)-(1-\alpha)\right|\le C\left(\sqrt{\frac{\log n_1}{n_1}}+\sqrt{\frac{\log n_2}{n_2}}\right) \,,
\end{align*}
for large enough $n_1,n_2$ and some constant $C > 0$. For the case $\alpha>2$, with probability at least $1-2 n^{-10}_2-n^{-c_1}_1$, we have
\begin{align*}
\left|\bbP\left(Y\in\widehat{\mathrm{PI}}_{(1-\alpha)}(X,\delta)\,\big|\,\cD\right)-(1-\alpha)\right|\le C\left(\frac{\sqrt{\log n_1}}{n^{\frac1\alpha}_1}+\sqrt{\frac{\log n_2}{n_2}}\right) \,.
\end{align*}
for large enough $n_1,n_2$ and some constant $C > 0$. 
% If $m$ is chosen to be smaller than $n$, then we recommend to take $\beta < 1/2$ so that $m/n^{2\beta} \ge 1$. 
%\end{remark}
% Recall that the model is given by $Y = m_0(X) + \sqrt{v_0(X)} \xi$. If $\xi\indep X$, then $\hat \delta$ defined in \eqref{choose_delta} is actually approximating $\frac{F^{-1}_{\xi^2}(1-\alpha)}{F^{-1}_{\xi^2}(1)}$. If $\xi\notindep X$, then $\hat \delta$ defined in \eqref{choose_delta} is actually approximating $\max_{x}\frac{F^{-1}_{\xi^2|x}(1-\alpha)}{F^{-1}_{\xi^2|x}(1)}$. \jiawei{Is this true?}

\section{Applications}
\label{sec:applications}
In this section, we present several examples where our theory can be applied to obtain a prediction interval with a small width as well as a desired coverage guarantee.  

\subsection{Aggregation of Predictive Intervals}
\label{sec:model_aggregation}
In this section, we expand upon the example of the model aggregation presented in Section \ref{sec:aggregation}. 
We aim to obtain prediction interval of $Y$ (given $X$) of the form $[m(x) \pm \sqrt{f(x)}]$. Here we are given some estimates $\cM = \{m_1, \dots, m_L\}$ of the mean function and $\cF = \{f_1, \dots, f_K\}$ of the deviation from the mean, and we aggregate these estimates effectively to obtain a prediction interval with adequate coverage with minimal width. Towards that direction, we start with 
linear span of $\cF_0$ and $\cM_0$ defined as:
\begin{align}\label{agg:mf}
\cF :=\left\{f(x)=\sum^K_{j=1}\alpha_j f_j(x)\,\bigg|\, \alpha_j\geq 0 \right\},\ 
\cM:=\left\{m(x)=\sum^L_{j=1}\beta_j m_j(x)\,\bigg|\, \beta_j \in\mathbb{R}\right\}.
\end{align}
One may further truncate the functions in $\cM$ and $\cF$ at a desired level $B_{\cM}$ and $B_{\cF}$. We then optimize over the function classes $\cM$ and $\cF$ to find a suitable linear combination that obtains a small width and adequate coverage by solving \eqref{sample_opt:aggregation}. 
For establishing theoretical guarantees, note that $\cM$ and $\cF$ are linear vector spaces generated by the functions in $\cM_0$ and $\cF_0$ and consequently they are VC class of functions (as a linear subspace spanned by finitely many functions is VC with VC dimension less than or equal to the 2 + dimension of the subspace (see Lemma 2.6.15 of \cite{van1996weak})). Therefore, by Theorem \ref{thm:one_step}, we have the following Corollary of whose proof is deferred to Appendix: 
\begin{corollary}
\label{rc:aug}
We consider Algorithm \ref{algo:main_algo} with $(\cM,\cF)$ defined in \eqref{agg:mf}. With probability at least $1-e^{-t}$, the average squared width and the coverage of the estimated prediction interval \eqref{PI_aggregation} satisfies: 
\begin{align*}
    \bbE[\hat f_\pre(X) \mid \cD] - \bbE[f_0(X)] & \leq \Delta(\cM,\cF) + c \sqrt{\frac{K}{n}} + 4B_{\cF}\sqrt{\frac{t}{2n}} \\    \bbP\left((Y - \hat m(X))^2 \le \hat f_\pre(X) \,\big|\,\cD\right)  & \geq 1-   c \sqrt{\frac{K + L^2}{n}}-c\sqrt{\frac{t}{n}} 
%\left|\bbP\left(Y\in\widehat{\mathrm{PI}}_{(1-\alpha)}(X,\delta)\,\big|\,\cD\right)-(1-\alpha)\right| & \leq 2(1-p_{n_1} )+c\cdot\sqrt{\frac{\log n_2}{n_2}}
\end{align*}
% where $p_{n, 1}$ is same as defined in \eqref{eq:pn1}. 
The coverage guarantee for our final prediction interval $[\hat m(x) \pm \sqrt{\hat \lambda(\alpha)\hat f_\pre(x)}]$ is also immediate from Theorem \ref{thm:adjust}.
\end{corollary}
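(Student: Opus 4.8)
The plan is to apply Theorem~\ref{thm:one_step} directly to the concrete pair $(\cM,\cF)$ of \eqref{agg:mf}, so that the whole argument reduces to three bookkeeping steps: (i) verifying that $\cM$ and $\cF$ are VC-subgraph classes and bounding their dimensions, (ii) turning the VC bound for $\cF$ into a Rademacher bound of order $\sqrt{K/n}$, and (iii) bounding $\VC(\cG)$ for the auxiliary class $\cG$ of \eqref{eq:G_def} by $O(K+L^2)$. Combining a union bound over \eqref{ineq:general_form} and \eqref{ineq:vc_coverage} of Theorem~\ref{thm:one_step} then yields the two displayed inequalities.

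For step (i), I would recall that a (truncated) linear span of $K$ fixed functions is a VC-subgraph class with VC dimension at most $K+2$, by Lemma~2.6.15 of \cite{van1996weak}; the same lemma gives $\VC(\cM)\le L+2$. For step (ii), since $\cF$ is VC-subgraph and uniformly bounded by $B_{\cF}$, Haussler's bound \eqref{eq:Haussler_bound} controls its $L_2(Q)$ covering numbers uniformly in $Q$, i.e. $\log\cN(\epsilon B_{\cF},\cF,L_2(Q))\lesssim \VC(\cF)\bigl(1+\log(1/\epsilon)\bigr)$; feeding this into Dudley's entropy integral (the chaining argument behind Proposition~\ref{prop:rc_cover}) and using that $\int_0^1\sqrt{1+\log(1/u)}\,du$ is an absolute constant gives $\cR_n(\cF)\le c\,B_{\cF}\sqrt{\VC(\cF)/n}\le c\,B_{\cF}\sqrt{(K+2)/n}$. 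Plugging this and the bias term $\Delta(\cM,\cF)$ of \eqref{def:delta_M_F} into \eqref{ineq:general_form} and absorbing $B_{\cF}$ into the constant produces the first inequality of the corollary, with the $4B_{\cF}\sqrt{t/(2n)}$ term carried over verbatim.

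For step (iii), write $m=\sum_j\beta_j m_j$ and expand $g(x,y)=f(x)-y^2+2y\,m(x)-m(x)^2$. Each summand lies in a finite-dimensional linear space of functions of $(x,y)$: $f(x)$ in dimension $\le K$, the term $y^2$ in dimension $1$, $y\,m(x)$ in dimension $\le L$, and the quadratic term $m(x)^2=\sum_{j,k}\beta_j\beta_k\,m_j(x)m_k(x)$ in dimension $\le L(L+1)/2$. Hence $\cG$ is contained in a vector space of dimension $O(K+L^2)$, so by Lemma~2.6.15 of \cite{van1996weak} it is a VC-subgraph class with $\VC(\cG)=O(K+L^2)$. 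Since $\cG$ is VC we may take $\delta=0$, and \eqref{ineq:vc_coverage} gives $\bbP(Y\in\widehat{\mathrm{PI}}_{100\%}(X)\mid\cD)\ge 1-c\sqrt{(K+L^2)/n}-c\sqrt{t/n}$, which is the second inequality.

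The main obstacle is step (iii): one has to notice that expanding the squared residual in the constraint of \eqref{sample_opt:aggregation} introduces the quadratic form $m(x)^2$, which inflates the governing dimension from $L$ to order $L^2$ — this is exactly what makes the coverage term $\sqrt{(K+L^2)/n}$ asymmetric with the width term $\sqrt{K/n}$. A secondary point is that Proposition~\ref{prop:rc_cover} as stated only yields $\cR_n(\cF)=O(n^{-1/2})$ with a constant depending on the entropy exponent; to extract the explicit $\sqrt{K/n}$ dependence one must re-run the chaining estimate while tracking the $\VC(\cF)$ factor inside Haussler's bound rather than treating the exponent $2\VC(\cF)-1$ as a fixed constant.
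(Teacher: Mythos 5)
Your proposal is correct and follows essentially the same route as the paper's proof: invoke Theorem~\ref{thm:one_step}, bound $\VC(\cF)\le K+2$ and $\VC(\cG)\lesssim K+L^2$ via Lemma~2.6.15 of \cite{van1996weak}, and convert these to Rademacher bounds through Haussler's bound and the chaining estimate. Your explicit expansion of $(y-m(x))^2$ in step (iii) simply spells out what the paper asserts when it says $\cG$ lies in the linear span of $\cO(K+L^2)$ functions, and your remark about tracking the $\VC(\cF)$ factor inside the entropy integral is a fair (and correct) refinement of how Proposition~\ref{prop:rc_cover} is cited.
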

Although the above corollary is based on the one-step method, the analysis of the two-step method follows similarly. First, we estimate $m_0$ via finding the optimal linear combination of the functions of $\cM$, and then we use that $\hat m$ to estimate $f_0$ by solving a linear programming problem. Assumption \ref{assm} is satisfied if we include a constant estimator in our candidate estimators. We skip the details here for the sake of brevity.

\subsection{Nonparametric Construction of Predictive Intervals}
% In the previous application, we assume to have some candidate functions at our disposal and then our goal was to find a suitable linear combination with minimal width and coverage guarantee. 
In this application, we consider the general non-parametric setting, i.e. both $m_0$ and $f_0$ (see \eqref{population_opt:known_mean} for the definition) are \emph{smooth} functions, say they belong to a Hölder class or a Sobolev class. Suppose $\cX\times\cY\subseteq [0,1]^{d+1}$. Recall that, a Hölder class of functions from $[0,1]^d$ to $\reals$, denoted by $C_L^k([0, 1]^d)$, is defined as: 
\begin{align}
C_L^k([0, 1]^d) & = \left\{f: [0, 1]^d \to \reals: \max_{\substack{\bl = (l_1, \dots, l_d) \\ \sum_i l_i \le \lfloor k \rfloor}}\sup_x \left|\frac{\partial^{\bl}}{\partial x_1^{l_1} \dots \partial x_d^{l_d}} f(x)\right| \right. \notag \\
\label{eq:def_holder_class} & \hspace{5em} \left. + \max_{\substack{\bl = (l_1, \dots, l_d) \\ \sum_i l_i = \lfloor k \rfloor}}\sup_x \frac{\left|\frac{\partial^{\bl}}{\partial x_1^{l_1} \dots \partial x_d^{l_d}} f(x) - \frac{\partial^{\bl}}{\partial x_1^{l_1} \dots \partial x_d^{l_d}} f(y)\right|}{\|x - y \|^{k - \lfloor k \rfloor}} \le L \right\}
\end{align}
Assume $\cF = C_F^\alpha([0, 1]^d)$ and $\cM = C_M^\beta([0, 1]^d)$ and the function classes are well-specified to avoid potential biases, i.e. $f_0 \in \cF$ and $m_0 \in \cM$. Then, it is immediately that $\Delta(\cM,\cF) = 0$. For one-step procedure, we solve \eqref{sample_opt:general_form} by optimizing over $\cM = C_M^\beta([0, 1]^d)$ and $\cF = C_F^\alpha([0, 1]^d)$ simultaneously. Let $\hat m$ and $\hat f_{\pre}$ be the solution.
% Then the estimated prediction interval is given by 
% \begin{align}
% \label{PI:smooth_func}
% \widehat{\mathrm{PI}}_{100\%}(x,\delta) = \left[\hat m(x)-\sqrt{\hat f(x) + \delta}, \ \hat m(x)+\sqrt{\hat f(x) + \delta}\right].
% \end{align}
Theorem \ref{thm:one_step} yields the following Corollary of whose proof is deferred to Appendix.
\begin{corollary}\label{cor:smooth_func}
We consider Algorithm \ref{algo:main_algo} with $\cF = C_F^\alpha([0, 1]^d)$ and $\cM = C_M^\beta([0, 1]^d)$. Suppose that $f_0 \in \cF$ and $m_0 \in \cM$. With probability at least $1-e^{-t}$, the following holds:
\begin{align*}
\bbE[\hat f_\pre (X)\mid \cD] - \bbE[f_0(X)] \lesssim
\begin{cases}
    \sqrt{\frac{t}{n}} \,, & \text{ if } \alpha > d/2, \\
    \frac{1}{n^{\alpha/d}} + \sqrt{\frac{t}{n}} \,, & \text{ if } \alpha < d/2 \,.
\end{cases}  
\end{align*}
and 
\begin{align}\label{eq:pn_nonparam_1}
\bbP\left((Y - \hat m(X))^2 \le \hat f_\pre(X) + \delta \,\big|\,\cD\right)   & \geq 1-  \frac{c}{\delta\sqrt{n}}-\sqrt{\frac{2t}{n}} \hspace{0.5in} \text{if }0 < \gamma < 2 \,,\\
\label{eq:pn_nonparam_2}
\bbP\left((Y - \hat m(X))^2 \le \hat f_\pre(X) + \delta \,\big|\,\cD\right)   & \geq 1-  \frac{c}{\delta n^{\frac{1}{\gamma}}}-\sqrt{\frac{2t}{n}} \hspace{0.5in} \text{if }\gamma > 2 \,.
\end{align}
where $\gamma := \max\{d/\alpha ,(d+1)/\beta\}$ and $c$ is a positive constant. The coverage guarantee for our final prediction interval $[\hat m(x) \pm \sqrt{\hat \lambda(\alpha)\hat f_\pre(x)}]$ then follows directly from Theorem \ref{thm:adjust}.
\end{corollary}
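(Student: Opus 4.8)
The plan is to obtain Corollary~\ref{cor:smooth_func} as a direct specialization of Theorem~\ref{thm:one_step}, so the work reduces to evaluating its three ingredients — the bias $\Delta(\cM,\cF)$, the complexity $\cR_n(\cF)$, and the complexity $\cR_n(\cG)$ — for the choices $\cF=C_F^\alpha([0,1]^d)$ and $\cM=C_M^\beta([0,1]^d)$. First I would argue that the bias vanishes: since $m_0\in\cM$ we have $f_{m_0}=f_0$, and since $f_0\in\cF$ the pair $(m_0,f_0)$ is feasible in the definition~\eqref{def:delta_M_F}, so $\Delta(\cM,\cF)\le 0$; conversely, any feasible pair has $f\ge f_m\ge f_0$ pointwise by Lemma~\ref{Q_m}, forcing $\Delta(\cM,\cF)\ge 0$. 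Hence $\Delta(\cM,\cF)=0$, and \eqref{ineq:general_form} collapses to $\bbE[\hat f(X)\mid\cD]-\bbE[f_0(X)]\le 4\cR_n(\cF)+4B_\cF\sqrt{t/(2n)}$.

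Next I would bound $\cR_n(\cF)$. The standard metric-entropy estimate for Hölder balls (see e.g.~\cite{kolmogorov1959varepsilon}) gives $\log\cN(\eps,\|\cdot\|_\infty,\cF)\lesssim\eps^{-d/\alpha}$, i.e. exponential growth with exponent $d/\alpha$ in the terminology of Section~\ref{sec:theory}. Proposition~\ref{prop:rc_cover} then yields $\cR_n(\cF)\lesssim n^{-1/2}$ when $d/\alpha\le 2$ (i.e. $\alpha>d/2$) and $\cR_n(\cF)\lesssim n^{-\alpha/d}$ when $d/\alpha>2$ (i.e. $\alpha<d/2$). Plugging this into the width bound above and using $n^{-1/2}\le\sqrt{t/n}$ gives exactly the stated dichotomy for $\bbE[\hat f(X)\mid\cD]-\bbE[f_0(X)]$.

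For the coverage statement, since a Hölder ball over a continuum is not a VC class, $\cG$ is non-VC and only \eqref{ineq:coverage} applies, so it suffices to bound $\cR_n(\cG)$. Expanding $g(x,y)=f(x)-y^2+2y\,m(x)-m(x)^2$, an $L_\infty$-cover of $\cG$ is assembled from an $L_\infty([0,1]^d)$-cover of $\cF$ (log-size $\lesssim\eps^{-d/\alpha}$), an $L_\infty([0,1]^d)$-cover of $\{m^2:m\in\cM\}$ (log-size $\lesssim\eps^{-d/\beta}$), and an $L_\infty([0,1]^{d+1})$-cover of $\{(x,y)\mapsto y\,m(x):m\in\cM\}$; the last class lies inside a $\beta$-Hölder ball on $[0,1]^{d+1}$ because the free coordinate $y$ raises the effective dimension by one, so its log-covering number is $\lesssim\eps^{-(d+1)/\beta}$. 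Therefore $\log\cN(\eps,\|\cdot\|_\infty,\cG)\lesssim\eps^{-\gamma}$ with $\gamma=\max\{d/\alpha,(d+1)/\beta\}$, and Proposition~\ref{prop:rc_cover} gives $\cR_n(\cG)\lesssim n^{-1/2}$ for $0<\gamma\le 2$ and $\cR_n(\cG)\lesssim n^{-1/\gamma}$ for $\gamma>2$. Substituting into \eqref{ineq:coverage} produces \eqref{eq:pn_nonparam_1}--\eqref{eq:pn_nonparam_2}.

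The main obstacle is the entropy bound for $\cG$: one has to be careful that the quadratic dependence on $m$, together with the free argument $y\in[0,1]$, genuinely promotes the $\beta$-smooth part of $g$ to a function on a $(d+1)$-dimensional domain — this is precisely what replaces $d/\beta$ by $(d+1)/\beta$ in $\gamma$, and mishandling it would change the exponent in the coverage rate. Once that is in place, the remainder is routine bookkeeping: combining the three covers, applying Proposition~\ref{prop:rc_cover}, and inserting the resulting rates (with $t$ taken of order $\log n$, as in Remark~\ref{rmk:cov}) into the bounds of Theorem~\ref{thm:one_step}.
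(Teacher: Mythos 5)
Your proposal is correct and follows essentially the same route as the paper: $\Delta(\cM,\cF)=0$ by well-specification, the Hölder metric-entropy bound plus Proposition~\ref{prop:rc_cover} for $\cR_n(\cF)$ and $\cR_n(\cG)$, and then Theorem~\ref{thm:one_step}. The only cosmetic difference is that you cover $\cG$ by expanding $(y-m(x))^2$ into monomials, whereas the paper embeds $\cG$ directly into $C_F^\alpha([0,1]^d)\oplus C_G^\beta([0,1]^{d+1})$; both yield the same exponent $\gamma=\max\{d/\alpha,(d+1)/\beta\}$.
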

% For the two-step method, we may first estimate $\hat m$ by minimizing squared error (or Huber loss for heavy-tailed error). 

For the two-step UTOPIA, the results are similar, with an additional factor of $\|\hat m - m_0\|_\infty$ in Theorem \ref{thm:two_step}, where $\hat m$ is first estimated by non-parametric regression. When $\cM$ is well specified, then by standard techniques of non-parametric regression, one may obtain $\hat m$ such that $\|\hat m - m_0\|_\infty = O_p\left((n/\log{n})^{-\frac{\beta}{2\beta + d}}\right)$.
% $$
% \|\hat m - m_0\|_\infty = O_p\left((n/\log{n})^{-\frac{\beta}{2\beta + d}}\right).
% $$
For more details, see Theorem 4.3 of \cite{belloni2015some}. 
%Last but not the least, we also have a guarantee for $(1 - \alpha)$ coverage for any $\alpha > 0$ which follows immediately from Theorem \ref{thm:adjust} with a bound on $p_{n_1}$ obtained in \eqref{eq:pn_nonparam_1} or \eqref{eq:pn_nonparam_2} (depending on $\gamma$).  
% The estimation error of $\hat m$ follows from the Rademacher complexity of $\cM$ as depicted in Theorem \ref{thm:est_mean}. As we have assumed both $Y$ and $m(\cdot)$ are bounded, by an application of Ledoux-Talagrand contraction inequality, we have: 
% $$
% \cR_n(\cL) \lesssim \cR_n(\cM) \le 
% \begin{cases}   
% Cn^{-\frac{1}{2}}\,, & \text{ if } \beta > d/2, \\
% Cn^{-\frac{\beta}{d}}\,, & \text{ if } \beta < d/2. 
% \end{cases}
% $$
\subsection{Estimation via deep neural network}

In modern statistical applications, often a predictor is estimated via deep neural networks for the ease of implementation and its adequate approximation ability. Estimation of non-parametric mean function under various conditions via fully/sparsely connected neural networks and its optimality has recently been explored in several works including but not limited to \cite{kohler2016nonparametric, bauer2019deep, schmidt2020nonparametric, kohler2021rate, fan2022noise, bhattacharya2023deep}. In our problem, we may also estimate the prediction interval via deep neural networks. 
Consider the two-step UTOPIA, as outlined in Algorithm \ref{algo:main_algo_twostep}. 
We first estimate $m_0$ using the least-squares method. Subsequently, we solve the optimization problem described in \eqref{sample_opt:fix_m}, where $\cF$ comprises deep neural networks.
Setting up the notations, define $\cF$ as the following function class:
\begin{align*}
\textstyle
  \cF_{\mathrm{DNN}}(d, N, L, 1):=\{\text{DNNs with width $N$, depth $L$, input dim $d$ and output dim $1$}\}.   
\end{align*}
Here $\cF_{\DNN}$ is the collection of all fully connected neural networks with at most $L$ number of hidden layers and each layer has at most $N$ neurons. We further truncate the output of the neural network at level $B_{\cF}$. There are several possible choices of activation functions, i.e. ReLU, leaky-ReLU, sigmoid, tanh, etc. 
Here we consider piecewise linear activation functions.
% The complexity of the underlying neural networks also depends on the activation function.  
% Let $\hat f_\pre$ be the solution of \eqref{sample_opt:fix_m}. 

By \cite{bartlett2019nearly} (Theorem 8), $\cF_{\mathrm{DNN}}(d, N, L, 1)$ with piecewise linear activation
functions (e.g. ReLU or leaky-ReLU) is a VC class whose VC dimension is of the order $WL\log{W}$, where $W$ is the total number of parameters of the neural network and $L$ is the total number of hidden layers. 
For fully connected neural networks, we then have $W \sim dN + N^2L$, and consequently VC dimension is $\sim N^2L^2 \log{NL}$ (if $d$ is of order $NL$). 
Thus, we do not need to extend the estimated prediction interval by $\delta$, i.e. we consider the prediction interval of the form $[\hat m(x) \pm \hat \lambda(\alpha)\sqrt{\hat f_\pre(x)}]$. Theorem \ref{thm:two_step} yields the following Corollary of whose proof is deferred to the Appendix:

\begin{corollary}
\label{cor:DNN} 
Assume that $f_0$ belongs to Hölder class as defined in \eqref{eq:def_holder_class} with smoothness $\alpha$, i.e. $f_0 \in \cF = C_F^{\alpha}([0, 1]^d)$. 
We consider the two-step method with $\cF$ being a set of DNNs with width $N$, depth $L$ and piecewise linear activation functions. With probability at least $1-e^{-t}$, the following holds:
\begin{align*}
\textstyle
&\bbE[\hat f_\pre (X)\mid \cD] - \bbE[f_0(X)]  \leq c\left(\frac{NL}{\log{N}\log{L}}\right)^{-\frac{2\alpha}{d}}  \\
& \hspace{10em} + c \frac{NL\sqrt{\log{NL}}}{\sqrt{n}} + c\|\hat m-m_0\|_{\infty}+4B_\cF\sqrt{\frac{t}{2n}},\\
&\bbP\left((Y - \hat m(X))^2 \le \hat f_\pre(X) \,\big|\,\cD\right)   \geq 1-c\sqrt{\frac{N^2L^2\log{NL}}{n}}-c\sqrt{\frac{t}{n}}.
\end{align*}
The coverage guarantee for our final prediction interval $[\hat m(x) \pm \hat \lambda(\alpha)\sqrt{\hat f_\pre(x)}]$ can be obtained directly by applying Theorem \ref{thm:adjust}. 
\end{corollary}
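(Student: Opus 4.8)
The plan is to obtain Corollary~\ref{cor:DNN} directly from Theorem~\ref{thm:two_step}, specialized to the truncated network class $\cF = \cF_{\DNN}(d, N, L, 1)$ with piecewise linear activations. Concretely, I would (i) bound the approximation bias $\Delta(\cF)$ using deep ReLU approximation theory for H\"older functions, (ii) bound $\VC(\cF)$ and hence $\cR_n(\cF)$ using the cited VC-dimension estimate together with Haussler's inequality and Dudley's chaining, and (iii) verify the continuity Assumption~\ref{assm} for this class; the two displayed inequalities then follow by substitution into Theorem~\ref{thm:two_step}.

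For the bias, recall $\Delta(\cF) = \min_{f \in \cF:\, f \ge f_0} \bbE[f(X) - f_0(X)]$. Since $f_0 \in C_F^\alpha([0,1]^d)$, standard deep ReLU approximation theory provides a network $\tilde f$ of width $N$ and depth $L$ with $\|\tilde f - f_0\|_\infty \le c\,(NL/(\log N \log L))^{-2\alpha/d}$. The candidate $f^{\DNN} := \tilde f + \|\tilde f - f_0\|_\infty$ is realized inside $\cF_{\DNN}$ by adjusting the output bias, dominates $f_0$ pointwise, and --- because $\|f_0\|_\infty$ is controlled by its H\"older norm and $B_{\cF}$ is taken with slack above $\|f_0\|_\infty$ --- survives the truncation at level $B_{\cF}$ for $N,L$ large. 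Hence $\Delta(\cF) \le \bbE[f^{\DNN}(X) - f_0(X)] \le 2\|\tilde f - f_0\|_\infty \lesssim (NL/(\log N \log L))^{-2\alpha/d}$, which is the first term of the width bound.

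For the complexity, Theorem~8 of \cite{bartlett2019nearly} gives that $\cF_{\DNN}(d, N, L, 1)$ with piecewise linear activations is a VC class with $\VC(\cF) \lesssim WL\log W$; for fully connected architectures $W \sim dN + N^2 L \sim N^2 L$ when $d \ll NL$, so $\VC(\cF) \lesssim N^2 L^2 \log(NL)$, and plugging this into \eqref{cov:vc_two_step} yields the stated coverage bound. Feeding $V_\cF = \VC(\cF)$ into Haussler's bound \eqref{eq:Haussler_bound} and running the Dudley entropy integral (the polynomial-growth case of Proposition~\ref{prop:rc_cover}) gives $\cR_n(\cF) \lesssim B_{\cF}\sqrt{\VC(\cF)/n} \lesssim NL\sqrt{\log(NL)}/\sqrt n$, the second term. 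Assumption~\ref{assm} holds because adding a small constant $c \in [0,D]$ to the best approximator $f^\star$ only shifts its output bias and stays in $\cF_{\DNN}$ as long as $f^\star + c \le B_{\cF}$, which is guaranteed by the slack in $B_{\cF}$; and the first-step least-squares estimator $\hat m$ is consistent so $\|\hat m - m_0\|_\infty \le c_1 D$ holds with high probability for large $n$. Substituting these bounds into Theorem~\ref{thm:two_step} and merging the failure probabilities (absorbing the factor $2$ into the constants) gives the two claimed inequalities.

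I expect the main obstacle to be the bias step: one must invoke a precise sup-norm approximation theorem yielding the rate $(NL/(\log N\log L))^{-2\alpha/d}$ for H\"older targets and, more delicately, turn a two-sided $L^\infty$-approximant into a feasible \emph{one-sided} approximant ($f^{\DNN} \ge f_0$) that still lies in the truncated class. The constant shift enforcing feasibility is harmless in principle but requires $B_{\cF}$ to be chosen with explicit slack above $\|f_0\|_\infty$, which should be recorded as a standing condition; the complexity and Assumption~\ref{assm} steps are comparatively routine given the cited VC bound and the closure of the network class under additive constants.
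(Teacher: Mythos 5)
Your proposal follows essentially the same route as the paper's proof: bound $\Delta(\cF_{\DNN})$ by lifting a two-sided sup-norm approximant (from the deep ReLU approximation theorem of Lu et al.) into a one-sided one by adding its $L^\infty$ error as a constant shift, bound $\cR_n(\cF_{\DNN})$ via the Bartlett et al.\ VC-dimension estimate $\sim N^2L^2\log(NL)$ together with Haussler's bound and Dudley's integral, and substitute into Theorem~\ref{thm:two_step}. Your explicit attention to the truncation level $B_{\cF}$ needing slack above $\|f_0\|_\infty$ and to verifying Assumption~\ref{assm} is slightly more careful than the paper, which treats these points implicitly, but the argument is the same.
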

Choosing $(N, L)$ such that $NL \sim n^{\frac{d}{2(2\alpha + d)}}$, we obtain the rate $n^{-\frac{\alpha}{2\alpha + d}}$, the standard non-parametric rate (upto $\log$ factor). 
One advantage of using the neural network method is its ability to learn unknown low-dimension composition structures.  If the underlying function is a composition of low-dimensional functions, then the convergence rate is improved to the low-dimensional convergence rate \citep{kohler2021rate,schmidt2020nonparametric,fan2022noise}.

\subsection{Prediction via kernel basis}
\label{RKHS}
In this subsection, we elaborate on \eqref{eq:finite_SDP} from Section \ref{sec:rkhs} that uses a reproducing kernel Hilbert space to estimate $(m, f)$. 
The method in Section \ref{sec:rkhs} can further be extended to the unknown mean case. We here illustrate the one-step method via modifying \eqref{eq:finite_SDP} as follows: 
\begin{equation}
\begin{aligned}
\label{eq:liang_modified_finite_mean}
    \min_{B\in\bbS^{n\times n}, m \in \cM} \ \ & \tr(KBK) \\
    \st \ \ & \left \langle K_{i}, BK_{i}\right \rangle  \ge \left(y_i - m(x_i)\right)^2 \ \forall \ 1 \le i \le n \\
    & \tr(KB) \le r, \ \ B \succeq 0 \,.
\end{aligned}
\end{equation}
% First we estimate $m_0$ by some parametric/non-parametric regression method. For example, if $m_0$ is assumed to belong to a RKHS $\cH_m$ with kernel $K_m$ 
% then we can use kernel ridge regression to estimate $m_0$ as follows: 
% To be specific, let $\cH_f$ (resp. $\cH_m$) denotes the RKHS used to approximate $f_0$ (resp. $m_0$) with the corresponding kernel $K_f$ (resp. $K_m$) and the feature map $\Phi_f$ (resp. $\Phi_m$). First, we estimate $m$ via standard kernel ridge regression: 
% $$
% \hat \balpha = \argmin_{\balpha \in \reals^n} \left[\frac1n \sum_i (Y_i - \sum_j \alpha_j K_m(X_i, X_j))^2 + \lambda_n \balpha^\top \bK_m \balpha\right]
% $$
% where $\bK_m \in \reals^{n \times n}$ with $\bK_{m, ij} = K_m(X_i, X_j)$. Then we set $\hat m(x) = \sum_j \hat \alpha_j K(X_j, x)$. Using this we can modify \eqref{eq:liang_modified_finite} as follows: 
Here the class $\cM$ can be anything a practitioner would like to choose; it can be a suitable non-parametric class (Hölder, Sobolev, Besov, etc.) or a different RKHS with another kernel $\tilde K$ (which can be the same as $K$) or even aggregate various estimate of $m$. As an example, if the practitioner decides to model $m_0$ by an RKHS with kernel $\tilde K$, then we can modify \eqref{eq:liang_modified_finite_mean} as: 
\begin{equation}
\begin{aligned}
\textstyle
    \min_{B, \bgamma \in \reals^n} \ \ & \tr(KBK) \\
    \st \ \ & \left \langle K_{i}, BK_{i}\right \rangle  \ge \left(y_i - \sum_{j = 1}^n \tilde K(x_i, x_j) \gamma_j\right)^2 \ \forall \ 1 \le i \le n \\
    & \tr(KB) \le r, \ \ B \succeq 0 \,.
\end{aligned}
\end{equation}
Similarly, if $\cM = \{m_1, \dots, m_L\}$ is a collection of some estimates of $m_0$, then one may aggregate them: 
\begin{equation}
\begin{aligned}
\textstyle
    \min_{B, \bgamma \in \reals^L} \ \ & \tr(KBK) \\
    \st \ \ & \left \langle K_{i}, BK_{i}\right \rangle  \ge \left(Y_i - \sum_{j = 1}^L m_j(X_i) \gamma_j\right)^2 \ \forall \ 1 \le i \le n \\
    & \tr(KB) \le r, \ \ B \succeq 0 \,.
\end{aligned}
\end{equation}
Let $\hat B$ and $\hat m$ be the solution of \eqref{eq:liang_modified_finite_mean}. 
We then define $\hat f_\pre(x):=\langle K_x, \hat BK_x \rangle$ with $K_x\in\bbR^n$ and its $i^{th}$ element $K_{x,i}=K(x_i,x)$. The following corollary yields a theoretical guarantee for the coverage and width of $(\hat m, \hat f_\pre)$: 
\begin{corollary}
    \label{cor:rkhs}
We consider Algorithm \ref{algo:main_algo} with $\cF = \{\langle \Phi(x), \cA(\Phi)(x)\rangle_\cH: \|\cA\|_\star \le r, \cA\succeq 0\}$. Suppose that $\sup_x K(x, x) \le b$. Let $(\hat m, \hat f_\pre)$ be obtained by solving \eqref{eq:liang_modified_finite_mean}. Then we have with probability $\ge 1 - e^{-t}$: 
$$
\textstyle
\bbE[\hat f_\pre(X) \mid \cD] - \bbE[f_0(X)] \leq \Delta(\cM,\cF) + \frac{16r\sqrt{b}}{\sqrt{n}}\sqrt{\bbE\left[K(X, X)\right]}  + 4rb\sqrt{\frac{t}{2n}}\,.
$$
Furthermore, we have
$$
\bbP\left((Y - \hat m(X))^2 \le \hat f_\pre(X) + \delta \,\big|\,\cD\right)  \geq 1-  \frac{2}{\delta}\cdot\left(\cR_n(\cM) + \frac{16r\sqrt{b}}{\sqrt{n}}\sqrt{\bbE\left[K(X, X)\right]}\right)-\sqrt{\frac{2t}{n}} \,
$$
for any $\delta>0$. The coverage guarantee for our final prediction interval $[\hat m(x) \pm \hat \lambda(\alpha)\sqrt{\hat f_\pre(x)}]$ can be obtained from a simple application of Theorem \ref{thm:adjust}. 
\end{corollary}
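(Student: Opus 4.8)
The plan is to derive this as a corollary of Theorem~\ref{thm:one_step}: once we (i) exhibit a valid uniform bound $B_\cF$ for the quadratic-form RKHS class $\cF=\{x\mapsto\langle\Phi(x),\cA(\Phi)(x)\rangle_\cH:\|\cA\|_\star\le r,\ \cA\succeq0\}$ and (ii) bound its Rademacher complexity $\cR_n(\cF)$ and that of $\cG$ from \eqref{eq:G_def}, the result falls out by substitution into \eqref{ineq:general_form} and \eqref{ineq:coverage}. First I would record the uniform bound: for $f\in\cF$, $|f(x)|\le\|\cA\|_{\mathrm{op}}\,\|\Phi(x)\|_\cH^2\le\|\cA\|_\star\,K(x,x)\le rb$, so one may take $B_\cF=rb$, turning the last term $4B_\cF\sqrt{t/2n}$ of Theorem~\ref{thm:one_step} into $4rb\sqrt{t/2n}$.

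The heart of the argument is bounding $\cR_n(\cF)$. I would use the bilinear identity $f(x)=\langle\cA,\Phi(x)\otimes\Phi(x)\rangle_{\mathrm{HS}}$, so that $\frac1n\sum_i\eps_i f(X_i)=\langle\cA,\ \frac1n\sum_i\eps_i\,\Phi(X_i)\otimes\Phi(X_i)\rangle_{\mathrm{HS}}$, and then take the supremum over $\{\|\cA\|_\star\le r,\ \cA\succeq0\}$; by the duality between the nuclear and operator norms (dropping the PSD constraint only enlarges the feasible set) this is at most $r\,\|\frac1n\sum_i\eps_i\Phi(X_i)\otimes\Phi(X_i)\|_{\mathrm{op}}$. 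Next I would dominate the operator norm crudely by the Hilbert-Schmidt norm, apply Jensen, and exploit that the $\eps_i$ are independent and mean-zero (so cross terms vanish):
$$
\cR_n(\cF)\ \le\ r\,\bbE\Big\|\tfrac1n{\textstyle\sum_i}\eps_i\,\Phi(X_i)\otimes\Phi(X_i)\Big\|_{\mathrm{op}}\ \le\ r\sqrt{\tfrac1{n^2}{\textstyle\sum_i}\bbE\|\Phi(X_i)\otimes\Phi(X_i)\|_{\mathrm{HS}}^2}\ =\ \tfrac{r}{\sqrt n}\sqrt{\bbE[K(X,X)^2]}\ \le\ \tfrac{r\sqrt b}{\sqrt n}\sqrt{\bbE[K(X,X)]},
$$
using $\|\Phi(x)\otimes\Phi(x)\|_{\mathrm{HS}}^2=K(x,x)^2$ and $K(x,x)^2\le b\,K(x,x)$. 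Plugging this into \eqref{ineq:general_form} gives the first display of the corollary, up to the universal constant tracked from Theorem~\ref{thm:one_step}.

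For the coverage bound I would invoke \eqref{ineq:coverage}, which needs $\cR_n(\cG)$. By sub-additivity of Rademacher complexity, $\cR_n(\cG)\le\cR_n(\cF)+\cR_n(\{(x,y)\mapsto(y-m(x))^2:m\in\cM\})$; expanding the square, the $y^2$ term does not depend on $m$ and drops out in expectation, while the remaining part is controlled by the Ledoux-Talagrand contraction inequality, since $t\mapsto(y-t)^2$ is Lipschitz on the bounded range with constant $\le 2(B_\cM+\|Y\|_\infty)$, giving $\cR_n(\{(y-m(x))^2\})\lesssim\cR_n(\cM)$. Combining with the bound on $\cR_n(\cF)$ above and substituting into \eqref{ineq:coverage} yields the claimed coverage inequality. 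One also notes that by the representer theorem of \cite{marteau2020non} the finite-dimensional SDP \eqref{eq:liang_modified_finite_mean} attains the infimum over the infinite-dimensional class, so $\hat f(x)=\langle K_x,\hat B K_x\rangle$ indeed lies in $\cF$ and the width/coverage conclusions of Theorem~\ref{thm:one_step} apply verbatim.

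The main obstacle is exactly the Rademacher-complexity step: one must control the operator norm of the random self-adjoint operator $\frac1n\sum_i\eps_i\Phi(X_i)\otimes\Phi(X_i)$ on a possibly infinite-dimensional Hilbert space, where a matrix-Bernstein-type argument would introduce dimension- or effective-rank-dependent logarithmic factors. Dominating the operator norm by the Hilbert-Schmidt norm sidesteps this in a dimension-free way, at the mild price of temporarily replacing $\sqrt{\bbE[K(X,X)]}$ by $\sqrt{\bbE[K(X,X)^2]}$, which the uniform bound $\sup_xK(x,x)\le b$ then repairs. Everything else is routine bookkeeping layered on top of Theorem~\ref{thm:one_step} and the contraction inequality.
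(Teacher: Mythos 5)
Your proposal is correct and reaches the stated bounds, but the key step — controlling $\cR_n(\cF)$ — is done by a genuinely different route than the paper's. Both arguments begin identically, using nuclear/operator-norm duality to reduce to $r\,\bbE\bigl\|\frac1n\sum_i\eps_i\,\Phi(X_i)\otimes\Phi(X_i)\bigr\|_{\mathrm{op}}$. The paper then rewrites the operator norm variationally as $\sup_{\|h\|_\cH\le1}|\frac1n\sum_i\eps_i h^2(X_i)|$, applies the Ledoux--Talagrand contraction to $h\mapsto h^2$ (Lipschitz constant $2\sqrt b$ on $[-\sqrt b,\sqrt b]$), and finishes with the standard kernel Rademacher computation, arriving at $\cR_n(\cF)\le\frac{4r\sqrt b}{\sqrt n}\sqrt{\bbE[K(X,X)]}$. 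You instead dominate the operator norm by the Hilbert--Schmidt norm, apply Jensen, and kill the cross terms by independence of the $\eps_i$, obtaining $\frac{r}{\sqrt n}\sqrt{\bbE[K(X,X)^2]}\le\frac{r\sqrt b}{\sqrt n}\sqrt{\bbE[K(X,X)]}$ — a factor of $4$ sharper, so the stated $\frac{16r\sqrt b}{\sqrt n}\sqrt{\bbE[K(X,X)]}$ (which is $4\cR_n(\cF)$ with the paper's constant) follows a fortiori. Your route is more elementary (no contraction lemma needed for $\cF$) and dimension-free for the same reason the paper's is; the only price is the intermediate appearance of $\bbE[K(X,X)^2]$, which the hypothesis $\sup_xK(x,x)\le b$ immediately repairs. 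For the coverage claim, your subadditivity-plus-contraction bound on $\cR_n(\cG)$ fills in a step the paper leaves as ``$\cR_n(\cG)\lesssim\cR_n(\cM)+\cR_n(\cF)$''; note that both your argument and the paper's produce a constant (roughly $2(B_\cM+R)$) in front of $\cR_n(\cM)$ that the corollary's display suppresses, so this looseness is shared with the original and not a defect of your write-up. Your remark that the representer theorem puts the SDP solution inside $\cF$ is a detail the paper omits and is worth keeping.
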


\section{Simulation studies}
\label{simulation}
In this section, we present some simulation studies to underscore the applicability of UTOPIA for aggregating prediction intervals with a small average width and adequate coverage guarantee. We consider the following six estimates: 
\begin{itemize}
    \item [(1)]\textbf{Estimator 1}$(f_1)$: A neural network based estimator with depth=1, width=10 that estimates the 0.6 quantile function of $(Y-m_0(X))^2\mid X=x$.
    \item [(2)]\textbf{Estimator 2}$(f_2)$: A fully connected feed forward neural network with depth=2 and width=50 that estimates the 0.7 quantile function of $(Y-m_0(X))^2\mid X=x$.
    \item [(3)]\textbf{Estimator 3}$(f_3)$: A quantile regression forest estimating the 0.8 quantile function of $(Y-m_0(X))^2\mid X=x$.
    \item [(4)]\textbf{Estimator 4}$(f_4)$: A gradient boosting model estimating the 0.9 quantile function of $(Y-m_0(X))^2\mid X=x$.
    \item [(5)]\textbf{Estimator 5}$(f_5)$: 
    A random forest-based estimator of the conditional variances $\var(Y \mid X)$ with 1000 trees and maximum depth = 5. 
    % A neural network based estimator with depth=1, width=10 that estimates $\bbE[(Y-m_0(X))^2\mid X=x]$. \DM{Change}
    \item [(6)]\textbf{Estimator 6}$(f_6)$: The constant function $1$.
\end{itemize}
Here the quantile estimators are obtained by minimizing the corresponding check loss. 
If the mean is known, then we solve \eqref{sample_opt:known_mean}, whereas for unknown mean we resort to the two-step estimation procedure; 
first, we estimate the conditional mean function $m_0$ using a three-layer neural network with 50 neurons in each hidden layer and then solve \eqref{sample_opt:fix_m}. 
For each of the simulation setup, we first generate $n = 3000$ i.i.d. samples $\{x_i,y_i\}^n_{i=1}$ from various data generating processes (described below) . 
We then use $n_{pre}=1000$ data points to estimate the component functions $\{f_i: 1 \le i \le 6\}$ and the conditional mean function $m_0$ (if unknown), $n_{opt}=500$ data points to aggregate the component functions by solving the optimization problem \eqref{sample_opt:known_mean} for the known mean case and \eqref{sample_opt:fix_m} for unknown mean case, and finally $n_{adj}=500$ data points to adjust the interval (Step 3 of Algorithm \ref{algo:main_algo_twostep}) to achieve $(1 - \alpha) = 0.95$ coverage. 
Last but not least, we use the rest $n_t=1000$ data points as the testing data. 
Additionally, we employ a no-split version of UTOPIA in which the same set of $1000$ data points is utilized to learn all the components, aggregate them, and shrink them to obtain the final prediction interval.
We compare UTOPIA with three baseline methods: (1) standard prediction band using linear quantile regression (\textbf{LQR}): we use linear quantile regression using \texttt{QuantileRegressor} function of the \texttt{sklearn} module in python.  
(2) split conformal prediction (\textbf{SplitCF}) (localized version of Algorithm 2 in \cite{lei2018distribution}): we estimate the conditional mean (as described above, with a three-layer neural network with 50 neurons at each layer) and the conditional variance function (same as $f_5$) and set the score function $S(x, y) = |y - \hat m(x)|/\hat \sigma(x)$. We then choose the cutoff based on this estimated score function and a separate hold-out sample to obtain $95\%$ confidence interval. 
(3) semi-definite programming (\textbf{SDP}) method with Gaussian kernel proposed by \cite{liang2021universal}. In the sequel, we describe our three different setups, including (1) a location-scale model with a known mean function, (2) a location-scale model with an unknown mean function, and (3) a non-location-scale model.  
\vspace{-0.1cm}
\subsection{Setup 1} 
\label{sec:setup_1}
In our first simulation setup, we consider a simple location-scale model with the known mean function $m_0$ (which is taken to be $0$ here for simplicity). More precisely we assume $Y_i = \sqrt{v_0(X_i)}\xi_i$ where $X_i \sim {\rm Unif}([-1,1])$, $\xi_i \indep X_i$, $\xi_i\sim {\rm Unif}([-1,1])$ and $v_0(x) = 1 + 25 x^4$.   The results are depicted in Figure~\ref{Simulation1}, Table~\ref{tab:Simulation1_split} and Table~\ref{tab:Simulation1_nosplit}. 

\vspace{-0.7cm}
\begin{figure}[H]
    \centering
    \begin{subfigure}{0.4\textwidth}
        \centering
        \caption{UTOPIA}
        \includegraphics[width=0.75\textwidth]{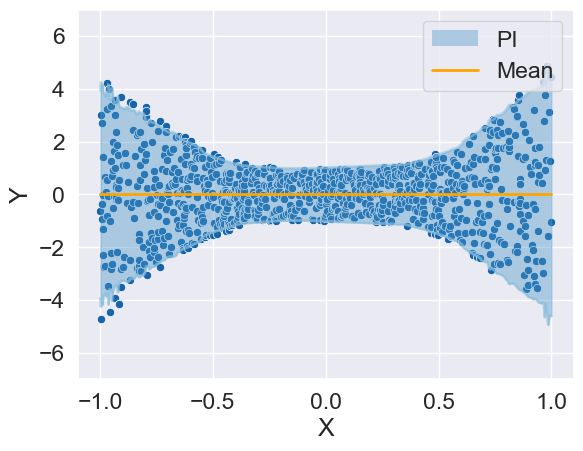}
        %\caption{UTOPIA}
        \label{Simulation1:our}
    \end{subfigure}
    % \hspace*{0.02\textwidth}
    \begin{subfigure}{0.4\textwidth}
        \centering
        \caption{LQR}
    
        \includegraphics[width=0.75\textwidth]{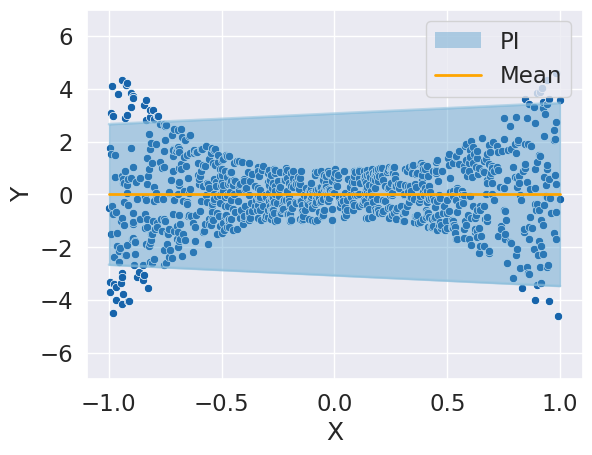}
        %\caption{LQR}
        \label{Simulation1:LQR}
    \end{subfigure}
    % \vspace*{0.02\textwidth}
    \begin{subfigure}{0.4\textwidth}
        \centering
        \caption{SplitCF}
\includegraphics[width=0.75\textwidth]{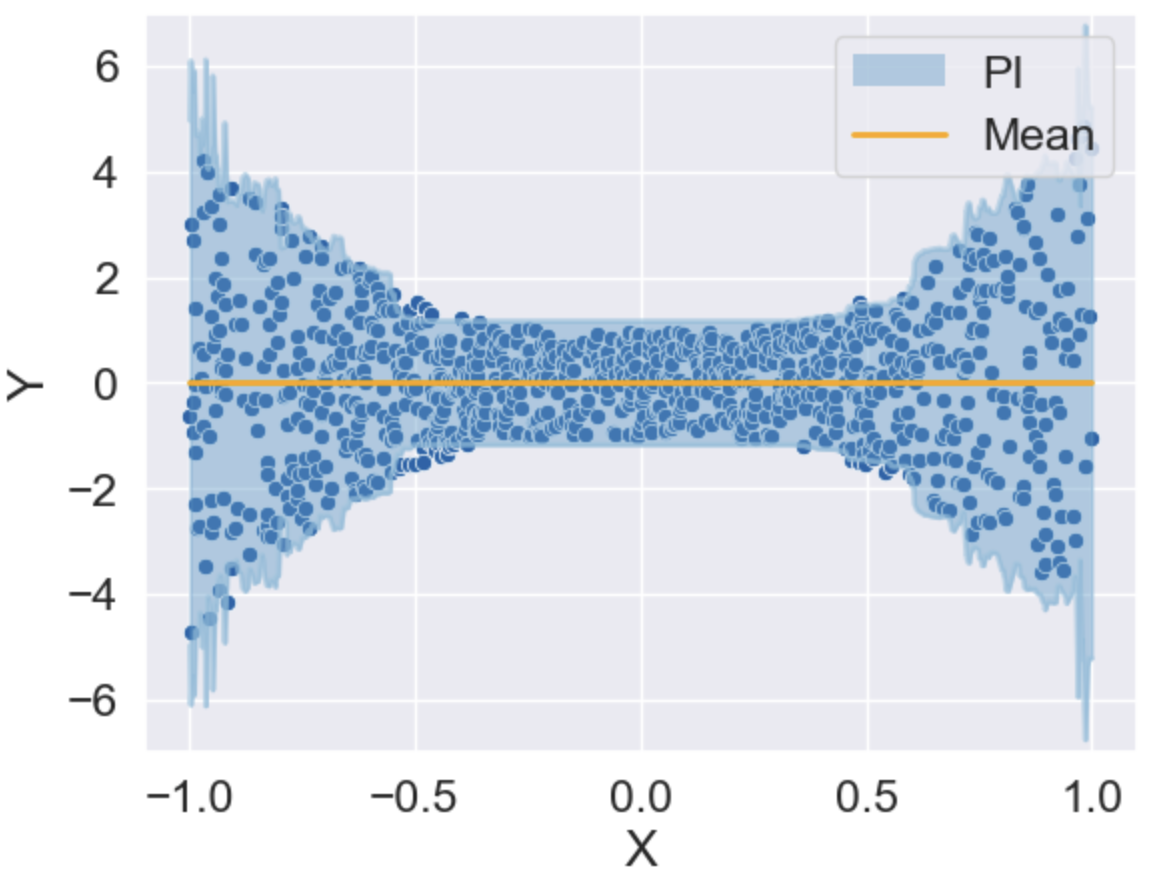}
        %\caption{SplitCF}
\label{Simulation1:SplitCF}
    \end{subfigure}
    % \hspace*{0.02\textwidth}
    \begin{subfigure}{0.4\textwidth}
        \centering
        \caption{SDP}
     
        \includegraphics[width=0.75\textwidth]{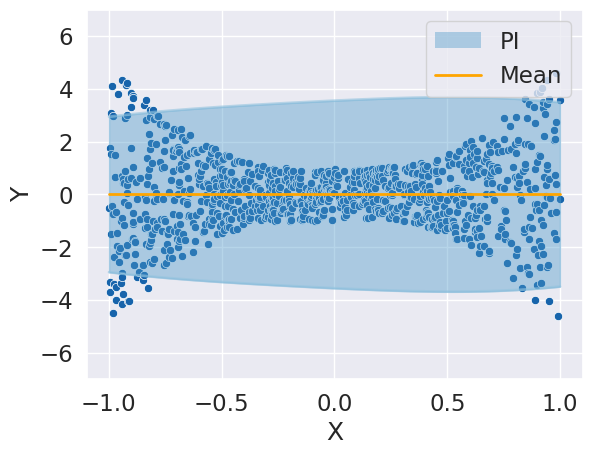}
        %\caption{SDP}
        \label{Simulation1:SDP}
    \end{subfigure}
  % \vspace{-1cm}
  \caption{The \textcolor{blue}{blue dots}, \textcolor{orange}{orange line} and \textcolor{NavyBlue}{light blue band} represent the testing data, the conditional mean function and the prediction band, respectively, for Setup 1.}
    \label{Simulation1}
     % \vspace{-.5cm}
\end{figure} 
\vspace{-1cm}
\begin{table}[H]
\centering
\begin{minipage}{0.9\textwidth}
\centering
\caption{Result for Setup 1 (over 200 Monte Carlo iterations) \b{with} splitting}
\label{tab:Simulation1_split}
% \begin{tabular}{lcccc}
% \toprule
%  & UTOPIA & LQR & SplitCF & SDP \\
% \midrule
% \b{Coverage} (Mean) & $94.4\%$ & $94.8\%$ & $95.0\%$ & $96.8\%$ \\
% \footnotesize Coverage (SD) & \footnotesize ($0.0127$) & \footnotesize ($0.0119$) & \footnotesize ($0.0093$) & \footnotesize ($0.0128$) \\
% \r{Width} (Mean) & {\red $5.0576$} & $9.7091$ & $9.7894$ & $14.2723$\\
% \footnotesize Width (SD) & \footnotesize ($0.3908$) &\footnotesize ($1.0226$) & \footnotesize ($0.7258$) & \footnotesize ($1.7210$)\\
% \bottomrule
% \end{tabular}
\begin{tabular}{lcccc}
\toprule
& UTOPIA & LQR & SplitCF & SDP \\
\midrule
\b{Coverage} (Median) & $94.9\%$ & $94.8\%$ & $95\%$ & $96.9\%$ \\
\footnotesize Coverage (IQR) &\footnotesize ($0.015$) & \footnotesize ($0.015$) & \footnotesize ($0.012$) & \footnotesize ($0.014$) \\
\r{Width} (Median) & {\red $5.10$} & $9.69$ & $6.24$ & $14.32$\\
\footnotesize Width (IQR) & \footnotesize ($0.5$) & \footnotesize ($1.48$) & \footnotesize ($0.68$) & \footnotesize ($2.16$)\\
\bottomrule
\end{tabular}
\end{minipage}
% \end{table}

%\vspace{.2cm}

% \begin{table}[H]
% \centering
\begin{minipage}{0.9\textwidth}
\centering
\caption{Result for Setup 1 (over 200 Monte Carlo iterations) \b{without} splitting}
\label{tab:Simulation1_nosplit}
% \begin{tabular}{lcccc}
% \toprule
%  & UTOPIA & LQR & SplitCF & SDP \\
% \midrule
% \b{Coverage} (Mean) & $93.7\%$ & $94.7\%$ & $95.0\%$ & $96.8\%$ \\
% \footnotesize Coverage (SD) & \footnotesize ($0.0143$) & \footnotesize ($0.0110$) & \footnotesize ($0.0091$) & \footnotesize ($0.0105$) \\
% \r{Width} (Mean) & {\red $4.9760$} & $9.6960$ & $9.8278$ & $14.3325$\\
% \footnotesize Width (SD) & \footnotesize ($0.3056$) &\footnotesize ($0.9836$) & \footnotesize ($0.6713$) & \footnotesize ($1.5473$)\\
% \bottomrule
% \end{tabular}
\begin{tabular}{lcccc}
\toprule
 & UTOPIA & LQR & SplitCF & SDP \\
\midrule
\b{Coverage} (Median) & $94.2\%$ & $94.8\%$ & $94\%$ & $96.9\%$ \\
\footnotesize Coverage (IQR) & \footnotesize ($0.011$) & \footnotesize ($0.012$) & \footnotesize ($0.01$) & \footnotesize ($0.013$) \\
\r{Width} (Median) & {\red $4.97$} & $9.69$ & $5.83$ & $14.44$\\
\footnotesize Width (IQR) & \footnotesize ($0.27$) &\footnotesize ($1.02$) & \footnotesize ($0.42$) & \footnotesize ($2.01$)\\
\bottomrule
\end{tabular}
\end{minipage}
\end{table}

\subsection{Setup 2}
In our second setup for simulation, we extend our previous setup by adding an unknown mean to the location-scale model. More precisely here our model is $Y_i = m_0(X_i) + \sqrt{v_0(X_i)}\xi_i$, where as before $\xi_i\indep X_i$ and both are generated from ${\rm Unif}([-1,1])$. The true mean function $m_0(x) = 1 + 5x^3$ and $v_0(x) = 1 + 25x^4$.  The results are presented in Figure~\ref{Simulation2}, Table~\ref{tab:Simulation2_split} and Table~\ref{tab:Simulation2_nosplit}.

\begin{figure}[H]
    \centering
    \begin{subfigure}{0.4\textwidth}
        \centering
        \caption{UTOPIA}
  
        \includegraphics[width=0.75\textwidth]{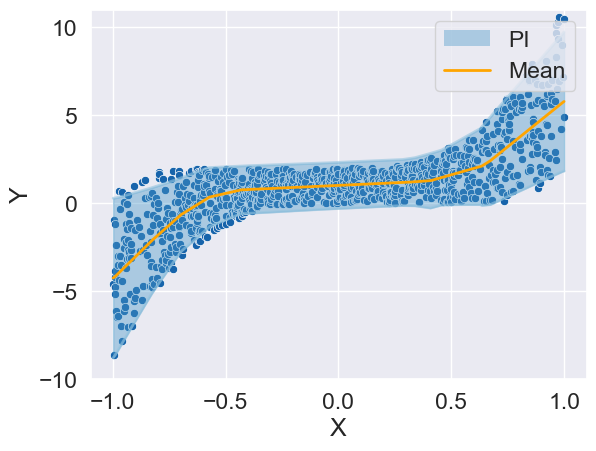}
        %\caption{UTOPIA}
        \label{Simulation2:our}
    \end{subfigure}
    % \hspace*{0.02\textwidth}
    \begin{subfigure}{0.4\textwidth}
        \centering
        \caption{LQR}
    
        \includegraphics[width=0.75\textwidth]{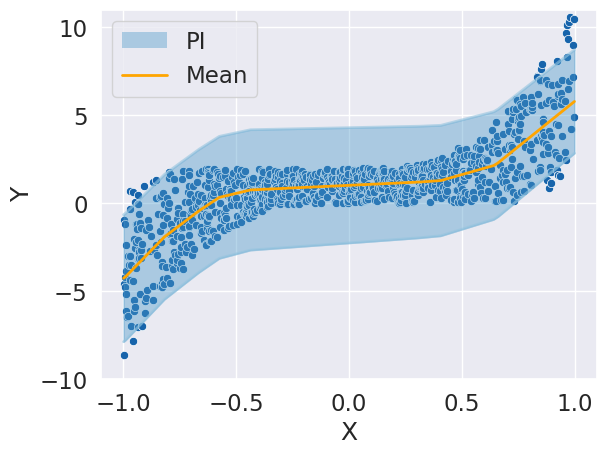}
        %\caption{LQR}
        \label{Simulation2:LQR}
    \end{subfigure}
    % \vspace*{0.02\textwidth}
    \begin{subfigure}{0.4\textwidth}
        \centering
        \caption{SplitCF}
     
        \includegraphics[width=0.75\textwidth]{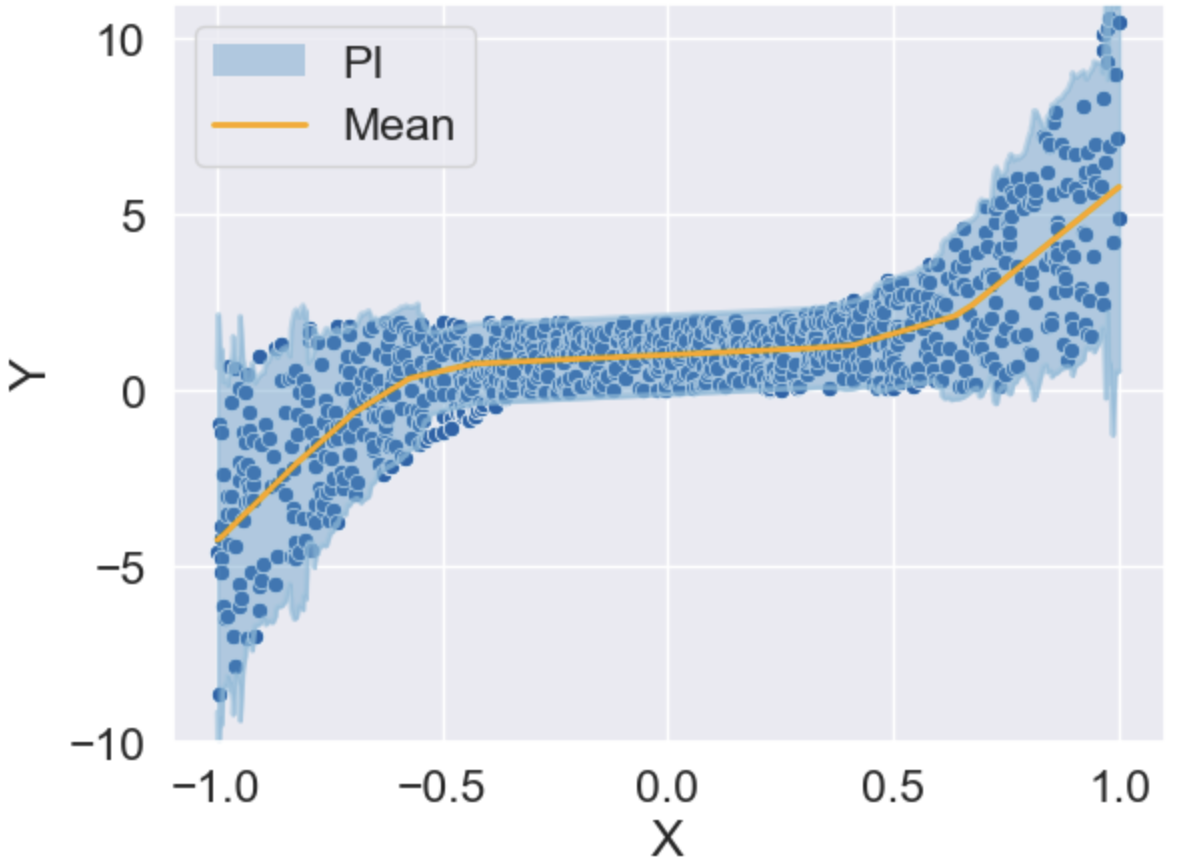}
        %\caption{SplitCF}
        \label{Simulation2:SplitCF}
    \end{subfigure}
    % \hspace*{0.02\textwidth}
    \begin{subfigure}{0.4\textwidth}
        \centering
        \caption{SDP}
    
        \includegraphics[width=0.75\textwidth]{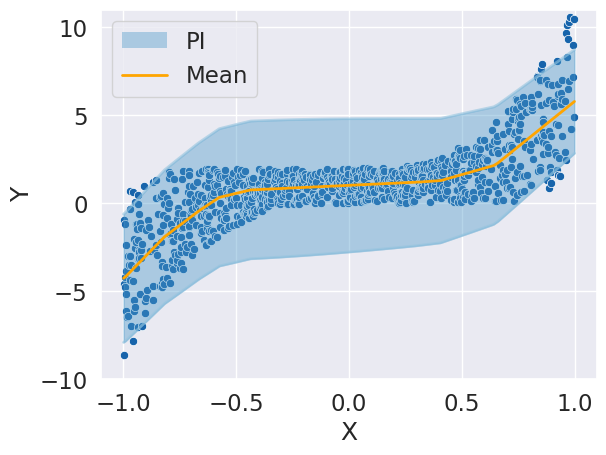}
        %\caption{SDP}
        \label{Simulation2:SDP}
    \end{subfigure}
  %   \vspace{-1cm}
  \caption{The \textcolor{blue}{blue dots}, \textcolor{orange}{orange line} and \textcolor{NavyBlue}{light blue band} represent the testing data, the conditional mean function, and the prediction band, respectively, for Setup 2.}
    \label{Simulation2}
     % \vspace{-.5cm}
\end{figure}  

\vspace{-1cm}

\begin{table}[H]
\centering
\begin{minipage}{0.9\textwidth}
\centering
\caption{Result for Setup 2 (over 200 Monte Carlo iterations) \b{with} splitting}
\label{tab:Simulation2_split}
% \begin{tabular}{lcccc}
% \toprule
% & UTOPIA & LQR & SplitCF & SDP \\
% \midrule
% \b{Coverage} (Mean) & $94.3\%$ & $94.7\%$ & $94.9\%$ & $96.8\%$ \\
% \footnotesize Coverage (SD) &\footnotesize ($0.0127$) & \footnotesize ($0.0140$) & \footnotesize ($0.0098$) & \footnotesize ($0.0127$) \\
% \r{Width} (Mean) & {\red $5.2464$} & $9.8589$ & $9.8433$ & $14.6589$\\
% \footnotesize Width (SD) & \footnotesize ($0.4489$) & \footnotesize ($1.3388$) & \footnotesize ($0.7512$) & \footnotesize ($2.0789$)\\
% \bottomrule
% \end{tabular}
\begin{tabular}{lcccc}
\toprule
& UTOPIA & LQR & SplitCF & SDP \\
\midrule
\b{Coverage} (Median) & $95\%$ & $94.8\%$ & $95.1\%$ & $97\%$ \\
\footnotesize Coverage (IQR) &\footnotesize ($0.015$) & \footnotesize ($0.018$) & \footnotesize ($0.012$) & \footnotesize ($0.013$) \\
\r{Width} (Median) & {\red $5.35$} & $9.88$ & $6.38$ & $14.62$\\
\footnotesize Width (IQR) & \footnotesize ($0.59$) & \footnotesize ($1.67$) & \footnotesize ($0.65$) & \footnotesize ($2.67$)\\
\bottomrule
\end{tabular}
\end{minipage}
% \end{table}
% \begin{table}[H]

\vspace{.2cm}

\begin{minipage}{0.9\textwidth}
\centering
\caption{Result for Setup 2 (over 200 Monte Carlo iterations) \b{without} splitting}
\label{tab:Simulation2_nosplit}
% \begin{tabular}{lcccc}
% \toprule
% & UTOPIA & LQR & SplitCF & SDP \\
% \midrule
% \b{Coverage} (Mean) & $93.1\%$ & $94.6\%$ & $94.8\%$ & $96.7\%$ \\
% \footnotesize Coverage (SD) &\footnotesize ($0.0141$) & \footnotesize ($0.0121$) & \footnotesize ($0.0096$) & \footnotesize ($0.0111$) \\
% \r{Width} (Mean) & {\red $4.9803$} & $9.6123$ & $9.6690$ & $14.4063$\\
% \footnotesize Width (SD) & \footnotesize ($0.3833$) & \footnotesize ($0.9688$) & \footnotesize ($0.6994$) & \footnotesize ($1.7856$)\\
% \bottomrule
% \end{tabular}
\begin{tabular}{lcccc}
\toprule
& UTOPIA & LQR & SplitCF & SDP \\
\midrule
\b{Coverage} (Median) & $93.7\%$ & $94.8\%$ & $93.8\%$ & $96.9\%$ \\
\footnotesize Coverage (IQR) &\footnotesize ($0.013$) & \footnotesize ($0.014$) & \footnotesize ($0.013$) & \footnotesize ($0.013$) \\
\r{Width} (Median) & {\red $5.01$} & $9.68$ & $5.81$ & $14.19$\\
\footnotesize Width (IQR) & \footnotesize ($0.38$) & \footnotesize ($0.97$) & \footnotesize ($0.46$) & \footnotesize ($2.10$)\\
\bottomrule
\end{tabular}
\end{minipage}
\end{table}

\subsection{Setup 3} 
Our final setup is the most general setup, where we move beyond the location-scale model. We generate the inputs $X_i$ from ${\rm Unif}([-1,1])$ and the outputs $Y_i$ from a conditional Gamma distribution. 
More precisely, if $Z \sim \mathrm{Gamma}(k(X), \theta(X))$ with $k(x) =5+x, \theta(x) = 1+\sin x/2$, then $Y \sim Z|Z \leq 2k(X)\theta(X)$.  We summarize the results in Figure~\ref{Simulation3}, Table~\ref{tab:Simulation3_split} and Table~\ref{tab:Simulation3_nosplit}.
\begin{figure}[H]
    \centering
    \begin{subfigure}{0.4\textwidth}
        \centering
        \caption{UTOPIA}

    \includegraphics[width=0.75\textwidth]{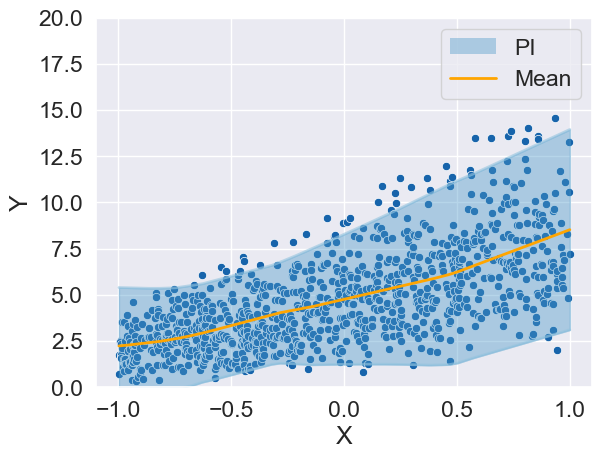}
        \label{Simulation3:our}
    \end{subfigure}
    % \hspace{-.1cm}
    \begin{subfigure}{0.4\textwidth}
        \centering
                \caption{LQR}

        \includegraphics[width=0.75\textwidth]{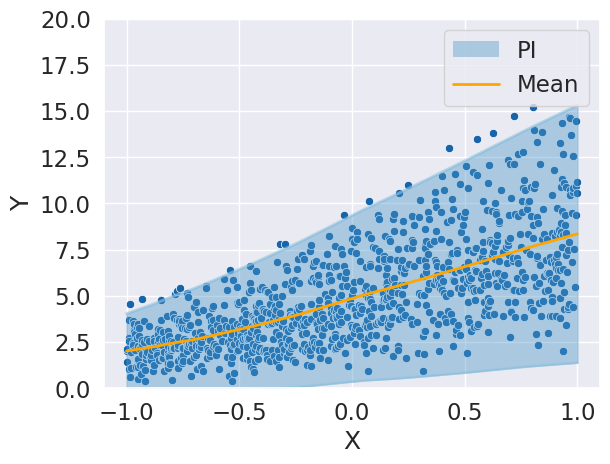}
        \label{Simulation3:LQR}
    \end{subfigure}
    \begin{subfigure}{0.4\textwidth}
        \centering
         \caption{SplitCF}

        \includegraphics[width=0.75\textwidth]{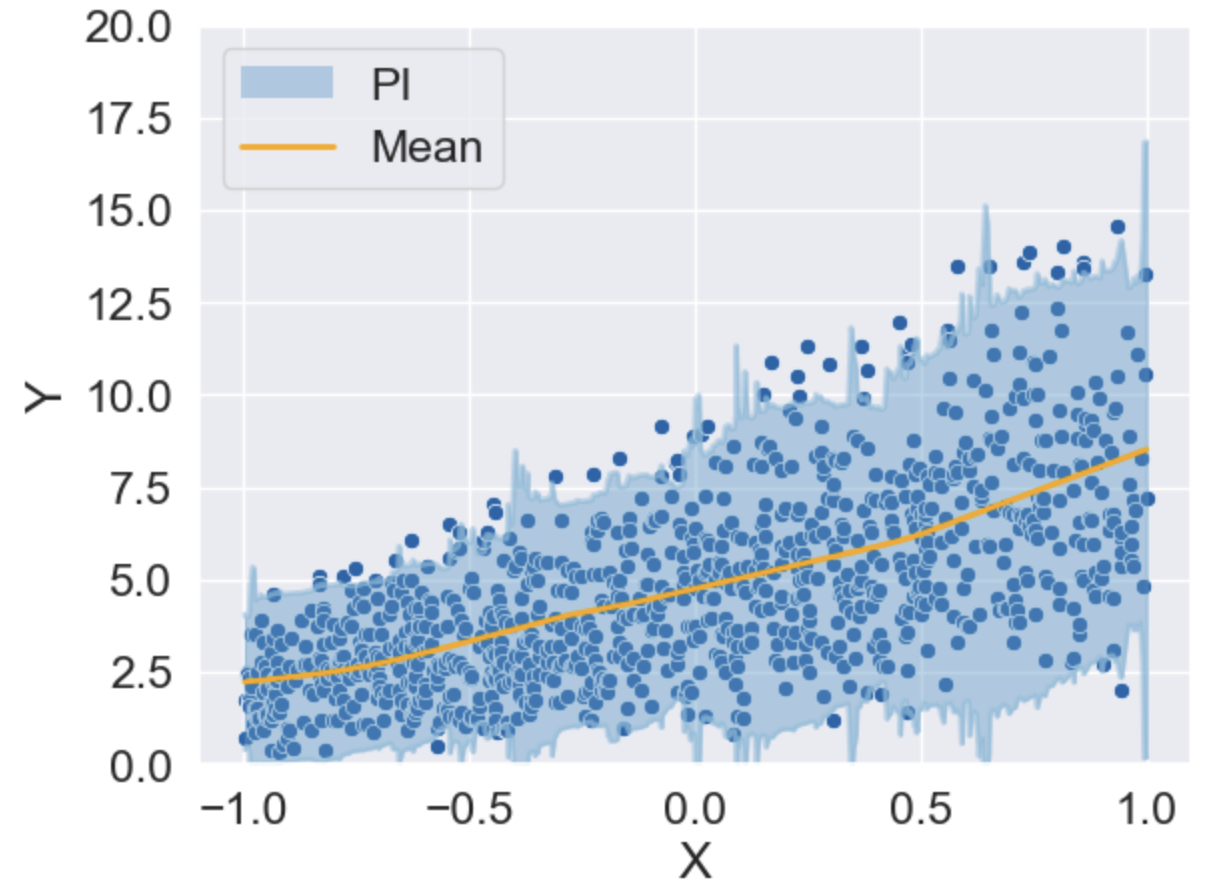}
        \label{Simulation3:SplitCF}
    \end{subfigure}
    \begin{subfigure}{0.4\textwidth}
        \centering
        \caption{SDP}

        \includegraphics[width=0.75\textwidth]{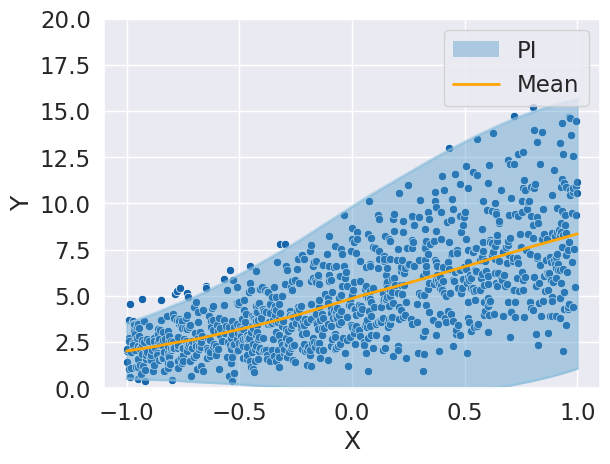}
        \label{Simulation3:SDP}
    \end{subfigure}
  %   \vspace{-1cm}
  \caption{The \textcolor{blue}{blue dots}, \textcolor{orange}{orange line} and \textcolor{NavyBlue}{light blue band} represent the testing data, the conditional mean function, and the prediction band, respectively, for Setup 3.}
    \label{Simulation3}
     % \vspace{-.5cm}
\end{figure}

\vspace{-1cm}

\begin{table}[H]
\centering
\begin{minipage}{0.9\textwidth}
\centering
\caption{Result for Setup 3 (over 200 Monte Carlo iterations) \b{with} splitting}
\label{tab:Simulation3_split}
% \begin{tabular}{lcccc}
% \toprule
%  & UTOPIA & LQR & SplitCF & SDP \\
% \midrule
% \b{Coverage} (Mean) & $94.4\%$ & $97.3\%$ & $95.0\%$ & $96.9\%$ \\
% \footnotesize Coverage (SD) & \footnotesize ($0.0131$) & \footnotesize ($0.0085$) &\footnotesize ($0.0103$) & \footnotesize ($0.0110$) \\
% \r{Width} (Mean) & {\red $15.4425$} & $20.7134$ & $18.3932$ & $20.6058$\\
% \footnotesize Width (SD) & \footnotesize $(1.4235$) & \footnotesize ($1.8928$) & \footnotesize ($1.2896$) & \footnotesize ($3.1993$)\\
% \bottomrule
% \end{tabular}
\begin{tabular}{lcccc}
\toprule
& UTOPIA & LQR & SplitCF & SDP \\
\midrule
\b{Coverage} (Median) & $95\%$ & $97.4\%$ & $95.1\%$ & $96.9\%$ \\
\footnotesize Coverage (IQR) &\footnotesize ($0.017$) & \footnotesize ($0.011$) & \footnotesize ($0.015$) & \footnotesize ($0.015$) \\
\r{Width} (Median) & {\red $16.12$} & $20.73$ & $16.61$ & $19.93$\\
\footnotesize Width (IQR) & \footnotesize ($1.83$) & \footnotesize ($2.76$) & \footnotesize ($1.53$) & \footnotesize ($4.28$)\\
\bottomrule
\end{tabular}
\end{minipage}
% \end{table}
% \begin{table}[H]

\vspace{.2cm}

\begin{minipage}{0.9\textwidth}
\centering
\caption{Result for Setup 3 (over 200 Monte Carlo iterations) \b{without} splitting}
\label{tab:Simulation3_nosplit}
% \begin{tabular}{lcccc}
% \toprule
%  & UTOPIA & LQR & SplitCF & SDP \\
% \midrule
% \b{Coverage} (Mean) & $92.4\%$ & $97.3\%$ & $94.9\%$ & $96.9\%$ \\
% \footnotesize Coverage (SD) & \footnotesize ($0.0127$) & \footnotesize ($0.0082$) &\footnotesize ($0.0104$) & \footnotesize ($0.0116$) \\
% \r{Width} (Mean) & {\red $13.5070$} & $20.7006$ & $18.1318$ & $20.3756$\\
% \footnotesize Width (SD) & \footnotesize $(0.9647$) & \footnotesize ($1.7335$) & \footnotesize ($1.4749$) & \footnotesize ($3.0240$)\\
% \bottomrule
% \end{tabular}
\begin{tabular}{lcccc}
\toprule
 & UTOPIA & LQR & SplitCF & SDP \\
\midrule
\b{Coverage} (Median) & $92.7\%$ & $97.4\%$ & $94.3\%$ & $96.9\%$ \\
\footnotesize Coverage (IQR) & \footnotesize ($0.015$) & \footnotesize ($0.008$) &\footnotesize ($0.011$) & \footnotesize ($0.014$) \\
\r{Width} (Median) & {\red $13.72$} & $20.86$ & $15.46$ & $20.25$\\
\footnotesize Width (IQR) & \footnotesize $(0.96$) & \footnotesize ($1.63$) & \footnotesize ($0.90$) & \footnotesize ($4.22$)\\
\bottomrule
\end{tabular}
\end{minipage}
\end{table}

All the methods yield coverage close to $95\%$, with results notably similar both with and without data splitting. In each configuration, UTOPIA delivers the lowest average width among the four approaches. This is due to the misspecification of the quantile in the cases of LQR, SplitCF, and SDP. In comparison, UTOPIA aggregates various estimators, enabling it to better capture the shape of the quantile function.

\subsection{Comparison with variance-adjusted Split Conformal and robustness of UTOPIA}
In this subsection, we present a comparison of UTOPIA with the variance-adjusted split conformal method. For simplicity of exposition, we adhere to simulation Setup 1 (Subsection \ref{sec:setup_1}). 
As mentioned there, we assume the mean function $\mu(x) = 0$ to be known and estimate the conditional variance $\var(Y \mid X = x)$ from the data. 
For the ease of the readers, we specify the details of implementation below: 
\begin{enumerate}
    \item {\bf Variance-adjusted split conformal:} Let $\hat{\sigma}^2(x)$ be estimate of $\mathrm{var}(Y \mid X = x)$. In the variance-adjusted split conformal method, the data is first split into two parts (denoted as $\mathcal{D}_1$ and $\mathcal{D}_2$). The conditional variance function is estimated using $\mathcal{D}_1$, and the score function is defined as $S(x, y) = |y|/\hat \sigma(x)$. Using $\hat{\sigma}$, the scores are calculated for $\mathcal{D}_2$, i.e., $R_i = S(X_i, Y_i)$ for all $(X_i, Y_i) \in \mathcal{D}_2$. A threshold $\hat{\tau}$ is chosen based on the $(1-\alpha)$ empirical quantile of $\{R_i : (X_i, Y_i) \in \mathcal{D}_2\}$, and the prediction interval is $\pm \hat{\tau} \hat{\sigma}(x)$.
% In Setup 1 of our simulation studies, we assume that $\mu(x) = 0$ is known. Therefore, we set $\hat{\mu}(x) = 0$. 
The conditional variance function $\hat{\sigma}^2(x)$ is estimated by regressing $Y_i^2$ on $X_i$ using a random forest with a maximum depth $j$. In our experiment, we vary $j \in \{3, 4, 5, \dots, 10, 15, 20, 25, 30, 35, 40, 45, 50\}$.

\item {\bf UTOPIA:} The setup of UTOPIA remains almost identical to that specified at the beginning of the section, with the only change being in Estimator 5, which is used for estimating the conditional variance. To compare the performance with the variance-adjusted conformal predictor, here also we use random forest with maximum depth $j$ to regress $(Y - m_0(x))^2 \equiv Y^2$ on $X$, with $j \in \{3, 4, 5, \dots, 10, 15, 20, 25, 30, 35, 40, 45, 50\}$.
\end{enumerate}
Table \ref{tab:comparison} presents the results for coverage and bandwidth (median taken over 50 Monte Carlo iterations) when varying the maximum depth of the random forest regressor from $3$ to $50$. It is evident that both UTOPIA and split conformal achieve close to $95\%$ coverage, while the average width of UTOPIA is always significantly smaller. This experiment also demonstrates the robustness of UTOPIA against the misspecification or instability of some of its components. When the depth of the random forest is large, the estimator of the conditional variance function becomes complex and starts overfitting. Consequently, if the estimate of $\hat{\sigma}(x)$ is poor, the score function in the conformal prediction becomes noisy, leading to a large average width. In contrast, UTOPIA explicitly aims to minimize the width; therefore, even if some of its components result in wide intervals, it assigns smaller weights to those components, maintaining both coverage and average width. 
The last column of Table \ref{tab:comparison} represents the weight UTOPIA puts on the variance estimator ($f_5$); it is apparent from the column that as the depth of the random forest-based estimator increases, the estimator for the conditional variance becomes noisy and starts overfitting, and consequently UTOPIA puts lesser and lesser weight on this estimator. 
This simulation experiment indicates that UTOPIA is robust to the instability of some of its component prediction intervals.
% \begin{table}
% \centering
% \caption{Comparison between UTOPIA and variance-adjusted split conformal prediction band.}
% \label{tab:comparison}
% \begin{tabular}{|p{1.5cm}||p{1.5cm}|p{1.5cm}|p{1.5cm}|p{1.5cm}|  }
%  \hline
%  &  \multicolumn{2}{|c|}{Coverage} & \multicolumn{2}{|c|}{Bandwidth} \\
%  \hline
% Max depth & UTOPIA & Conformal & UTOPIA & Conformal \\
%  \hline
%         3 & 0.945 & 0.9495 & 5.08825632 & 6.16145604 \\ 
%         4 & 0.9485 & 0.952 & 5.03231581 & 6.01689342 \\ 
%         5 & 0.9475 & 0.9515 & 5.05506344 & 5.850694 \\ 
%         6 & 0.9485 & 0.9525 & 5.04976362 & 5.81910035 \\ 
%         7 & 0.948 & 0.951 & 5.04795293 & 5.90038811 \\ 
%         8 & 0.948 & 0.9515 & 5.04641542 & 6.18065777 \\ 
%         9 & 0.9475 & 0.9505 & 5.03116429 & 6.4378577 \\ 
%         10 & 0.9475 & 0.9505 & 5.05777429 & 6.87298755 \\ 
%         50 & 0.948 & 0.9495 & 5.05324654 & 13.19935752 \\
%  \hline
% \end{tabular}
% \end{table}
\begin{table}
\centering
\caption{Comparison between UTOPIA and variance-adjusted split conformal prediction band.}
\label{tab:comparison}
\begin{tabular}{|p{1.5cm}||p{1.3cm}|p{1.3cm}|p{1.3cm}|p{1.3cm}|p{1.5cm}|  }
 \hline
 &  \multicolumn{2}{|c|}{Coverage} & \multicolumn{2}{|c|}{Bandwidth} & \\
 \hline
Max depth & UTOPIA & Conformal & UTOPIA & Conformal & Weight \\
 \hline
 3  & 0.951 & 0.950 & 5.155 & 6.233 & 0.473 \\
        4  & 0.951 & 0.952 & 5.130 & 6.009 & 0.478 \\
        5  & 0.951 & 0.951 & 5.126 & 5.880 & 0.367 \\
        6  & 0.951 & 0.951 & 5.118 & 5.853 & 0.266 \\
        7  & 0.950 & 0.952 & 5.108 & 5.937 & 0.179 \\
        8  & 0.951 & 0.951 & 5.126 & 6.167 & 0.165 \\
        9  & 0.951 & 0.950 & 5.125 & 6.471 & 0.129 \\
        10 & 0.951 & 0.950 & 5.132 & 6.871 & 0.122 \\
        15 & 0.951 & 0.951 & 5.134 & 9.463 & 0.111 \\
        20 & 0.951 & 0.951 & 5.136 & 11.403 & 0.108 \\
        25 & 0.951 & 0.951 & 5.137 & 12.641 & 0.108 \\
        30 & 0.951 & 0.950 & 5.136 & 13.153 & 0.103 \\
        35 & 0.951 & 0.950 & 5.136 & 13.297 & 0.103 \\
        40 & 0.951 & 0.950 & 5.136 & 13.322 & 0.103 \\
        45 & 0.951 & 0.950 & 5.136 & 13.325 & 0.103 \\
        50 & 0.951 & 0.950 & 5.136 & 13.325 & 0.103 \\
 \hline
\end{tabular}
\end{table}
% This is due to the fact that both LQR and SplitCF employ a linear function to fit the quantile, while the actual quantile is non-linear.
% Although our primary focus has been on univariate inputs in the aforementioned setups, it's worth noting that UTOPIA exhibits compelling results even when applied to multivariate configurations. For the detailed multivariate setups and corresponding outcomes, we invite the reader to refer to Appendix \ref{sec:multivariate}.

% \r{Q: No kernel methods to compare with Liang's} 
% \DM{Liang's code is very slow, he only used 50 data to train the model. We can do multivariate X.}
\section{Real data analysis}
\label{sec:real_data_analysis}
In this section, we apply UTOPIA to two datasets: the Fama-French financial data \citep{fama1993common} and the FRED-MD macroeconomic datas \citep{mccracken2016fred}. We compare UTOPIA with two baseline methods: LQR and SplitCF.
% (1) standard prediction band using linear quantile regression (LQR); and (2) split conformal prediction (SplitCF) (localized version of Algorithm 2 in \cite{lei2018distribution}).

\subsection{{\bf Fama-French data}}
\label{FFData}
This dataset contains annual and monthly measurements of four market risks from July 1926 until February 2023. These variables include (a) Risk-free return rate (RF), which refers to the one-month Treasury bill rate or interest rate, (b) Market factor (MKT), indicating the additional return on the market (i.e., the difference between market return and interest rate), (c) Size factor (SMB, Small Minus Big), the mean returns for small and large portfolios based on their market capitalization, and (d) Value factor (HML, High Minus Low), expressing the divergence in returns for value and growth portfolios.

% Following the setup of \cite{liang2021universal}, 
We use MKT (as $x$) to predict the other three variables HML, SMB, and RF (as $y$), though it is known that they are weakly dependent. We follow the two-step estimation procedure, i.e. we first pre-train the condition mean function and the candidate estimators and then find the best linear combination of the candidate estimators (i.e. model aggregation) via solving the optimization problem. Here the conditional mean function is estimated using a two-layer neural network with 10 neurons in the hidden layer. We consider six candidate estimators which are constructed based on: 
\begin{itemize}
    \item [(1)] a two-layer neural network with width $10$ that estimates 0.05 quantile of $Y\mid X=x$; 
    \item [(2)] a three-layer neural network with width $50$ that estimates 0.35 quantile of $Y\mid X=x$; 
    \item [(3)] a quantile regression forest that estimates 0.65 quantiles of $Y\mid X=x$; 
    \item [(4)] a gradient boosting model that estimates 0.95 quantiles of $Y\mid X=x$; 
    \item [(5)] an estimate of the conditional variance $\bbE[(Y-m_0(X))^2\mid X=x]$ using random forest with number of trees 1000 and maximum depth = 5; 
    \item [(6)] a constant $1$.
\end{itemize}
% We test the performance of UTOPIA on the testing data.

We consider standardized monthly data ($n =1160$, normalized to zero mean and unit standard deviation). For each task (i.e. use $x$ = MKT to predict $y$= HML, SMB or RF), we remove the outliers with respect to $y$ using standard interquartile ranges (IQR) method (after which $n=1095, 1126, 1143$ for $y$= HML, SMB, RF, respectively). 
We randomly assigned $80\%$ of the data as the training data (which is used for pretraining estimators, solving the optimization problem, and adjusting the interval based on the user-specified coverage level ($95\%$ in our experiment)) and $20\%$ of the data as the test data (which is used for testing the performance of the obtained prediction band. Our results are shown in the following pictures and tables). We perform this experiment 100 times (each time splitting training and test data randomly) and take the average over these 100 experiments. 
% We then (1) assign the first $80\%$ of the data as the training data, which is used for pretraining estimators, solving the optimization problem, and adjusting the interval based on the user-specified coverage level ($95\%$ in our experiment), and (2) assign the remaining $20\%$ of the data as the testing data, which is used for testing the performance of the obtained prediction band. Our results are shown in the following pictures and tables.

\begin{figure}[H]
  \centering
\setlength{\tabcolsep}{0.2em}
  \renewcommand{\arraystretch}{0.05}
  \begin{tabular}{c c c}
    \textbf{UTOPIA} & \textbf{LQR} & \textbf{SplitCF} \\
    \begin{subfigure}{0.3\textwidth}
      \includegraphics[width=\textwidth]{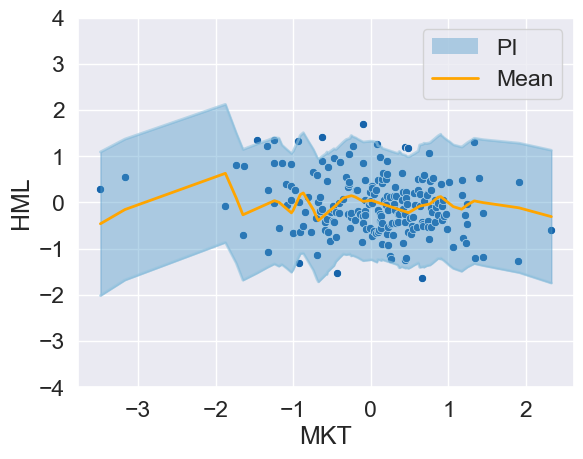}
      % \caption{}
      \label{FF:ourmethod1}
    \end{subfigure} &
    \begin{subfigure}{0.3\textwidth}
      \includegraphics[width=\textwidth]{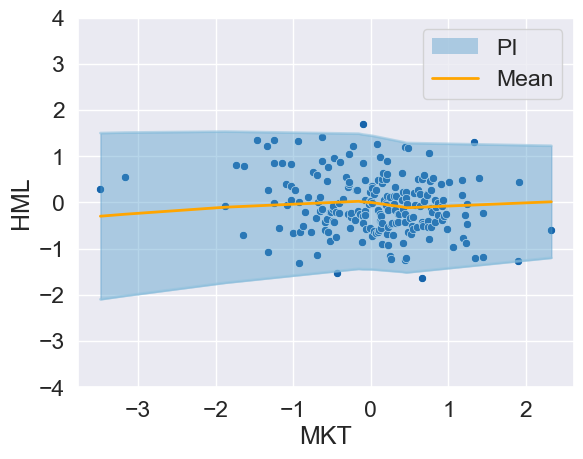}
      % \caption{}
      \label{FF:lqr1}
    \end{subfigure} &
    \begin{subfigure}{0.3\textwidth}
      \includegraphics[width=\textwidth]{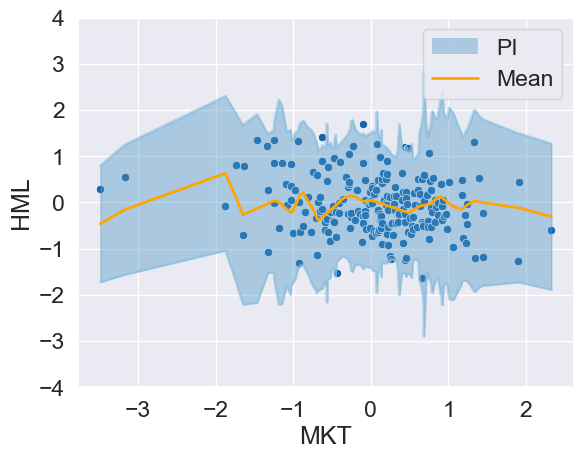}
      % \caption{}
      \label{FF:splitcf1}
    \end{subfigure} \\[-6pt]
    \begin{subfigure}{0.3\textwidth}
      \includegraphics[width=\textwidth]{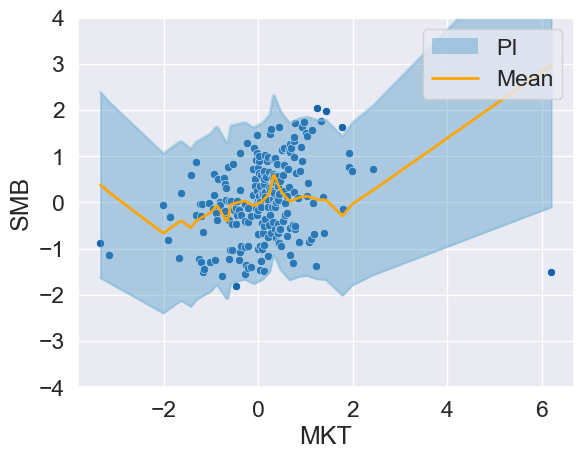}
      % \caption{}
      \label{FF:ourmethod2}
    \end{subfigure} &
    \begin{subfigure}{0.3\textwidth}
      \includegraphics[width=\textwidth]{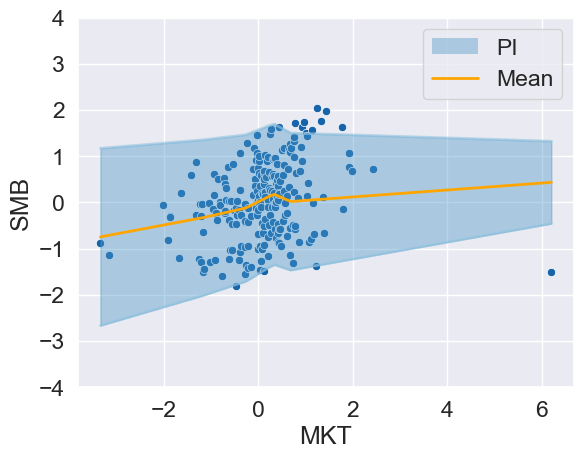}
      % \caption{}
      \label{FF:lqr2}
    \end{subfigure} &
    \begin{subfigure}{0.3\textwidth}
      \includegraphics[width=\textwidth]{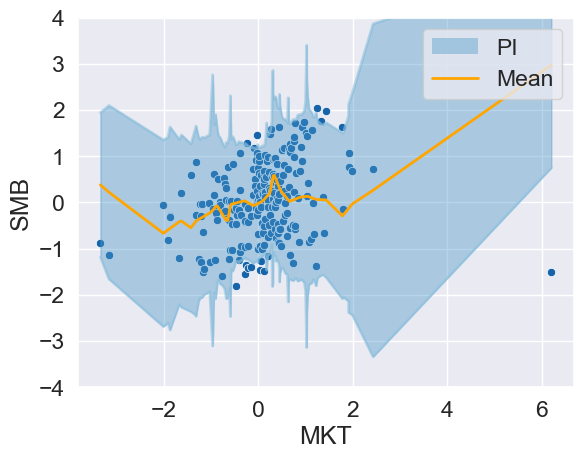}
      % \caption{}
      \label{FF:splitcf2}
    \end{subfigure} \\[-6pt]
    \begin{subfigure}{0.3\textwidth}
      \includegraphics[width=\textwidth]{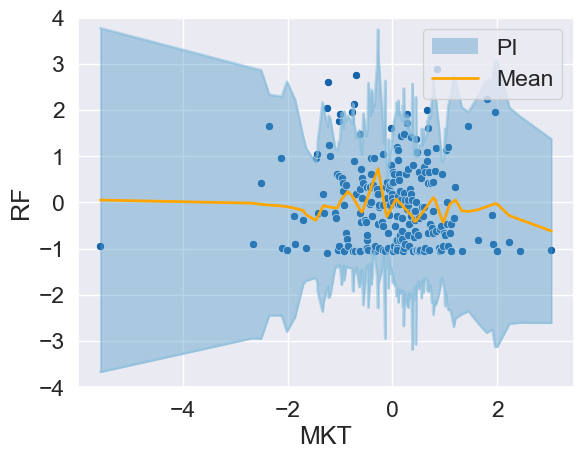}
      % \caption{}
      \label{FF:ourmethod3}
    \end{subfigure} &
    \begin{subfigure}{0.3\textwidth}
      \includegraphics[width=\textwidth]{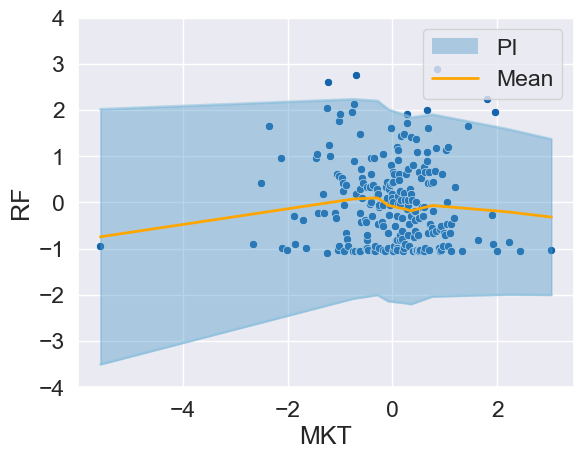}
      % \caption{}
      \label{FF:lqr3}
    \end{subfigure} &
    \begin{subfigure}{0.3\textwidth}
      \includegraphics[width=\textwidth]{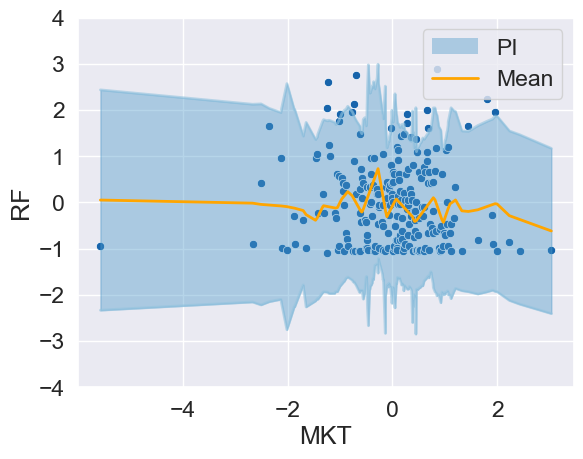}
      % \caption{}
      \label{FF:splitcf3}
    \end{subfigure}
  \end{tabular}
  % \vspace{-15pt}
  % \setlength{\abovecaptionskip}{0.5pt}
  \caption{The \textcolor{blue}{blue dots}, \textcolor{orange}{orange line} and \textcolor{NavyBlue}{light blue band} represent the testing data, the estimated conditional mean function and the prediction band, respectively, for the Fama-French dataset.}
  \label{FF}
\end{figure}

\vspace{-1cm}

\begin{table}[H]
\fontsize{9pt}{9pt}\selectfont
\centering
\caption{Coverage and Average Width for Fama-French Data}
\label{tab:FF}
\begin{tabular}{lcc|cc|cc}
\toprule
 & \multicolumn{2}{c}{\textbf{HML}} & \multicolumn{2}{c}{\textbf{SMB}} & \multicolumn{2}{c}{\textbf{RF}} \\
\cmidrule(lr){2-3} \cmidrule(lr){4-5} \cmidrule(lr){6-7}
 & \textbf{Coverage} & \textbf{Width} & \textbf{Coverage} & \textbf{Width} & \textbf{Coverage} & \textbf{Width} \\
\midrule
UTOPIA & $94.8\%$ & {\red $1.87$} & $94.6\%$ & {\red $2.26$} & $94.8\%$ & {\red $3.13$} \\
LQR        & $95.8\%$ & $1.98$& $95.4\%$ & $2.34$ & $97.5\%$ & $4.25$\\
SplitCF    & $95.2\%$ & $1.89$ & $95\%$ & $2.42$ & $95\%$ & $3.22$ \\
\bottomrule
\end{tabular}
\end{table}

% \begin{table}[H]
% \centering
% \caption{Coverage and Average width}
% \label{tab:FFData}
% \begin{tabular}{lcc}
% \toprule
%  & Coverage & Average width \\
% \midrule
% HML & $96.02\%$ & $1.6318$ \\
% SMB & $97.09\%$ & $2.2298$ \\
% RF  & $95.19\%$ & $2.8124$ \\
% \bottomrule
% \end{tabular}
% \end{table}

% Figure \ref{FF} and Table \ref{tab:FF} summarizes the result of our experiments. In all three prediction problems, UTOPIA achieves desired coverage $(95\%)$ with minimal width compared to quantile regression and split conformal method. The split conformal method typically has higher coverage than us at the cost of larger width, whereas linear quantile regression fails sometimes (e.g. when we predict RF using MKT) due to the nonlinear nature of the quantile function. 
Figure \ref{FF} and Table \ref{tab:FF} summarize the results of our experiments. All these three methods achieve close to the desired coverage $(95\%)$ for three different tasks. In comparison to the LQR and SplitCF, our approach consistently achieves a smaller bandwidth.

\subsection{{\bf FRED-MD data}}

FRED-MD (Federal Reserve Economic Data -- Macro Data) is a large-scale dataset that provides monthly economic and financial data for the United States from January 1959 to January 2023. It contains over 130 monthly macroeconomic time series, spanning a wide range of different economic sectors and sub-sectors. These variables are grouped into 8 groups: (1) output and income, (2) labor market, (3) housing, (4) consumption,
orders and inventories, (5) money and credit, (6) interest and exchange rates, (7) prices, and (8) stock
market.

%In this experiment, we choose three variables from different groups, namely: (1) UNRATE (i.e. civilian unemployment rate) from the labor market group (2) HOUST (i.e. housing starts: total new privately owned) from the housing group, and (3) FEDFUNDS (i.e. effective federal funds rate) from the interest and exchange rates group. 
In this experiment, we used one variable from the group 'Output and Income,' namely (1) RPI (Real Personal income), and another variable from the group 'Labor Market,' namely (2) USGOVT (Government employees) as our response variable $Y$. 
%and (3) FEDFUNDS (i.e. effective federal funds rate) from the interest and exchange rates group. 
We aim to predict the above two variables using the rest of the variables.
For data pre-processing we mostly follow the experimental setup of \cite[Appendix C]{fan2022factor}, which, in turn, is based on \cite{mccracken2016fred}. Briefly speaking, this first transforms each variable according to the transformation code (namely TCODE) as elaborated in the Appendix of \cite{mccracken2016fred} and then removes missing values. We used the transformed data from the GitHub of \cite{fan2022factor}, namely \verb|transformed_modern.csv|. After the data pre-processing, we have $n = 453$ observations for $p = 126$ variables. 

% We now describe our data pre-processing steps; first, we remove columns with lots of missing values and then we remove rows with at least one missing values \DM{Be more precise}. Next, we normalize each variable with mean zero and standard deviation, after which the sample size is $N=753$ and the dimension $p = 126$. Then, we transform each variables according to the transformation code  
To predict a particular variable from the rest of the variables, we use the FARM-predict technique (e.g., see \cite{zhou2022measuring}). Briefly speaking FARM-predict assumes that the covariates $X$ are generated from a factor model $X = Bf + u$ where $f$ is the underlying factor, $B$ is the loading matrix and $u$ is the idiosyncratic component. The idea is to first estimate $(\hat f, \hat B)$ and set $\hat u = X - \hat B \hat f$. We select a subset of $u$ which has a strong conditional correlation with the residual after fitting on $\hat f$, namely, $Y - \hat \bbE[Y \mid \hat f]$, where we use a linear model to estimate $\bbE[Y \mid f]$. 

In our experiment, we use two factors of $X$ and the top three components of $\hat u$ in terms of the absolute correlation with $Y - \hat \bbE[Y \mid \hat f]$. 
%We use $f \in \reals^2$ and use 3 idiosyncratic components chosen by conditional screening. 
% We then isolate the above three variables from the rest of the data. We conduct principal component analysis (PCA) on the rest of the data and use the first principle component (with $91.6\%$ of variance explained) as $x$. \r{Q: why all univariate and changes the procedure? How about using FARMPredict as a comparison?}
Upon selecting $(\hat f, \hat u)$ we use those as our predictors and employ three methods as before. We also use the same two-step estimation process and data splitting method outlined in Section \ref{FFData}. The results of our experiments are presented in Figure \ref{FRED-MD} and Table \ref{tab:FRED}. 
\begin{figure}[h]
    \centering
    \begin{subfigure}{0.47\textwidth}
        \centering
        \caption{UTOPIA for RPI}
        % \vspace{-.2cm}
        \includegraphics[width=0.8\textwidth]{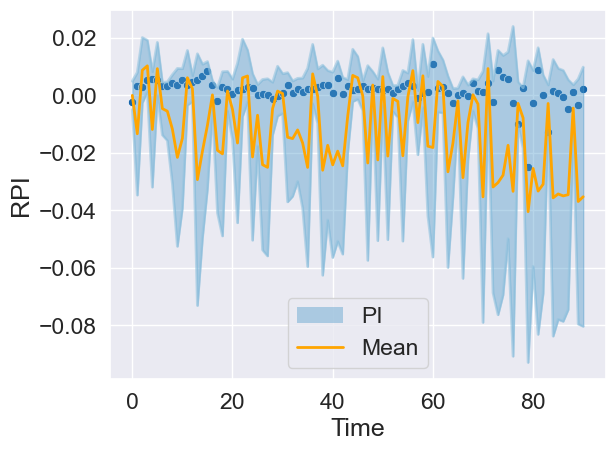}
        \label{FM:R}
    \end{subfigure}
    \hspace*{0.02\textwidth}
    \begin{subfigure}{0.47\textwidth}
        \centering
        \caption{UTOPIA for USGOVT}
        % \vspace{-.2cm}
        \includegraphics[width=0.8\textwidth]{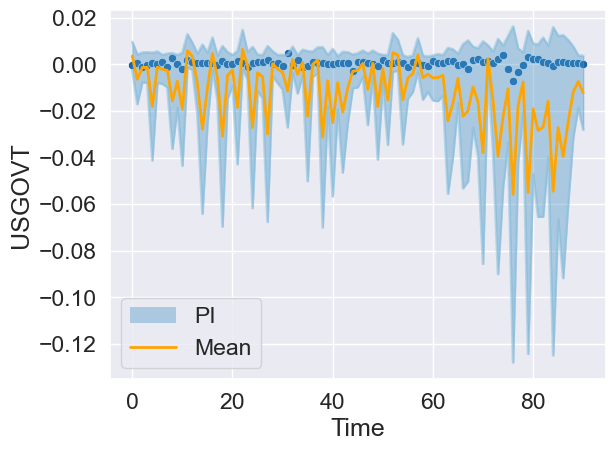}
        \label{FM:D}
    \end{subfigure}
    % \vspace*{0.02\textwidth}
    % \begin{subfigure}{0.47\textwidth}
    %     \centering
    %     \includegraphics[width=0.8\textwidth]{Simulation1 CF 0.945 8.6148.png}
    %     \caption{SplitCF}
    %     \label{Simulation1:SplitCF}
    % \end{subfigure}
    % \hspace*{0.02\textwidth}
    % \begin{subfigure}{0.47\textwidth}
    %     \centering
    %     \includegraphics[width=0.8\textwidth]{Simulation1 Liang_Gaussian 0.966 12.9346.png}
    %     \caption{SDP}
    %     \label{Simulation1:SDP}
    % \end{subfigure}
  \caption{The \textcolor{blue}{blue dots}, \textcolor{orange}{orange line} and \textcolor{NavyBlue}{light blue band} represent the testing data, the conditional mean function and the prediction band, respectively, for the FRED-MD data.}
    \label{FRED-MD}
\end{figure}

% \vspace{-1cm}

\begin{table}[h]
\fontsize{9pt}{9pt}\selectfont
\centering
\caption{Coverage and Average Width for FRED-MD Data}
\label{tab:FRED}
\begin{tabular}{lcc|cc}
\toprule
 & \multicolumn{2}{c}{\textbf{RPI}} & \multicolumn{2}{c}{\textbf{USGOVT}} \\
\cmidrule(lr){2-3} \cmidrule(lr){4-5}
 & \textbf{Coverage} & \textbf{Width} & \textbf{Coverage} & \textbf{Width} \\
\midrule
UTOPIA & $97.8\%$ & $0.00067$ & $98.9\%$ & $0.00059$  \\
LQR        & $59.34\%$ & $0.00032$& $87.9\%$ & $0.00054$\\
SplitCF    & $97.8\%$ & $0.0011$ & $58.24\%$ & $0.00097$  \\
\bottomrule
\end{tabular}
\end{table}

As evident from Table \ref{tab:FRED}, UTOPIA attains the desired coverage ($\approx 95\%$) in both prediction tasks.  
Although the average of LQR's prediction band is slightly smaller than UTOPIA's, its coverage is much less than the desired level. On the other hand, Split conformal achieves good coverage for predicting RPI, albeit with a larger width than UTOPIA.

% Furthermore, in certain tasks (e.g., predicting HOUST and FEDFUNDS), our approach also achieves the highest coverage, as it effectively captures the underlying patterns and trends within the data.

% \begin{figure}[H]
%   \centering
%   \subfloat{
%     \includegraphics[width=0.3\textwidth]{UNRATE 0.9530 1.2331.png}
%     \label{FRED:image1}
%   }\hfill
%   \vspace{3pt} %
%   \subfloat{
%     \includegraphics[width=0.3\textwidth]{HOUST 0.9664 2.3036.png}
%     \label{FRED:image2}
%   }\hfill
%   \vspace{3pt} %
%   \subfloat{
%     \includegraphics[width=0.3\textwidth]{FEDFUNDS 0.9726 0.8606.png}
%     \label{FRED:image3}
%   }\hfill
%   \vspace{3pt} %
%   \caption{The \textcolor{blue}{blue dots}, \textcolor{orange}{orange line} and \textcolor{NavyBlue}{light blue band} represent the testing data, the estimated conditional mean function and the prediction band, respectively.}
%   \label{FRED}
% \end{figure}

% \begin{table}[H]
% \centering
% \caption{Coverage and Average width for UNRATE, HOUST and FEDFUNDS}
% \label{tab:FRED-MD}
% \begin{tabular}{lcc}
% \toprule
% Y-variable & Coverage & Average width \\
% \midrule
% UNRATE & $95.30\%$ & $1.2331$ \\
% HOUST & $96.64\%$ & $2.3036$ \\
% FEDFUNDS  & $97.26\%$ & $0.8606$ \\
% \bottomrule
% \end{tabular}
% \end{table}

% \jiawei{Need more discussions about the results!}

\section{Conclusion and future research}
In this paper, we propose a novel technique, called UTOPIA, to aggregate several candidates 
for the prediction interval of response variable $Y$ given some observable covariates $X$ with a small average width and adequate coverage guarantee. Our techniques rely on simple linear programming which is easy to train and implement. The method is universally applicable with predication interval construction from an elementary basis as one of its specific examples.  In the theoretical framework of our study, we present finite sample guarantees concerning both the average width and the coverage of our prediction interval. 
Additionally, we demonstrate the effectiveness of our methodology through experiments with both synthetic and real-world data sets.  Though our methods are testable in both statistical theory and empirical application, we emphasize the universality and trainability of our method in the acronym UTOPIA.  As illustrated in our simulation studies, the aggregation allows ensemble learning from multiple methods and distributes computation into each individual method.

The problem of minimizing average bandwidth is interesting and requires further in-depth research. We now point to a few research avenues that stem from this project. 
UTOPIA is essentially a two-shot approach, where we first capture the shape of the interval through $\hat f_\pre$ and then shrink it to attain desired coverage. An interesting future work would be to understand whether it is possible to implement this algorithm in one shot, i.e., capturing both the shape and the coverage simultaneously. Furthermore, the theoretical guarantees of our methods rely on the independence among the observations; it would also be interesting to explore how the serial dependence (e.g., the time series data) affects the coverage and width. 
% find an extension of our methodology that can handle dependency, e.g., the time series data. 
Also, in one of our ongoing research, we are working on a suitable modification of UTOPIA under distribution shift, i.e., where the test/future data is generated from a distribution that is different from the historical/training data.  Another interesting extension of our approach would be to obtain a heterogeneous linear combination of the predictors (i.e. the coefficients also depend on $x$), which is expected to be more spatially adaptive than our current method.   Last but not least, it is of significant interest to allow all of the above methods to have correct conditional coverage rather than unconditional coverage.
We leave these directions for future research. 

\section{Acknowledgement}
This work is partially supported by ONR grant N00014-19-1-2120 and the NSF grants DMS-2052926,
DMS-2053832, and DMS-2210833.

\newpage
\bibliographystyle{apalike}
\bibliography{Refs}

\newpage
\appendix 
\section{Appendix: Proofs}

\subsection{Proof of Proposition \ref{prop:rc_cover}}
The proof of the corollary relies on the bound of the Rademacher complexity under various growth conditions. 
\\\\
\noindent\textbf{Polynomial growth and Exponential growth with $0 < \alpha < 2$: }As $\|f\|_\infty \le B_{\cF}$, we can take $F(x) = B_{\cF}$ to be the envelope function. Applying \cite[Lemma 2.14.1]{vaart1997weak} we obtain: 
\begin{align*}
    \cR_n(\cF)&=\bbE_{\eps, X}\left[\sup_{f \in \cF}\frac1n \sum_i \eps_i f(X_i)\right] \\
    & \le \frac{B_{\cF}}{\sqrt{n}}\sup_Q \int_0^1 \sqrt{1 + \log{\cN(\eps B_{\cF}, L_2(Q), \cF)}} \ d\eps \\
    & \le \frac{1}{\sqrt{n}}\int_0^{B_{\cF}} \sqrt{1 + \log{\cN(\eps , L_\infty, \cF)}} \ d\eps \\
    & \le \frac{1}{\sqrt{n}}\int_0^{B_{\cF}} \sqrt{1 + \alpha \log{\frac{1}{\eps}}}\ d\eps  \le C\sqrt{\frac{\alpha}{n}} \,.
\end{align*}
Now for polynomial growth we have: 
$$
\frac{1}{\sqrt{n}}\int_0^{B_{\cF}} \sqrt{1 + \log{\cN(\eps , L_\infty, \cF)}} \ d\eps \le \frac{1}{\sqrt{n}}\int_0^{B_{\cF}} \sqrt{1 + \alpha \log{\frac{1}{\eps}}}\ d\eps  \le C\sqrt{\frac{\alpha}{n}} \,.
$$
And for exponential growth with $0 \le \alpha < 2$: 
$$
\frac{1}{\sqrt{n}}\int_0^{B_{\cF}} \sqrt{1 + \log{\cN(\eps , L_\infty, \cF)}} \ d\eps \le \frac{1}{\sqrt{n}}\int_0^{B_{\cF}} \sqrt{1 + \left(\frac{1}{\eps}\right)^{\alpha}} \ d\eps \le \frac{C}{\sqrt{n}} 
$$
Here the constant $C$ depends on $({B_{\cF}}, \alpha)$. 
\\\\
\noindent\textbf{Exponential growth with $\alpha > 2$: }
For $\alpha > 2$, i.e. for non-Donsker class, the calculation is different. Denote by $\|f\|_n = \sqrt{(1/n)\sum_i f^2(X_i)}$. Fix $\tau > 0$. Let $N_\tau = \cN(\tau, L_\infty, \cF)$ and $\cF_\tau = \{f_1, \dots, f_{N_\tau}\}$ be the center of covers. Therefore for each $f$, there exists $\tau(f) \in \{1, \dots, N_\tau\}$ such that $\|f - f_{\tau(f)}\|_\infty \le \tau$. Therefore, we have: 
\begin{align}
\label{eq:tau_bound_1}
    \cR_n(\cF)&=\bbE_{\eps, X}\left[\sup_{f \in \cF}\frac1n \sum_i \eps_i f(X_i)\right] \notag \\
    & = \bbE_{\eps, X}\left[\sup_{f \in \cF}\frac1n \sum_i \eps_i \left(f(X_i) - f_{\tau(f)}(X_i) + f_{\tau(f)}(X_i)\right)\right] \notag \\
    & \le \bbE_{\eps, X}\left[\sup_{f \in \cF}\frac1n \sum_i \eps_i \left(f(X_i) - f_{\tau(f)}(X_i)\right)\right] + \bbE_{\eps, X}\left[\sup_{f \in \cF_\tau}\frac1n \sum_i \eps_i f(X_i)\right] \notag \\
    & \le \tau + \frac{1}{\sqrt{n}}\bbE_{\eps, X}\left[\max_{f \in \cF_\tau}\frac{1}{\sqrt{n}} \sum_i \eps_i f(X_i)\right] \,.
\end{align}
Now by Dudley's theorem, we have 
\begin{align*}
\bbE_{\eps, X}\left[\max_{f \in \cF_\tau}\frac{1}{\sqrt{n}} \sum_i \eps_i f(X_i)\right] & \le C\left(\int_{\frac{\tau}{4}}^{B_{\cF}} \sqrt{\log{\cN(\eps, L_\infty, \cF_{\tau})}} \ d\eps +\int^{\frac{\tau}{4}}_0 \sqrt{\log{\cN(\eps, L_\infty, \cF_{\tau})}} \ d\eps \right)\\ 
& \le C_1 \left(\int_{\frac{\tau}{4}}^{B_{\cF}} \eps^{-\frac{\alpha}{2}} \ d\eps + \int^{\frac{\tau}{4}}_0 \tau^{-\frac{\alpha}{2}} \ d\eps\right)\\
& \le C_2 \tau^{1 - \frac{\alpha}{2}} \,.
\end{align*}
Here $C_2$ depends on $({B_{\cF}},\alpha)$. Using this in \eqref{eq:tau_bound_1} we obtain: 
$$
\cR_n(\cF)\le \tau + \frac{C_2}{\sqrt{n}}\tau^{1 - \frac{\alpha}{2}}
$$
As this is true for all $\tau > 0$, we have: 
\begin{align*}
\cR_n(\cF) & \le \inf_{\tau > 0}\left\{\tau + \frac{C_2}{\sqrt{n}}\tau^{1 - \frac{\alpha}{2}}\right\} \,.
\end{align*}
Choosing $\tau = n^{-1/\alpha}$, we conclude the proof.

\subsection{Proof of Theorem \ref{thm:known_mean}}
Firstly, we prove the bandwidth guarantee \eqref{ineq:known_mean}, which mainly relies on the following proposition: 
\begin{proposition}\label{emp_process}
Let $\cF$ be a class of $B_{\cF}$-bounded functions and $\{x_i\}^n_{i=1}$ be i.i.d samples drawn from some distribution. Then for any $t>0$, with probability at least $1-2e^{-t}$, we have
\begin{align*}
    \left|\sup_{f\in\cF}\left(\frac1n\sum_if(x_i)-\bbE[f(X)]\right)\right|\leq 2\cR_n(\cF)+2B_{\cF}\sqrt{\frac{t}{2n}}
\end{align*}
\end{proposition}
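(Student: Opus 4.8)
The plan is to prove Proposition \ref{emp_process} by the classical two-step recipe — symmetrization to replace the population mean by Rademacher averages, then a bounded-difference (McDiarmid) concentration inequality — carried out once for $\cF$ and once for $-\cF$ to obtain the two-sided bound on the absolute value of the supremum.

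First I would set
$$
Z \;=\; \sup_{f\in\cF}\left(\frac1n\sum_{i=1}^n f(x_i)-\bbE[f(X)]\right)
$$
and note that replacing a single coordinate $x_i$ by $x_i'$ changes $Z$ by at most $2B_{\cF}/n$, since $|f(x_i)-f(x_i')|\le 2B_{\cF}$ uniformly over $f\in\cF$. McDiarmid's inequality then yields $Z\le \bbE[Z]+B_{\cF}\sqrt{2t/n}$ with probability at least $1-e^{-t}$, and one checks $B_{\cF}\sqrt{2t/n}=2B_{\cF}\sqrt{t/(2n)}$, the constant appearing in the statement. Next I would bound $\bbE[Z]\le 2\cR_n(\cF)$ by the standard ghost-sample argument: write $\bbE[f(X)]=\bbE_{x'}[\frac1n\sum_i f(x_i')]$, move the supremum inside the expectation over the ghost sample by Jensen, insert i.i.d.\ Rademacher signs $\eps_i$ (legitimate because $f(x_i)-f(x_i')$ is symmetric in distribution), and split the resulting sum into two identical pieces to get
$$
\bbE[Z]\;\le\;\bbE\!\left[\sup_{f\in\cF}\frac1n\sum_i \eps_i\big(f(x_i)-f(x_i')\big)\right]\;\le\;2\,\bbE\!\left[\sup_{f\in\cF}\frac1n\sum_i \eps_i f(x_i)\right]\;=\;2\cR_n(\cF).
$$
Combining the two displays gives the one-sided bound $\sup_{f\in\cF}\big(\frac1n\sum_i f(x_i)-\bbE[f(X)]\big)\le 2\cR_n(\cF)+2B_{\cF}\sqrt{t/(2n)}$ on an event of probability at least $1-e^{-t}$.

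To upgrade this to the stated bound on $\big|\sup_{f\in\cF}(\cdots)\big|$, I would apply the identical argument to the class $-\cF=\{-f:f\in\cF\}$, which is again uniformly bounded by $B_{\cF}$ and, since the Rademacher variables are symmetric, satisfies $\cR_n(-\cF)=\cR_n(\cF)$. This produces the same bound on $\sup_{f\in\cF}\big(\bbE[f(X)]-\frac1n\sum_i f(x_i)\big)$ with probability at least $1-e^{-t}$. On the intersection of the two good events (probability at least $1-2e^{-t}$ by a union bound), either $Z\ge 0$, in which case $|Z|=Z$ is controlled by the first bound, or $Z<0$, in which case $|Z|=-Z=\inf_{f\in\cF}\big(\bbE[f(X)]-\frac1n\sum_i f(x_i)\big)\le\sup_{f\in\cF}\big(\bbE[f(X)]-\frac1n\sum_i f(x_i)\big)$ is controlled by the second. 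This yields the claim.

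I do not expect any genuinely hard step here — the argument is entirely standard. The only points warranting a little care are (i) the passage to the absolute value of the supremum, which is precisely why one runs the argument on both $\cF$ and $-\cF$ and why the failure probability is $2e^{-t}$ rather than $e^{-t}$, and (ii) the bookkeeping that the bounded-difference constant $2B_{\cF}/n$ fed into McDiarmid reproduces exactly the coefficient $2B_{\cF}\sqrt{t/(2n)}$ claimed.
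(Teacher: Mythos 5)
Your proposal is correct and follows essentially the same route as the paper's proof: McDiarmid's bounded-difference inequality to concentrate the supremum around its mean (with the same constant $2B_{\cF}/n$ yielding the $2B_{\cF}\sqrt{t/(2n)}$ term), Giné--Zinn symmetrization to bound the mean by $2\cR_n(\cF)$, and a union bound over the two one-sided events applied to $\cF$ and $-\cF$. Your explicit case analysis for passing from the two one-sided bounds to the absolute value of the supremum is slightly more careful than the paper's, which simply states the two one-sided bounds and concludes, but the substance is identical.
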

\begin{proof}[Proof of Proposition \ref{emp_process}]
By McDiarmid's inequality, we know with probability at least $1-e^{-t}$,
\begin{align*}
 \sup_{f\in\cF}\left(\frac1n\sum_if(x_i)-\bbE[f(X)]\right)\leq \bbE_{X} \left[\sup_{f\in\cF}\left(\frac1n\sum_if(x_i)-\bbE[f(X)]\right)\right] + 2B_{\cF}\sqrt{\frac{t}{2n}}.
\end{align*}
By Gin{\'e}-Zinn symmetrization, we further have
\begin{align*}
\bbE_{X} \left[\sup_{f\in\cF}\left(\frac1n\sum_if(x_i)-\bbE[f(X)]\right)\right]\leq 2 \bbE_{\eps, X}\left[\sup_{f \in \cF}\frac1n \sum_i \eps_i f(X_i)\right]=2\cR_n(\cF).
\end{align*}
Consequently, we have with probability at least $1-e^{-t}$,
\begin{align*}
    \sup_{f\in\cF}\left(\frac1n\sum_if(x_i)-\bbE[f(X)]\right)\leq 2\cR_n(\cF)+2B_{\cF}\sqrt{\frac{t}{2n}}.
\end{align*}
Similarly, we have with probability at least $1-e^{-t}$,
\begin{align*}
    \sup_{f\in\cF}\left(\bbE[f(X)]-\frac1n\sum_if(x_i)\right)\leq 2\cR_n(\cF)+2B_{\cF}\sqrt{\frac{t}{2n}}.
\end{align*}
Thus, we prove Proposition \ref{emp_process}.
\end{proof}
With Propostion \ref{emp_process} in hand, we now prove \eqref{ineq:known_mean} in the following. Suppose $f^\star$ is minimizer of \eqref{population_opt:known_mean} over $\cF$, i.e. 
$$
f^\star = \argmin_{f \in \cF: f \geq f_0}\bbE[f(X)]  \,.
$$
Then we have $\Delta(\cF) = \bbE[f^\star(X)] - \bbE[f_0(X)]$. Moreover, $f^{\star}$ is a feasible solution of \eqref{sample_opt:known_mean}. By the optimality of $\hat f$, it holds that
\begin{align*}
 \frac1n\sum_i\hat f_\pre (x_i)\leq\frac1n\sum_i f^{\star}(x_i),   
\end{align*}
which implies
\begin{align*}
    &\bbE[\hat f_\pre(X)\mid \cD]-\bbE[f^{\star}(X)]\\
    &\quad = \bbE[\hat f_\pre(X)\mid \cD]-\frac1n\sum_i\hat f_\pre (x_i)+\frac1n\sum_i\hat f_\pre (x_i)-\frac1n\sum_i f^{\star}(x_i)+\frac1n\sum_i f^{\star}(x_i)-\bbE[f^{\star}(X)]\\
    &\quad\leq \sup_{f\in\cF}\left(\bbE[f(X)]-\frac1n\sum_if(x_i)\right)+\sup_{f\in\cF}\left(\frac1n\sum_if(x_i)-\bbE[f(X)]\right)\\
    &\quad\leq 4\cR_n(\cF)+4B_{\cF}\sqrt{\frac{t}{2n}}.
\end{align*}
Here the last inequality follows from Proposition \ref{emp_process}. Consequently, we have
\begin{align*}
\bbE[\hat f_\pre (X)\mid \cD] -\bbE[f_0(X)] &=\Delta(\cF)+\bbE[\hat f_\pre (X)\mid \cD]-\bbE[f^{\star}(X)]\\
&\leq \Delta(\cF)+4\cR_n(\cF)+4B_{\cF}\sqrt{\frac{t}{2n}}.
\end{align*}
Thus, we prove \eqref{ineq:known_mean}. The coverage guarantee \eqref{cov:vc_known_mean} and \eqref{cov:known_mean} can be seen as a special case of Theorem \ref{thm:one_step} with $\cM=\{m_0\}$. Thus we do not go into details here.

\subsection{Proofs of Theorem \ref{thm:one_step}}
\begin{proof}

Recall that given any guess $m$ of $m_0$, we denote by $f_m(x)$ the supremum of $(Y-m(X))^2\mid X=x$, i.e. 
$$
f_m(x):=\inf_{t}\left\{t\,\big|\, \bbP\left((Y-m(X))^2\leq t\big|X=x\right)=1\right\}
$$ 
and we have $f_0(x)=f_{m_0}(x)$.
% Let $\hat m$ and $\hat f$ be the solution of \eqref{sample_opt:general_form}. 
And we denote 
\begin{align*}
(m^{\star},f^{\star})=\argmin_{m\in\cM,f\in\cF:f\geq f_m}\bbE[f(X)].
\end{align*}
Then we have $\Delta(\cM,\cF) = \bbE[f^\star(X)] - \bbE[f_0(X)]$. Moreover, $(m^{\star},f^{\star})$ is a feasible solution of \eqref{sample_opt:general_form}. By the optimality of $(\hat m, \hat f_{\pre})$, we know that
\begin{align*}
\frac1n\sum^n_{i=1}\hat f_{\pre}(x_i)\leq \frac1n\sum^n_{i=1}f^{\star}(x_i).
\end{align*}
It then holds that
\begin{align*}
&\bbE[\hat f_\pre (X)\mid\cD]-\bbE[f^{\star}(X)]\\
&\quad=\bbE[\hat f_\pre (X)\mid\cD]-\frac1n\sum^n_{i=1}\hat f_\pre(x_i)+\frac1n\sum^n_{i=1}\hat f_\pre(x_i)-\frac1n\sum^n_{i=1}f^{\star}(x_i)+\frac1n\sum^n_{i=1}f^{\star}(x_i)-\bbE[f^{\star}(X)]\\
&\quad\leq \sup_{f\in\cF}\left(\bbE[f(X)]-\frac1n\sum_if(x_i)\right)+\sup_{f\in\cF}\left(\frac1n\sum_if(x_i)-\bbE[f(X)]\right)\\
&\quad\leq 4\cR_n(\cF)+4B_{\cF}\sqrt{\frac{t}{2n}}.
\end{align*}
Here the last inequality follows from Propostion \ref{emp_process}. Consequently, we have
\begin{align*}
\bbE[\hat f_\pre(X)\mid \cD] -\bbE[f_0(X)] &=\Delta(\cM,\cF)+\bbE[\hat f(X)\mid \cD]-\bbE[f^{\star}(X)]\\
&\leq \Delta(\cM,\cF)+4\cR_n(\cF)+4B_{\cF}\sqrt{\frac{t}{2n}}.
\end{align*}
% \subsection{Proof of Theorem \ref{thm:cov_one_step}}
We then prove the coverage guarantees \eqref{ineq:vc_coverage} and \eqref{ineq:coverage} in the following.
\\\\
\noindent\textbf{Case 1: $\cG$ is VC class.}
We denote $\hat g(x,y)=\hat f_\pre(x)-(y-\hat m(x))^2$. It then holds that
\begin{align*}
\bbP\left((Y - \hat m(X))^2 \le \hat f_\pre(X)\,\big|\,\cD\right)&=\bbE\left[\mathbf{1}_{\{\hat g(X,Y)\geq 0\}}\,\big|\,\cD\right]\\
&\geq \frac1n\sum^n_{i=1}\mathbf{1}_{\{\hat g(x_i,y_i)\geq 0\}}-\sup_{g\in\cG}\left|\frac1n\sum^n_{i=1}\mathbf{1}_{\{g(x_i,y_i)\geq 0\}}-\bbE\left[\mathbf{1}_{\{g(X,Y)\geq 0\}}\right]\right|\\
&= 1-\sup_{g\in\cG}\left|\frac1n\sum^n_{i=1}\mathbf{1}_{\{g(x_i,y_i)\geq 0\}}-\bbE\left[\mathbf{1}_{\{g(X,Y)\geq 0\}}\right]\right|.
\end{align*}
Since $\cG$ is VC class, we know $\{\cG\geq 0\}=\{\{g\geq 0\}: g\in\cG\}$ is also VC class with VC dimension $\leq \VC(\cG)$ (see Lemma 2.6.18 in \citep{van1996weak}). By symmetrization, Haussler's bound and Proposition \ref{prop:rc_cover} with polynomial growth, we have
\begin{align*}
\bbE\left[\sup_{g\in\cG}\left|\frac1n\sum^n_{i=1}\mathbf{1}_{\{g(x_i,y_i)\geq 0\}}-\bbE\left[\mathbf{1}_{\{g(X,Y)\geq 0\}}\right]\right|\right]\leq   c\sqrt{\frac{\VC(\cG)}{n}}
\end{align*}
for some constant $c$. Then Theorem 2.3 in \cite{bousquet2002bennett} yields with probability at least $1-e^{-t}$:
\begin{align*}
\sup_{g\in\cG}\left|\frac1n\sum^n_{i=1}\mathbf{1}_{\{g(x_i,y_i)\geq 0\}}-\bbE\left[\mathbf{1}_{\{g(X,Y)\geq 0\}}\right]\right|\leq c\sqrt{\frac{\VC(\cG)}{n}}+c\sqrt{\frac{t}{n}}.
\end{align*}
% By Sauer’s Lemma, we have
% \begin{align*}
% \cS(\cG,n)\leq \left(\frac{ne}{\VC(\cG)}\right)^{\VC(\cG)}.
% \end{align*}
Therefore we show that with probability at least $1-e^{-t}$,
\begin{align*}
\bbP\left((Y - \hat m(X))^2 \le \hat f_\pre(X)\,\big|\,\cD\right) \geq 1-c\sqrt{\frac{\VC(\cG)}{n}}-c\sqrt{\frac{t}{n}}
\end{align*}
for some constants $c$.

\noindent\textbf{Case 2: $\cG$ is non-VC class.}
To prove the coverage guarantee for non-VC class $\cG$, we define the following function
\begin{align*}
h_{\delta}(t):=
    \begin{cases}
        1 & \text{if } t \leq -\delta \\
        -\frac{t}{\delta} & \text{if } -\delta <t<0\\
        0 &\text{if } t\geq 0.
    \end{cases}
\end{align*}
Note that 
\begin{align*}
\mathbf{1}_{\{t< -\delta\}}\leq h_{\delta}(t)\leq \mathbf{1}_{\{t< 0\}}, \ \forall t\in\bbR.
\end{align*}
As before, we denote $\hat g(x,y)=\hat f_\pre(x)-(y-\hat m(x))^2$. It then holds that
\begin{align*}
\bbP\left((Y - \hat m(X))^2 > \hat f_\pre(X)+\delta\,\big|\,\cD\right) &=\bbE\left[\mathbf{1}_{\{\hat g(X,Y)< -\delta\}}\,\big|\,\cD\right]\\
&\leq \bbE\left[h_{\delta}(\hat g(X,Y))\,\big|\,\cD\right]\\
&\leq\frac1n\sum^n_{i=1}h_{\delta}(\hat g(x_i,y_i))+\sup_{g\in\cG}\left( \bbE\left[h_{\delta}(g(X,Y))\right]-\frac1n\sum^n_{i=1}h_{\delta}\left( g(x_i,y_i)\right)\right).
\end{align*}
Since $(\hat m,\hat f_{\pre})$ is feasible for \eqref{sample_opt:general_form}, we have
\begin{align*}
\frac1n\sum^n_{i=1}h_{\delta}(\hat g(x_i,y_i))\leq \frac1n\sum^n_{i=1}\mathbf{1}_{\{\hat g(x_i,y_i)< 0\}}=0.
\end{align*}
By the definition of $h_{\delta}$, we know $\|h_{\delta}\|_{\infty}\leq 1$ and $h_{\delta}$ is $\delta^{-1}$-Lipschitz. Thus, we have with probability at least $1-e^{-t}$,
\begin{align*}
&\sup_{g\in\cG}\left( \bbE\left[h_{\delta}(g(X,Y))\right]-\frac1n\sum^n_{i=1}h_{\delta}\left( g(x_i,y_i)\right)\right)\\
 &\quad\leq 2\bbE_{\cD}\bbE_{\mathbf{\eps}}\left[\sup_{g\in\cG}\frac1n\sum^n_{i=1}\eps_i h_{\delta}\circ g(x_i,y_i)\right|]+\sqrt{\frac{t}{2n}} \\
     &\quad\leq \frac{2}{\delta}\cdot\bbE_{\cD}\bbE_{\mathbf{\eps}}\left[\sup_{g\in\cG}\frac1n\sum^n_{i=1}\eps_i g(x_i,y_i)\right]+\sqrt{\frac{t}{2n}}\\
     &\quad = \frac{2}{\delta}\cdot\cR_n(\cG)+\sqrt{\frac{t}{2n}},
\end{align*}
where the first inequality follows from the McDiarmid's inequality and the symmetrization argument and the second inequality follows from Talagrand’s contraction
lemma \citep{ledoux1991probability}. Consequently, we have with probability at least $1-e^{-t}$,
\begin{align*}
 \bbP\left((Y - \hat m(X))^2 > \hat f_\pre(X)+\delta\,\big|\,\cD\right)\leq   \frac{2}{\delta}\cdot\cR_n(\cG)+\sqrt{\frac{t}{2n}}.
\end{align*}
This completes the proof. 
\end{proof}

\subsection{Proofs of Theorem \ref{thm:two_step}}
\begin{proof}
Suppose that $Y\in\cB(0,R)$ and $\cM$ is $B_{\cM}$-bounded. Note that 
\begin{align*}
    \left|\left(y_i-\hat m(x_i)\right)^2-\left(y_i-m_0(x_i)\right)^2\right|&=\left|\left(2y_i-\hat m(x_i)-m_0(x_i)\right)\left(\hat m(x_i)-m_0(x_i)\right)\right|\\
    &\leq (B_{\cM}+3R)\cdot\|\hat m-m_0\|_{\infty}.
\end{align*}
Thus, we have
\begin{align*}
\frac1n\sum^n_{i=1}\hat f_\pre(x_i)&=\inf_{\substack{f\in\cF\\ f(x_i)\geq (y_i-\hat m(x_i))^2}}\frac1n\sum^n_{i=1}f(x_i)\\
&\leq \inf_{\substack{f\in\cF\\ f(x_i)\geq (y_i-m_0(x_i))^2+(B_{\cM}+3R)\cdot\|\hat m-m_0\|_{\infty}}}\frac1n\sum^n_{i=1}f(x_i)
\end{align*}
For $f^{\star}=\argmin_{f\in\cF:f\geq f_0}\bbE[f(X)]$, we have 
% \begin{align*}
% f^{\star}=\argmin_{f\in\cF:f\geq f_0}\bbE[f(X)],
% \end{align*}
%which implies 
$f^{\star}(x_i)\geq f_0(x_i)\geq (y_i-m_0(x_i))^2$ for any $i=1\ldots n$. If $(B_{\cM}+3R)\cdot\|\hat m-m_0\|_{\infty}\leq D$, then by Assumption \ref{assm}, we have $f^{\star}+(B_{\cM}+3R)\cdot\|\hat m-m_0\|_{\infty}\in\cF$. Furthermore, we have
$$
f^{\star}(x_i)+(B_{\cM}+3R)\cdot\|\hat m-m_0\|_{\infty}\geq (y_i-m_0(x_i))^2+(B_{\cM}+3R)\cdot\|\hat m-m_0\|_{\infty}.
$$
Consequently, we have
\begin{align*}
\frac1n\sum^n_{i=1}\hat f_\pre(x_i)
&\leq \inf_{\substack{f\in\cF\\ f(x_i)\geq (y_i-m_0(x_i))^2+(B_{\cM}+3R)\cdot\|\hat m-m_0\|_{\infty}}}\frac1n\sum^n_{i=1}f(x_i)\\
&\leq \frac1n\sum^n_{i=1}f^{\star}(x_i)+(B_{\cM}+3R)\cdot\|\hat m-m_0\|_{\infty}.
\end{align*}
Thus, it holds with probability at least $1-2e^{-t}$ that
\begin{align*}
\bbE[\hat f_\pre(X) \mid \cD] -\bbE[f^{\star}(X)]&\leq \frac1n\sum^n_{i=1}\hat f_\pre(x_i)-\frac1n\sum^n_{i=1}f^{\star}(x_i)+2\left|\sup_{f\in\cF}\left(\frac1n\sum_if(x_i)-\bbE[f(X)]\right)\right|\\
&\leq (B_{\cM}+3R)\cdot\|\hat m-m_0\|_{\infty}+4\cR_n(\cF)+4B_{\cF}\sqrt{\frac{t}{2n}}.
\end{align*}
By the definition of $f^{\star}$, we have
\begin{align*}
\bbE[\hat f_\pre(X) \mid \cD] -\bbE[f_0(X)]&=\Delta(\cF)+\bbE[\hat f_\pre(X) \mid \cD] -\bbE[f^{\star}(X)]\\
&\leq\Delta(\cF) +(B_{\cM}+3R)\cdot\|\hat m-m_0\|_{\infty}+4\cR_n(\cF)+4B_{\cF}\sqrt{\frac{t}{2n}}.
\end{align*}
The coverage guarantee can be seen as a special case of Theorem \ref{thm:one_step} with $\cM=\{\hat m\}$. Thus we do not go into details here.

\end{proof}

\subsection{Proof of Theorem \ref{thm:adjust}}
Recall that
\begin{align*}
p_{n_1} := \bbP\left((Y - \hat m(X))^2 \le \hat f_\pre(X) + \delta \,\big|\,\cD_1\right)
\end{align*}
and
\begin{align*}
    \cI := \left\{i \in \{1, \dots, n_2\}: (y_i - \hat m(x_i))^2 \le \hat f_\pre(x_i) + \delta\right\}.
\end{align*}
% We denote 
% \begin{align*}
% p_n := \bbP\left(Y\in\widehat{\mathrm{PI}}_{100\%}(X,\delta_n)\,\big|\,\cD_1\right).
% \end{align*}
% The convergence rate of $p_n$ to $1$ is established in Theorem \ref{thm:cov_one_step}. Recall that in Algorithm \ref{algo:main_algo} and \ref{algo:main_algo_twostep}, the shrink parameter $\hat\delta$ is chosen based on a subset of $\cD_2$ whose index belongs to the following set
% \begin{align*}
%     \cI = \left\{j \in \{n + 1, \dots, n+m\}: (y_j - \hat m(x_j))^2 \le \hat f(x_j) + \delta_n\right\}.
% \end{align*}
The following lemma establishes a lower bound on $|\cI|$: 
\begin{lemma}
\label{card_I}
For any $\beta>0$, with probability at least $1-e^{-\frac{n_2 p_{n_1}}{2n^{2\beta}_1}}$, we have $|\, \cI|\geq (1-{n_1}^{-\beta})p_{n_1} {n_2}$.
\end{lemma}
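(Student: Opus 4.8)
The plan is to condition on $\cD_1$ and recognize $|\cI|$ as a Binomial random variable, then apply a multiplicative Chernoff bound for its lower tail. Since $\hat m$ and $\hat f$ are measurable functions of $\cD_1$ alone, and the $n_2$ samples comprising $\cD_2$ are i.i.d.\ and independent of $\cD_1$, the indicator variables $Z_i := \mathbf{1}_{\{(y_i - \hat m(x_i))^2 \le \hat f(x_i) + \delta\}}$, $1 \le i \le n_2$, are, conditionally on $\cD_1$, i.i.d.\ Bernoulli random variables with success probability $p_{n_1} = \bbP((Y - \hat m(X))^2 \le \hat f(X) + \delta \mid \cD_1)$, which is itself $\cD_1$-measurable. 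Hence, conditionally on $\cD_1$, $|\cI| = \sum_{i=1}^{n_2} Z_i \sim \mathrm{Binomial}(n_2, p_{n_1})$ with conditional mean $p_{n_1} n_2$.

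Next I would invoke the standard multiplicative Chernoff bound: for a sum $S$ of independent $[0,1]$-valued random variables with mean $\mu$ and any $\gamma \in (0,1)$,
\[
\bbP\bigl(S \le (1-\gamma)\mu\bigr) \le \exp\!\left(-\frac{\gamma^2 \mu}{2}\right).
\]
Applying this conditionally on $\cD_1$ with $S = |\cI|$, $\mu = p_{n_1} n_2$, and $\gamma = n_1^{-\beta} \in (0,1)$, we obtain
\[
\bbP\bigl(|\cI| \le (1 - n_1^{-\beta}) p_{n_1} n_2 \,\big|\, \cD_1\bigr) \le \exp\!\left(-\frac{n_1^{-2\beta} p_{n_1} n_2}{2}\right) = \exp\!\left(-\frac{p_{n_1} n_2}{2 n_1^{2\beta}}\right).
\]
Taking complements yields the stated bound; as elsewhere in Section~\ref{sec:coverage_guarantee}, the probability is understood conditionally on $\cD_1$, which is why $p_{n_1}$ legitimately appears in the exponent.

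There is no genuine obstacle in this argument — the only points requiring care are (i) treating $p_{n_1}$ as a constant once $\cD_1$ is fixed when applying the Chernoff inequality, and (ii) checking that $n_1^{-\beta} \in (0,1)$ so that the multiplicative form of the bound is valid (this holds for $n_1 \ge 2$ and any $\beta > 0$). If a fully unconditional statement were preferred, one could combine this with the high-probability lower bound on $p_{n_1}$ implied by Theorem~\ref{thm:one_step}, but the conditional form stated in the lemma is both cleaner and exactly what is needed in the subsequent analysis of $\hat\lambda_\alpha$.
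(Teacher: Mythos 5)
Your proof is correct and follows essentially the same route as the paper: condition on $\cD_1$, observe that $|\cI|$ is $\mathrm{Bin}(n_2, p_{n_1})$, and apply the multiplicative Chernoff lower-tail bound with deviation parameter $n_1^{-\beta}$. The only difference is that you spell out the measurability of $p_{n_1}$ and the requirement $n_1^{-\beta}\in(0,1)$, which the paper leaves implicit.
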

\begin{proof}
We define a sequence of ${n_2}$ random variables $\{Z_1, \dots, Z_{n_2}\}$ as follows
$$
Z_i = \mathds{1}((X_i, Y_i) \in \cI)  = \mathds{1}((Y_i - \hat m(X_i))^2 \le \hat f_\pre(X_i) + \delta).
$$
Conditional on $\cD_1$, $Z_i \overset{i.i.d.}{\sim} \mathrm{Ber}(p_{n_1})$ and consequently $|\, \cI|=\sum^{n_2}_{i=1} Z_i \mid \cD_1 \sim \mathrm{Bin}({n_2}, p_{n_1})$. By the Chernoff bound for binomial distribution, we have
$$
|\, \cI|\geq (1-{n_1}^{-\beta})p_{n_1} {n_2},  \ \ \text{w.p. greater than } 1-e^{-\frac{p_{n_1} {n_2}}{2n^{2\beta}_1}} \,.
$$
\end{proof}

With Lemma \ref{card_I} in hand, we are ready to prove Theorem \ref{thm:adjust}.
\begin{proof}[Proof of Theorem \ref{thm:adjust}]
 For notation simplicity, we define the function $g_\lambda(x, y)$ as: 
\begin{align*}
g_{\lambda}(x,y):=\mathbf{1}_{\{(y-\hat m(x))^2>\lambda(\hat f_\pre(x)+\delta) \}}.
\end{align*}  
Then we have 
\begin{align*}
& \bbP\left((Y - \hat m(X))^2 > \hat \lambda(\alpha)(\hat f_\pre(X) + \delta) \,\big|\,\cD_1, \cD_2\right)\\
&\quad=\bbE[g_{\hat\lambda(\alpha)}(X,Y)\mid \cD_1, \cD_2]\\
&\quad=\frac{1}{|\cI|}\sum_{i\in\cI}g_{\hat\lambda(\alpha)}(x_i,y_i)+\bbE[g_{\hat\lambda(\alpha)}(X,Y)\mid \cD_1, \cD_2]-\frac{1}{|\cI|}\sum_{i\in\cI}g_{\hat\lambda(\alpha)}(x_i,y_i).
\end{align*}
Note that by the choice of ${\hat\lambda(\alpha)}$, we have
\begin{align*}
    \alpha-\frac{1}{|\cI|}\leq\frac{1}{|\cI|}\sum_{i\in\cI}g_{\hat\lambda(\alpha)}(x_i,y_i)\leq \alpha.
\end{align*}
% \jiawei{LHS may need some other arguments!!! We may need to argue each $(y'_i-\hat m(x'_i))^2/(\hat f(x'_i)+\delta)$ has some gaps.} 
Consequently, we have
\begin{align*}
    &\left|\bbP\left((Y - \hat m(X))^2 > \hat \lambda(\alpha)(\hat f_\pre(X) + \delta) \,\big|\,\cD_1, \cD_2\right)-\alpha\right|\\
   &\quad \leq \frac{1}{|\cI|}+\left|\frac{1}{|\cI|}\sum_{i\in\cI}g_{\hat\lambda(\alpha)}(x_i,y_i)-\bbE[g_{\hat\lambda(\alpha)}(X,Y)\mid \cD_1, \cD_2]\right|.
\end{align*}
By Lemma \ref{card_I}, we know the first term is roughly $1/{n_2}$ when ${n_1}$ is large enough. In the sequel, we upper bound the second term.
% We denote
% $$
% \cA_n \mid\cD_1:= \left\{(x, y): (y - \hat m(x))^2 \le \hat f(x) + \delta_n\right\}.
% $$
% Notice that $i\in\cI$ if and only if $(x_i,y_i)\in\cA_n\mid\cD_1$, which implies
% $$
% \frac1m\sum_{i\in\cI}g_{\hat\delta}(x_i,y_i)=\frac1m\sum^m_{i=1}g_{\hat\delta}(x_i,y_i)\cdot\mathbf{1}_{\{i\in\cI\}}=\frac1m\sum^m_{i=1}g_{\hat\delta}(x_i,y_i)\cdot\mathbf{1}_{\{(x_i,y_i)\in\cA_n\mid\cD_1\}}.
% $$
Note that
\begin{align*}
&\left|\frac{1}{|\cI|}\sum_{i\in\cI}g_{\hat\lambda(\alpha)}(x_i,y_i)-\bbE[g_{\hat\lambda(\alpha)}(X,Y)\mid \cD_1, \cD_2]\right|\\
&\leq\underbrace{\left|\frac{{n_2}}{|\cI|}\cdot\frac{1}{n_2}\sum_{i\in\cI}g_{\hat\lambda(\alpha)}(x_i,y_i)-\frac{1}{n_2}\sum_{i\in\cI}g_{\hat\lambda(\alpha)}(x_i,y_i)\right|}_{\rm (i)}+\underbrace{\left|\frac{1}{n_2}\sum_{i\in\cI}g_{\hat\lambda(\alpha)}(x_i,y_i)-\frac{1}{n_2}\sum^{n_2}_{i=1}g_{\hat\lambda(\alpha)}(x_i,y_i)\right|}_{\rm (ii)}\\
&\quad+\underbrace{\left|\frac{1}{n_2}\sum^{n_2}_{i=1}g_{\hat\lambda(\alpha)}(x_i,y_i)-\bbE[g_{\hat\lambda(\alpha)}(X,Y)\mid \cD_1, \cD_2]\right|}_{\rm (iii)}.
\end{align*}
\noindent\textbf{Upper bound of (i):}
For (i), we have
\begin{align*}
{\rm (i)} =\left|\frac{{n_2}}{|\cI|}-1\right|\left|\frac{1}{n_2}\sum_{i\in\cI}g_{\hat\lambda(\alpha)}(x_i,y_i)\right|\leq \left|\frac{{n_2}}{|\cI|}-1\right|\cdot\frac{|\cI|}{{n_2}}=1-\frac{|\cI|}{{n_2}},
\end{align*}
where the inequality follows from the fact that $\|g_{\hat\lambda(\alpha)}\|_{\infty}\leq 1$.

\noindent\textbf{Upper bound of (ii):}
For (ii), we have
\begin{align*}
{\rm (ii)}=\frac{1}{n_2}\sum_{i\notin\cI}g_{\hat\lambda(\alpha)}(x_i,y_i)\leq \frac{1}{n_2}\cdot({n_2}-|\cI|)=1-\frac{|\cI|}{{n_2}}.
\end{align*}

\noindent\textbf{Upper bound of (iii):}
For (iii), we have w.p. $\geq 1-e^{-t}$
\begin{align*}
{\rm (iii)}&\leq \sup_{\lambda\geq 0}\left|\frac{1}{n_2}\sum^{n_2}_{i=1}g_{\lambda}(x_i,y_i)-\bbE[g_{\lambda}(X,Y)\mid \cD_1, \cD_2]\right|\\
&= \sup_{\lambda\geq 0}\left|\frac{1}{n_2}\sum^{n_2}_{i=1}g_{\lambda}(x_i,y_i)-\bbE[g_{\lambda}(X,Y)\mid \cD_1]\right|\\
&\leq 2\bbE_{\eps,\cD_2}\left[\sup_{\lambda\geq 0}\frac{1}{n_2}\sum^{n_2}_{i=1}\eps_i \cdot g_{\lambda}(x_i,y_i)\,\bigg|\,\cD_1\right]+\sqrt{\frac{t}{2n_2}},
\end{align*}
where the last inequality follows from a simple application of McDiarmid’s inequality and Gin{\'e}-Zinn symmetrization. Applying \cite[Lemma 2.14.1]{vaart1997weak} we obtain: 
\begin{align*}
    \bbE_{\eps,\cD_2}\left[\sup_{\lambda\geq 0}\frac{1}{n_2}\sum^{n_2}_{i=1}\eps_i \cdot g_{\lambda}(x_i,y_i)\,\bigg|\,\cD_1\right]\lesssim \frac{1}{\sqrt{{n_2}}}\sup_Q \int_0^1 \sqrt{\log{N(\eps, L_2(Q), \cG_{\lambda})}} \ d\eps,
\end{align*}
where $\cG_{\lambda}:=\{g_{\lambda}\mid \lambda\geq 0\}$. Note that we define the function class $\cG_\lambda$ conditioning on $\cD_1$, i.e. fixing $(\hat m, \hat f_\pre)$. By \cite[Theorem 2.6.7]{vaart1997weak}, we have for any $0<\eps<1$ that
\begin{align*}
\sup_{Q}N(\eps, L_2(Q), \cG_{\lambda})\leq c V(\cG_{\lambda})(16e)^{V(\cG_{\lambda})}\left(\frac{1}{\eps}\right)^{V(\cG_{\lambda})-1},
\end{align*}
where $c$ is an universal constant and $V(\cG_{\lambda})$ is the VC dimension of the function class $\cG_{\lambda}$. Notice that
$$
\cG_{\lambda}=\{\mathbf{1}_{\{\lambda(\hat f_\pre(x)+\delta)-(y-\hat m(x))^2<0\}}\,|\,\lambda\geq 0\},
$$
which implies $V(\cG_{\lambda})\leq 2$. Consequently, we have
\begin{align*}
\bbE_{\eps,\cD_2}\left[\sup_{\lambda\geq 0}\frac{1}{n_2}\sum^{n_2}_{i=1}\eps_i \cdot g_{\lambda}(x_i,y_i)\,\bigg|\,\cD_1\right]&\lesssim  \frac{1}{\sqrt{{n_2}}} \sup_Q \int_0^1 \sqrt{\log{N(\eps, L_2(Q), \cG_{\lambda})}} \ d\eps\\
   &\leq \frac{1}{\sqrt{{n_2}}} \int_0^1 \sqrt{\log\left\{c V(\cG_{\lambda})(16e)^{V(\cG_{\lambda})}\left(\frac{1}{\eps}\right)^{V(\cG_{\lambda})-1}\right\}} \ d\eps\\
   &\leq \frac{1}{\sqrt{{n_2}}} \int_0^1 \sqrt{ \log\left\{2c(16e)^{2}\left(\frac{1}{\eps}\right)\right\}} \ d\eps \lesssim \frac{1}{\sqrt{{n_2}}}\,.
\end{align*}
As a result, we have w.p. $\geq 1-e^{-t}$
\begin{align*}
{\rm (iii)}\leq 2\bbE_{\eps,\cD_2}\left[\sup_{\lambda\geq 0}\frac{1}{n_2}\sum^{n_2}_{i=1}\eps_i \cdot g_{\lambda}(x_i,y_i)\,\bigg|\,\cD_1\right]+\sqrt{\frac{t}{2n_2}}\leq c\cdot\sqrt{\frac{t}{{n_2}}}
\end{align*}
for some constants $c$.

Combining the upper bounds for (i), (ii) and (iii), we have w.p. $\geq 1-e^{-t}$
\begin{align*}
\left|\frac{1}{|\cI|}\sum_{i\in\cI}g_{\hat\lambda(\alpha)}(x_i,y_i)-\bbE[g_{\hat\lambda(\alpha)}(X,Y)\mid \cD_1, \cD_2]\right|\leq 2\left( 1-\frac{|\cI|}{{n_2}}\right)+c\cdot\sqrt{\frac{t}{{n_2}}},
\end{align*}
which implies
\begin{align*}
&\left|\bbP\left((Y - \hat m(X))^2 > \hat \lambda(\alpha)(\hat f_\pre(X) + \delta) \,\big|\,\cD_1, \cD_2\right)-\alpha\right|\\
   &\quad \leq \frac{1}{|\cI|}+\left|\frac{1}{|\cI|}\sum_{i\in\cI}g_{\hat\lambda(\alpha)}(x_i,y_i)-\bbE[g_{\hat\lambda(\alpha)}(X,Y)\mid \cD_1, \cD_2]\right|\\
   &\quad\leq  \frac{1}{|\cI|}+2\left( 1-\frac{|\cI|}{{n_2}}\right)+c\cdot\sqrt{\frac{t}{{n_2}}}.
\end{align*}
By Lemma \ref{card_I}, with probability at least $1-e^{-\frac{p_{n_1} {n_2}}{2n^{2\beta}_1}}$, we have $|\, \cI|\geq (1-{n_1}^{-\beta})p_{n_1} {n_2}$ for any $\beta>0$. Consequently, for any $t, \beta>0$, with probability at least $1-e^{-\frac{p_{n_1} {n_2}}{2n^{2\beta}_1}}-e^{-t}$, we have
\begin{align*}
&\left|\bbP\left((Y - \hat m(X))^2 > \hat \lambda(\alpha)(\hat f_\pre(X) + \delta) \,\big|\,\cD_1, \cD_2\right)-\alpha\right|\\
&\quad\leq \frac{1}{(1-{n_1}^{-\beta})p_{n_1} {n_2}}+2(1-p_{n_1}+{n_1}^{-\beta}p_{n_1})+c\cdot\sqrt{\frac{t}{{n_2}}}.
\end{align*}
We further choose $(\beta,t)$ such that $p_{n_1}{n_2}/2n^{2\beta}_1=t$ and $t=10\cdot\log {n_2}$. Then with probability at least $1-2{n_2}^{-10}$, we have
\begin{align*}
&\left|\bbP\left((Y - \hat m(X))^2 > \hat \lambda(\alpha)(\hat f_\pre(X) + \delta) \,\big|\,\cD_1, \cD_2\right)-\alpha\right|\\
&\quad\leq \frac{1}{\left(1-\sqrt{\frac{20\log {n_2}}{p_{n_1}{n_2}}}\right)p_{n_1} {n_2}}+2(1-p_{n_1})+c_1\cdot\sqrt{\frac{\log {n_2}}{{n_2}}}.
\end{align*}
When ${n_1},{n_2}$ are large enough such that
\begin{align*}
p_{n_1}\geq \frac12,\quad\sqrt{\frac{40\log {n_2}}{{n_2}}}\leq \frac12,
\end{align*}
we have
\begin{align*}
\left|\bbP\left((Y - \hat m(X))^2 > \hat \lambda(\alpha)(\hat f_\pre(X) + \delta) \,\big|\,\cD_1, \cD_2\right)-\alpha\right|&\leq \frac{4}{{n_2}}+2(1-p_{n_1})+c_1\cdot\sqrt{\frac{\log {n_2}}{{n_2}}}\\
&\leq 2(1-p_{n_1})+c_2\cdot\sqrt{\frac{\log {n_2}}{{n_2}}}.
\end{align*}
Thus, we finish the proof.
\end{proof}

\subsection{Proof of Corollary \ref{rc:aug}}
As $\cF$ is the linear span of $K$ many functions, we have $\VC(\cF) \le K+2$. Then by Haussler's bound and Proposition \ref{prop:rc_cover} with polynomial growth, we have
\begin{align*}
\cR_n(\cF)  = \bbE\left[\sup_{f \in \cF} \frac1n \sum_i \eps_i f(X_i)\right] \lesssim \sqrt{\frac{\VC(\cF) }{n}} \lesssim \sqrt{\frac{K}{n}} \,.
\end{align*}
As 
$$
\cG=\left\{\sum^K_{j=1}\alpha_j f_j(x)-\left(y-\sum^L_{j=1}\beta_j m_j(x)\right)^2\,\bigg|\, \alpha_j\geq 0 ,\beta_j \in\mathbb{R}\right\}
$$ 
is the linear span of $\cO(K+L^2)$ many functions, we have $\VC(\cG) \lesssim K+L^2$. Thus we have 
\begin{align*}
\cR_n(\cG) & = \bbE\left[\sup_{g \in \cG} \frac1n \sum_i \eps_i g(X_i)\right] \lesssim\sqrt{\frac{\VC(\cG)}{n}}\lesssim \sqrt{\frac{K+L^2}{n}} \,.    
\end{align*}
Then Coroallry \ref{rc:aug} directly follows from Theorem \ref{thm:one_step}.

\subsection{Proof of Corollary \ref{cor:smooth_func}}
By Lemma 6.6 of \cite[2022 version]{sen2018gentle} we have: 
$$
\log{N(\eps, C_L^k([0, 1]^d), \| \cdot \|_\infty)} \le c\cdot \eps^{-\frac{d}{k}} \,,
$$
where the constant $c$ depends on $(L, d, k)$. If $k > d/2$ then the function class is Donsker (i.e. it satisfies \eqref{eq:exp_growth} with $0 < \alpha < 2$), otherwise it is non-Donsker. By Proposition \ref{prop:rc_cover}, the Rademacher complexity of $\cF$ can be bounded as:
$$
\cR_n(\cF) \le
\begin{cases}   
Cn^{-\frac{1}{2}}\,, & \text{ if } \alpha > d/2, \\
Cn^{-\frac{\alpha}{d}}\,, & \text{ if } \alpha < d/2. 
\end{cases}
$$
Therefore, for the one-step procedure, Theorem \ref{thm:one_step} yields the following bound: 
$$
\bbE[\hat f_\pre (X)\mid \cD] - \bbE[f_0(X)] \lesssim
\begin{cases}
    \sqrt{\frac{t}{n}} \,, & \text{ if } \alpha > d/2, \\
    \frac{1}{n^{\alpha/d}} + \sqrt{\frac{t}{n}} \,, & \text{ if } \alpha < d/2 \,.
\end{cases}
$$
To ensure the coverage guarantee, we need to find the complexity of $\cG$ as defined in \eqref{eq:G_def}. It is immediate that $\cG \subseteq C_F^{\alpha}([0, 1]^d) \oplus C_{G}^{\beta}([0, 1]^{d+1})$ for some constant $G$. Therefore, the covering number of $\cG$ is bounded by the product of the covering numbers of $C_F^{\alpha}([0, 1]^d)$ and $C_G^{\beta}([0, 1]^{d+1})$ and consequently $\cG$ is a Donsker class if $0 < \gamma := \max\{d/\alpha ,(d+1)/\beta\}\leq 2$ and non-Donsker otherwise. This observation, along with Proposition \ref{prop:rc_cover} and Theorem \ref{thm:one_step} yields with probability at least $1 - e^{-t}$:   
\begin{align}
 \bbP\left((Y - \hat m(X))^2 \le \hat f_\pre(X) + \delta \,\big|\,\cD\right) & \geq 1-  \frac{C}{\delta\sqrt{n}}-\sqrt{\frac{2t}{n}} \hspace{0.5in} \text{if }0 < \gamma < 2 \,,\\
 \bbP\left((Y - \hat m(X))^2 \le \hat f_\pre(X) + \delta \,\big|\,\cD\right)  & \geq 1-  \frac{C}{\delta n^{\frac{1}{\gamma}}}-\sqrt{\frac{2t}{n}} \hspace{0.5in} \text{if }\gamma > 2 \,.
\end{align}

\subsection{Proof of Corollary \ref{cor:DNN}}
Define $\trunc_{B_\cF}(\cF_\DNN)$ is the truncated version of the functions of $\cF_\DNN$ at level $B_{\cF}$. Here we assume fully connected neural networks unless mentioned otherwise. A bound on the VC dimension of $\cF_{\mathrm{DNN}}$ can be found in \cite[Lemma]{bartlett2019nearly}, i.e. 
$$
\VC (\cF_{\mathrm{DNN}}) \le C WL\log{W} \,,
$$
where $W$ is the total number of parameters. Furthermore, as truncation does not change the VC dimension, the same bound also holds for $\trunc_{B_\cF}(\cF_\DNN)$. For a fully connected neural network with width $N$ and depth $L$, the total number of parameters is $N(d+1) + (L-2)(N^2+N) + 1$ (including bias) which is of order $N^2L$ for fixed $d$. Therefore, the VC dimension of $\cF_{\mathrm{DNN}}$ is of the order $N^2L^2\log{NL}$. Furthermore, we may also truncate $f$ at level $B_\cF$ to refrain it from being unbounded. Hence, from Theorem \ref{thm:two_step} and using the fact the that Rademacher complexity of a VC class is bounded by $\sqrt{V/n}$, we obtain:
\begin{equation}
\label{eq:bound_width_dnn}
\bbE[\hat f_\pre(X) \mid \cD] - \bbE[f_0(X)] \leq \Delta(\trunc_{B_\cF}(\cF_\DNN)) + C \frac{NL\sqrt{\log{NL}}}{\sqrt{n}} + C\|\hat m-m_0\|_{\infty}+4B_\cF\sqrt{\frac{t}{2n}}
\end{equation}
Now focus on $\Delta(\trunc_{B_\cF}(\cF_\DNN))$, we have: 
\begin{align*}
    \Delta(\trunc_{B_\cF}(\cF_\DNN)) & = \min_{\substack{f \in \trunc_{B_\cF}(\cF_\DNN) \\ f \ge f_0 \text{ a.s.}}} \bbE[f(X)] - \bbE[f_0(X)] \\
    & \le \min_{\substack{f \in \trunc_{B_\cF}(\cF_\DNN) \\ f \ge f_0}} \|f - f_0\|_\infty \\
    & \le \min_{\substack{f \in \cF_\DNN \\ f \ge f_0}} \|f - f_0\|_\infty \\
    & \le 2\min_{f \in \cF_\DNN} \|f - f_0\|_\infty \,.
\end{align*}
To understand how the last inequality follows, define $f^* = \argmin_{f \in \cF_\DNN}\|f - f_0\|_\infty$ and define $\delta = \| f^* - f_0\|_\infty$. Even if it may be possible $f^* < f_0$, we have $f^* + \delta \ge f_0$ almost surely. As $f^\star + \delta \in \cF_\DNN$ (adding a constant does not change the architecture), we have: 
$$
\min_{\substack{f \in \cF_\DNN \\ f \ge f_0}} \|f - f_0\|_\infty \le \|f^* + \delta - f_0\|_\infty \le 2 \delta=2 \min_{f \in \cF_\DNN}\|f - f_0\|_\infty\,.
$$
Now, if $f_0$ is in Hölder/Sobolev class of functions, i.e. $\alpha$ times differentiable with bounded derivatives, then \cite[Theorem 1]{lu2021deep} obtains the following approximation error: 
$$
\min_{f \in \cF_\DNN}\|f - f_0\|_\infty \le C_1 \left(\frac{NL}{\log{N}\log{L}}\right)^{-\frac{2\alpha}{d}}
$$
Therefore, from \eqref{eq:bound_width_dnn} we have: 
$$
\bbE[\hat f_\pre (X)\mid \cD] - \bbE[f_0(X)] \leq 2C_1 \left(\frac{NL}{\log{N}\log{L}}\right)^{-\frac{2\alpha}{d}} + C \frac{NL\sqrt{\log{NL}}}{\sqrt{n}}+ C\|\hat m-m_0\|_{\infty}+ 4B_\cF\sqrt{\frac{t}{2n}}.
$$
Since the VC dimension of $\cF_{\mathrm{DNN}}$ is of the order $N^2L^2\log{NL}$, by Theorem \ref{thm:two_step}, we have
\begin{align*}
 \bbP\left((Y - \hat m(X))^2 \le \hat f_\pre(X)\,\big|\,\cD\right)   \geq 1-c\sqrt{\frac{N^2L^2\log{NL}}{n}}-c\sqrt{\frac{t}{n}}.
\end{align*}

\subsection{Proof of Corollary \ref{cor:rkhs}}
To establish Corollary \ref{cor:rkhs} we need to find $\cR_n(\cF)$ and $\cR_n(\cG)$ in light of Theorem \ref{thm:one_step}. Recall that here $\Phi(\cdot)$ is the feature map, which maps $x \mapsto K(x, \cdot)$. As we have assumed $\sup_x K(x,x) \le b$, we have for any $h$ in the RKHS generated by $K$: 
\begin{equation}
\label{eq:bound_h}
|h(x)| = |\langle h, K(x, \cdot)\rangle_\cH| \le \|h\|_H\|K(x, \cdot)\|_\cH \le \sqrt{b}\|h\|_{\cH} \,. 
\end{equation}
In other words, $\|h\|_\infty \le \sqrt{b}\|h\|_\cH$. Furthermore, for any $f \in \cF$: 
\begin{align}
    \label{eq:bound_h_f}
    |f(x)| = \left|\left\langle \Phi(x), \cA (\Phi)(x)\right \rangle_\cH\right| & = \left|\left\langle \cA, \Phi(x) \otimes \Phi(x)^\top\right\rangle\right| \notag\\
    & \le \|\cA\|_\star \|\Phi(x) \otimes \Phi(x)^\top\|_{\mathrm{op}} \notag\\
    & \le r\sup_{h \in \cH: \|h\|_\cH \le 1} \left|\left\langle h, \Phi(x) \otimes \Phi(x)^\top(h) \right \rangle_\cH\right| \notag\\
    & \le r\sup_{h \in \cH: \|h\|_\cH \le 1} |h^2(x)| \le rb \,.
\end{align}
The first inequality follows from duality of operator and nuclear norm, second inequality follows from the definition of operator norm and the last inequality is from \eqref{eq:bound_h}. Therefore, $\|f\|_\infty \le rb$ for any $f \in \cF$. We now use these bounds to control the Rademacher complexity: 
\begin{align*}
    \cR_n(\cF) & = \bbE\left[\sup_{\|\cA\|_\star \le r}\left|\frac1n \sum_i \eps_i \left\langle \Phi(X_i), \cA (\Phi)(X_i)\right \rangle\right|\right] \\
    & \le r \bbE\left[\left\|\frac1n \sum_i \eps_i \Phi(X_i) \otimes \Phi(X_i)^\top \right\|_{\mathrm{op}}\right] \\
    & = r \bbE\left[\sup_{\|h \|_\cH \le 1} \left|\frac1n \sum_i \eps_i \langle \Phi(X_i), h \rangle^2 \right|\right] \\
    & = r \bbE\left[\sup_{\|h \|_\cH \le 1} \left|\frac1n \sum_i \eps_i h^2(X_i) \right|\right] 
    % & \le r \bbE\left[\sup_{\|h \|_\cH \le 1} \frac1n \sum_i \eps_i h^2(X_i) \right]  + r \bbE\left[\sup_{\|h \|_\cH \le 1} \frac1n \sum_i \eps_i \left(-h^2(X_i)\right) \right] 
\end{align*}
Next we apply Leduox-Talagrand contraction inequality (see Theorem 7 of \cite{duchi2009probability}) as $\|h\|_\infty \le \sqrt{b}$ by \eqref{eq:bound_h} and $\phi(x) = x^2$ is Lipschitz function on $[-\sqrt{b}, \sqrt{b}]$ with Lipschitz co-efficient $2\sqrt{b}$. Therefore we have: 
\begin{align*}
    \cR_n(\cF) & \le 4r\sqrt{b} \ \bbE\left[\sup_{\|h \|_\cH \le 1} \left|\frac1n \sum_i \eps_i h(X_i) \right|\right] \\
    & = 4r\sqrt{b} \ \bbE\left[\sup_{\|h \|_\cH \le 1} \left|\langle h, \frac1n \sum_i \eps_i K(X_i, \cdot) \rangle \right|\right] \\
    & \le 4r\sqrt{b} \ \bbE\left[\left\| \frac1n \sum_i \eps_i K(X_i, \cdot)\right\|_\cH\right] \\
    & \le 4r\sqrt{b} \ \sqrt{\bbE\left[\left\| \frac1n \sum_i \eps_i K(X_i, \cdot)\right\|^2_\cH\right]} \\
    & = 4r\sqrt{b} \ \sqrt{\bbE\left[\frac{1}{n^2}\sum_{i, j} \eps_i \eps_j K(X_i, X_j)\right]} \\
    & = \frac{4r\sqrt{b}}{n} \ \sqrt{\bbE\left[\tr{K}\right]} = \frac{4r\sqrt{b}}{\sqrt{n}}\sqrt{\bbE\left[K(X, X)\right]} \,. 
\end{align*}
Combining the above inequality with Theorem \ref{thm:one_step}, we get the bound of the obtained bandwidth. The coverage guarantee directly follows from the observation that $\cR_n(\cG)\lesssim\cR_n(\cM)+\cR_n(\cF)$.

\end{document}